\documentclass[12pt]{article}

\usepackage{amssymb}
\usepackage{amsmath}
\usepackage{amsfonts}
\usepackage{eurosym}
\usepackage{dsfont}
\usepackage{mathrsfs}
\usepackage{bbm}
\usepackage{bm}
\usepackage{fix-cm}
\usepackage{accents}

\usepackage{geometry}
\usepackage{setspace}
\usepackage{color}
\usepackage{enumitem}
\usepackage{soul}

\usepackage{float}
\usepackage{graphicx}
\usepackage{caption}
\usepackage{subcaption}
\usepackage{footmisc}
\usepackage{tablefootnote}
\usepackage{booktabs}

\usepackage{appendix}

\usepackage{tikz}
\usepackage{tikz-3dplot}
\usepackage{tikz-network}
\usetikzlibrary{backgrounds,automata,arrows,positioning,patterns,calc,shapes}
\usetikzlibrary{decorations.pathreplacing}
\usetikzlibrary{positioning}

\usepackage[colorlinks=true, urlcolor=purple, linkcolor=blue,citecolor=purple]{hyperref}
\usepackage{natbib}

\usepackage{datetime}
\newdateformat{monthyeardate}{\monthname[\THEMONTH], \THEYEAR}

\newtheorem{theorem}{Theorem}
\newtheorem{proposition}{Proposition}
\newtheorem{lemma}{Lemma}
\newtheorem{corollary}{Corollary}

\newtheorem{assumption}{Assumption}
\newtheorem{definition}{Definition}
\newtheorem{remark}{Remark}
\newtheorem{example}{Example}

\newenvironment{proof}[1][Proof]{\noindent\textbf{#1.} }{\ \rule{0.5em}{0.5em}}

\allowdisplaybreaks

\newcommand{\bG}{\mathbf{G}}
\newcommand{\bM}{\mathbf{M}}
\newcommand{\bH}{\mathbf{H}}
\newcommand{\mN}{\mathcal{N}}
\newcommand{\mS}{\mathcal{S}}
\newcommand{\se}{\boldsymbol{\mu}^*}
\newcommand{\w}{\boldsymbol{\omega}}
\newcommand{\btheta}{\boldsymbol{\theta}}
\newcommand{\bx}{\mathbf{x}}
\newcommand{\bX}{\mathbf{X}}
\newcommand{\bp}{\mathbf{p}}
\newcommand{\bt}{\mathbf{t}}
\newcommand{\etab}{\boldsymbol{\eta}}
\newcommand{\ubar}[1]{\underaccent{\bar}{#1}}
\newcommand{\btau}{\boldsymbol{\tau}}
\newcommand{\bmu}{\boldsymbol{\mu}}
\newcommand{\parallelsum}{\mathbin{\!/\mkern-5mu/\!}}
\newcommand{\lp}{\mathbbm{p}}
\newcommand{\brho}{\boldsymbol{\rho}}

\begin{document}

\title{ \textsc{When Do Markets Work? Multiplex Networks and Efficiency}\thanks{We thank the participants at the 11th Annual Conference on Network Science in Economics at the University of Miami Herbert School of Business (April 2026), the Workshop on Networks at Tsinghua University (April 2026), and seminar at CUEB. We are especially grateful to Francis Bloch, Christian Ghiglino, Sanjeev Goyal,  Alastair Langtry, Jian Li, and Sudipta Sarangi for their helpful comments and suggestions.}}
\author{
	Chengqing Li\thanks{School of Economics and Management, Tsinghua University. lcq22@mails.tsinghua.edu.cn} \quad 
    Yves Zenou\thanks{Department of Economics, Monash University, Australia.  yves.zenou@monash.edu.}\quad
	Junjie Zhou\thanks{School of Economics and Management, Tsinghua University. zhoujj03001@gmail.com}
    }

\date{May 20, 2026}
\maketitle
	
\begin{abstract}

We study an Arrow--Debreu economy with externalities generated by multiplex networks. Market equilibrium prices reflect both the preferences and scarcity of goods, consumers' network centralities arising from goods' externalities, as well as linkages across goods (layers) through the budget constraint. Despite the presence of externalities, competitive markets can still be efficient: the First and Second Welfare Theorems hold if either all networks are regular or all layers share the same network structure.  When markets allocate goods inefficiently,  a Lindahl equilibrium—implemented through personalized prices—can restore efficiency, but may leave some consumers worse off.

\bigskip
\noindent\textbf{JEL Codes:} D50, D62, D85, C63, Z13.

\smallskip
\noindent\textbf{Keywords:} Multiplex networks, competitive equilibrium, externalities,  welfare theorems, Lindahl equilibrium.

\end{abstract}
	
\clearpage


    



\section{Introduction}


Externalities arise in many economic activities that are embedded in multiple social relationships. For example, an individual’s investment in education may be influenced by that of classmates or peers, while decisions related to housing purchases are more likely to be affected by the choices of family members or close friends. Furthermore, different activities exhibit distinct externality properties. Public goods provided by others generate positive spillovers, yet they simultaneously diminish an agent's incentive to contribute. In contrast, visible consumption, such as clothing, often induces social comparison; observing friends with high-quality attire may make an individual feel out of fashion, thereby stimulating their desire to purchase new clothes. Although externalities have been extensively studied in general equilibrium literature, relatively few works specifically focus on environments with multiple social relationships and heterogeneity of externality types across these relationships. Consequently, individual behavior and the equilibrium outcomes remain underexplored in the presence of multiplex networks.

This paper shows that the standard conclusion\textemdash competitive markets fail in the presence of externalities\textemdash is too pessimistic: in economies where externalities are structured by multiplex networks, markets   achieve  efficiency   if  either all networks are regular or all layers have the same network structure.  


Specifically, we consider a pure exchange economy in which externalities are structured by multiplex networks \citep{zenouGamesMultiplexNetworks2025}. The economy features multiple goods, each corresponding to a layer in the multiplex network, and network structures may differ across goods. Within each layer, consumers are subject to (positive or negative) externalities that need not be identical across goods. Agents are heterogeneous in their endowments and in their network positions across layers. Each agent chooses the consumption of each good subject to a budget constraint, taking prices as given. The market then determines the equilibrium prices and the allocation of goods across consumers.

We first establish the existence of a competitive equilibrium. Provided that the intensity of negative externalities is sufficiently small, Proposition~\ref{prop.exist} demonstrates existence using the social equilibrium technique of \cite{debreuSocialEquilibriumExistence1952}. Under stronger conditions, we show that the competitive equilibrium is unique and interior. 
In Theorem~\ref{thm.interior}, we characterize equilibrium allocations, prices, and utilities, which are closely related to the vector $\se$, the \emph{effective endowment}.  An agent's consumption of a good is proportional to the Katz--Bonacich (K--B) centrality associated with that good, weighted by the effective endowment. Prices depend not only on preference weights and scarcity (total endowments), but are also proportional to the aggregate K--B centrality weighted by effective endowments. Furthermore, utilities are ordered according to the effective endowment. 
The effective endowment is the reciprocal of the multiplier of each consumer's budget constraint---that is, the shadow value of income---and is endogenously determined in equilibrium. In particular, $\se$ captures both an agent's share of the initial endowment and the pattern of externalities she provides and receives across these multiplex networks.

We then revisit the two fundamental theorems of welfare economics. In a competitive equilibrium, each consumer maximizes her own utility, taking other consumers' allocations and market prices as given. Consequently, the marginal rates of substitution (MRS) between any pair of goods must equal their relative prices, and are therefore common across consumers.  
By contrast, a social planner internalizes externalities across all consumers. Hence, at a Pareto-efficient allocation, the MRS between two goods is adjusted by the unweighted (transposed) Katz--Bonacich (K--B) centralities, which capture the extent to which each consumer's consumption of a good generates externalities for others. In Theorem~\ref{thm.efficiency}, we show that the interior competitive equilibrium is Pareto efficient if and only if the \emph{centrality parallel condition} holds; that is, for any pair of goods, the vectors of unweighted (transposed) K--B centralities are proportional to each other.  
The centrality parallel condition thus characterizes precisely when the aggregation of individual choices is aligned with the social planner's allocation at the efficient outcome.

When the centrality parallel condition is satisfied, the First Welfare Theorem holds, that is, the competitive equilibrium is efficient. In contrast, when this condition fails, the competitive equilibrium is not Pareto efficient. In that case, we explicitly construct a reallocation of consumption bundles that strictly Pareto dominates the equilibrium allocation. The construction exploits the wedge between consumers' MRS and the social planner's MRS. Using a duality argument and an application of Farkas' lemma, we show that the existence of such a Pareto-improving redistribution is equivalent to the failure of the centrality parallel condition. Thus, the condition provides a necessary and sufficient characterization of efficiency. We further provide two sufficient conditions under which the centrality parallel condition holds: (i) all networks are regular; or (ii) all goods share the same network structure.

Moreover, under the centrality parallel condition, the Second Welfare Theorem also holds. In other words, any Pareto-efficient allocation can be implemented as a competitive equilibrium after a suitable redistribution of endowments. 
Taken together, these results highlight that the precise structure of externalities---in particular, who influences whom and the overall network topology---is critical for the efficiency of market equilibrium.

Beyond the qualitative welfare result, we also quantify the magnitude of (in)efficiency. We introduce two measures: a utility-based efficiency loss, defined as the gap between the weighted utility of the competitive equilibrium and the Pareto frontier, and a coefficient of resource utilization (CRU), following \cite{debreu1951coefficient}. Proposition~\ref{prop.efficiencyloss} shows that the utility-based loss is characterized by the Kullback--Leibler divergence between the Pareto weights and an endogenous layer-specific weight vector. The two measures are dual to each other. Moreover, the CRU admits bounds that depend only on the dispersion among these endogenous layer-specific vectors. This provides a quantitative link between network multiplexity and equilibrium efficiency.

Next, we examine the effects of endowments on equilibrium prices and consumer utilities in Proposition~\ref{prop.comparative}. Both the price effect and the welfare effect admit a decomposition into a redistribution component and an aggregate component. The latter vanishes when the aggregate endowment is fixed, i.e., for pure transfers of endowments across consumers. The former also vanishes in the absence of externalities, implying that pure transfers do not affect equilibrium prices. 
In our setting, however, the redistribution component induces nontrivial changes in equilibrium prices due to interactions across multiple layers (goods). The multiplex structure plays a novel role through the effective endowment channel, which captures both shares of endowments and cross-good interactions. In a model with one private good and one conspicuous network good, \cite{ghiglinoKeepingNeighborsSocial2010} show that transferring endowment from a more central agent to a less central one reduces the equilibrium price of the network good relative to the private good. We demonstrate that this result does not necessarily hold in a multiplex network environment, where cross-layer interactions must be taken into account (Proposition~ \ref{prop.priceeffect}). 
With respect to welfare effects, we establish that there exists a strictly positive weight vector such that, for any marginal transfer of endowments, the corresponding marginal change in consumers' utilities—when weighted by this vector—is equal to zero. Consequently, no local redistribution of endowments can be strictly Pareto improving.

Inefficiencies can arise due to missing markets and the absence of prices for externalities. In Proposition~\ref{prop.lindahl}, we characterize a Lindahl equilibrium in the spirit of completing these missing markets, following \cite{arrow1969organization}. In Theorem~\ref{thm.lindahl}, we show that the Lindahl equilibrium restores efficiency through personalized prices that internalize externalities among consumers. However, despite its efficiency, the Lindahl equilibrium need not Pareto dominate the competitive equilibrium. Highly central agents, who derive substantial benefits from unpriced externalities in the competitive equilibrium, may become worse off under a Lindahl equilibrium in which these externalities are fully priced.

\paragraph*{Related literature}
This paper bridges the literature on network games and general equilibrium theory. The classic network game literature typically considers single actions under exogenous costs or budgets (e.g., \citealp{ballesterWhosWhoNetworks2006}; \citealp{bramoulle2007public}; \citealp{bramoulle2014strategic};). Some recent papers extend this framework to multiple activities \citep{chen2018multiple,korMultiactivityInfluenceIntervention2023, demange2025dual}. In particular, while \cite{chen2018multiple} allow for multiple activities, the cost structure remains exogenous. Consequently, the general equilibrium implications of network externalities—where prices and budgets are endogenously determined—remain underexplored. 

Two notable exceptions are \cite{ghiglinoKeepingNeighborsSocial2010} and \cite{elliottNetworkApproachPublic2019}, who study network games with market forces. \cite{ghiglinoKeepingNeighborsSocial2010} examine a two-good exchange economy in which one good is subject to social comparison under market prices. \cite{elliottNetworkApproachPublic2019} study a model with a single public good and characterize Lindahl outcomes as a particular class of efficient allocations, showing that agents' contributions must coincide with their eigenvector centralities in the benefits network.
We advance this literature by building on \cite{zenouGamesMultiplexNetworks2025}, who model games on multiplex networks.\footnote{Recent work on multiplexity includes \cite{joshiNetworkFormationMultigraphs2020}, \cite{chengTheoryMultiplexitySustaining2021}, \cite{billandModelFormationMultilayer2023} and \cite{chandrasekhar2024multiplexing}.} Within this framework, we develop a general equilibrium approach in which prices are endogenously determined by markets.  This allows us to generalize the framework of \cite{ghiglinoKeepingNeighborsSocial2010} to multiple socially interacting goods structured by multiplex networks, accommodating different types of externalities—such as social comparison and public goods—and generating cross-layer price effects. We also investigate welfare efficiency and policy interventions, which are not explored in \cite{ghiglinoKeepingNeighborsSocial2010}. Compared to \cite{elliottNetworkApproachPublic2019}, we obtain a markedly different characterization of prices and consumptions in Lindahl equilibrium due to the multiplex network structure.

From the perspective of general equilibrium theory, we contribute to the extensive literature on the welfare theorems in the presence of externalities. It is well established that the standard welfare theorems typically fail in such environments. Early contributions identify conditions on information structures \citep{ledyardRelationOptimaMarket1971, osanaExternalitiesBasicTheorems1972} or restrictions on preferences—such as ``local non-malevolence'' \citep{parksParetoIrrelevantExternalities1991}, symmetry across goods \citep{arrow2009conspicuous} or specific other-regarding preferences \citep{dufwenbergOtherRegardingPreferencesGeneral2011}—under which these theorems can be restored.\footnote{More recently, \cite{delmercatoSufficientConditionsSimple2023} restore the Second Welfare Theorem using a ``Social Redistribution'' condition.} 
In contrast to these preference\textemdash or information-based approaches\textemdash we provide a topological characterization, showing that the centrality parallel condition is the key structural property of the network under which both welfare theorems hold.\footnote{Network structure is also critical in other contexts; see, for example, \cite{ollar2023network} for full implementation under relaxed common prior assumptions.}

When the welfare theorems fail, another strand of the literature studies mechanisms to restore efficiency or achieve Pareto improvements. For instance, \cite{anderson2025cap} propose a ``quota equilibrium'' with aggregate emission constraints and a ``emission tax equilibrium'' in production economies. In contrast, we focus on a decentralized approach that establishes markets for externalities through personalized pricing, known as the Lindahl equilibrium.\footnote{\cite{bonnisseau2023existence} prove the existence of Lindahl equilibria under free disposal. We complement their result by establishing existence without the free disposal assumption, provided that externalities are sufficiently bounded.} 
Moreover, leveraging our network structure, we provide an analytical characterization of the Lindahl outcome, which allows for a direct welfare comparison with the competitive equilibrium.

\section{Motivating examples}\label{sec_Examples}

Consider a pure exchange economy with two consumers, 1 and 2, and two goods, $x$ and $y$. 
Consumer $i$ is endowed with $(w_i, v_i)$. Aggregate endowments satisfy
$w_1 + w_2 = \bar w$ and $ v_1 + v_2 = \bar v,$
and we normalize $\bar w = \bar v = 2$.

\paragraph*{Example I (asymmetric consumers)}
First, consider the case where the utility of consumers 1 and 2 consuming goods $x$ and $y$ is given by:
\begin{equation*}
    u_1 = \alpha \ln(x_1) + (1-\alpha) \ln(y_1 + \phi y_2),
    \qquad
    u_2 = \alpha \ln(x_2) + (1-\alpha) \ln(y_2).
\end{equation*}

In the benchmark case, when $\phi=0$, the competitive equilibrium is Pareto efficient, since the marginal rate of substitution between goods $x$ and $y$ is equalized across the two consumers, i.e., $MRS^{xy}_1 = MRS^{xy}_2$. This is equivalent to $\frac{y_1^*}{x_1^*} = \frac{y_2^*}{x_2^*}$, implying that the set of Pareto-efficient allocations (i.e., the contract curve) coincides with the diagonal of the Edgeworth box, as shown in Figure~\ref{fig.benchmark}.

Let us now set $\phi = 0.7$, so that the consumption of good $y$ from consumer 2 creates a positive externality on consumer 1. The \emph{private} marginal rate of substitution of consumer 1 can be easily calculated as
\begin{equation*}
    MRS^{xy}_1 = \left. \frac{\partial u_1(x_1,y_1,y_2^*)/\partial x_1}{\partial u_1(x_1,y_1,y_2^*)/\partial y_1} \right|_{(x_1,y_1)=(x^*_1,y^*_1)} = \frac{\alpha (y_1^* + \phi y_2^*)}{(1-\alpha)x_1^*},
\end{equation*}
where consumer 1 takes consumer 2’s consumption as given. 


The \emph{social} marginal rate of substitution for player 1 is
\footnote{It follows that $$
\widehat{MRS}_1^{xy}
\equiv
\left.
\frac{\partial \tilde u_1(x_1,y_1)/\partial x_1}{\partial \tilde u_1(x_1,y_1)/\partial y_1}
\right|_{(x_1,y_1)=(x_1^*,y_1^*)},
$$
where feasibility implies \(y_2=\bar v-y_1\) and thus \(\tilde u_1(x_1,y_1)\equiv u_1(x_1,y_1,\bar v-y_1)\).}
$$
\widehat{MRS}_1^{xy}
=
\frac{\alpha\bigl(y_1^*+\phi y_2^*\bigr)}{(1-\phi)(1-\alpha)x_1^*}
=
\frac{MRS^{xy}_1}{1-\phi}.
$$
Since \(0<\phi<1\), we have \(\widehat{MRS}_1^{xy}>MRS^{xy}_1\).

For consumer 2, the private and social marginal rates of substitution coincide: $$   MRS^{xy}_2 =  \widehat{ MRS}^{xy}_2=\frac{\alpha y_2^*}{(1-\alpha) x_2^*}.$$
As shown in Figure~\ref{fig.box}, which depicts the Edgeworth box, the social indifference curve (red dashed line) is therefore steeper than the private indifference curve (red solid line). It follows that, starting from the competitive equilibrium, a reduction in consumer 1’s consumption of $y$ together with a corresponding increase in her consumption of $x$, can raise the utilities of  both consumers. That is, such a reallocation constitutes a Pareto improvement, as illustrated by the shaded gray region in Figure~\ref{fig.box}. When $\phi=0$, the shaded gray region vanishes, as the competitive equilibrium is efficient in standard general equilibrium theory  and the set of Pareto-efficient allocations (i.e., the contract curve) coincides with the diagonal of the Edgeworth box.

In equilibrium, the allocations satisfy
$MRS^{xy}_1 = MRS^{xy}_2 = \frac{p^x}{p^y},$
that is, each consumer's marginal rate of substitution equals the common relative price ratio, where \(p^x\) and \(p^y\) denote the prices of goods \(x\) and \(y\), respectively. The set of all equilibrium allocations are 
depicted by the purple curve in Figure~\ref{fig.box}. On the other hand,    the set of Pareto-efficient allocations are determined by $ \widehat{\mathit{MRS}}_1^{xy}= \widehat{\mathit{MRS}}_2^{xy}$, that is, the contract curve, which is shown by the green curve. It is immediate that no interior equilibrium is efficient due to the wedge between $ \mathit{MRS}_1^{xy}$ and $ \widehat{\mathit{MRS}}_1^{xy}$ for consumer~$1$.\footnote{As illustrated in Figure~\ref{fig.box}, because the initial endowment (green dot) is Pareto efficient and consumer~2 is strictly better off at the competitive equilibrium (black dot), consumer~1 must be strictly worse off. Hence, in the presence of spillovers, trade may leave some consumers worse off.}

\begin{figure}[ht]
    \begin{subfigure}{0.45\textwidth}
        \centering
        \includegraphics[width=1\textwidth]{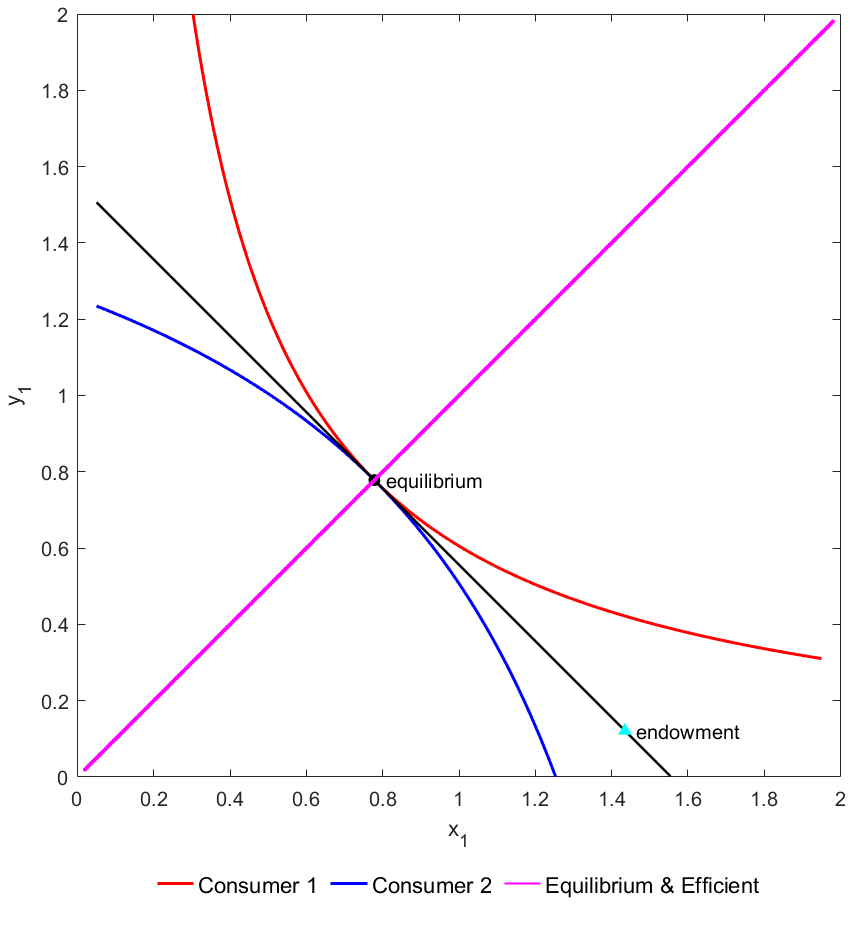}
        \caption{Benchmark: $\phi=0$}
        \label{fig.benchmark}
    \end{subfigure}
    \hfill
    \begin{subfigure}{0.45\textwidth}
        \centering
        \includegraphics[width=1\textwidth]{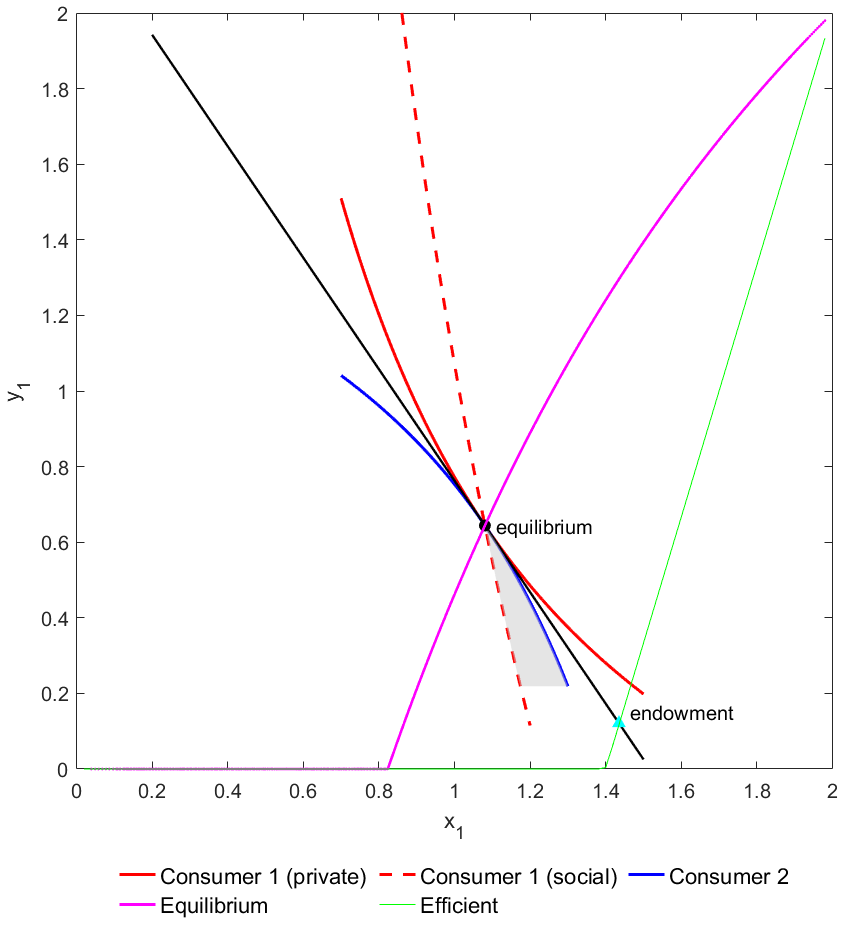}
        \caption{Inefficient equilibrium: $\phi=0.7$}
        \label{fig.box}
    \end{subfigure}
    \caption{Edgeworth boxes for the benchmark and Example I}
    \label{fig.inefficientexample}
\end{figure}



This inefficiency reflects a broken symmetry in network positions:
consumer~1 receives spillovers from consumer~2, but not vice versa.
As we will show, this asymmetry corresponds precisely to the failure of
network regularity, and our main result  (Theorems~\ref{thm.efficiency})
establishes  that such asymmetry leads to market failure.

\paragraph*{Example II (symmetric consumers)}
We now impose symmetry across consumers, so that consumer 2 also receives spillovers from consumer 1. Utilities are now  given by
\begin{equation*}
    u_1 = \alpha \ln(x_1) + (1-\alpha) \ln(y_1 + \phi y_2),
    \qquad
    u_2 = \alpha \ln(x_2) + (1-\alpha) \ln(y_2 + \phi y_1),
\end{equation*}
with $\phi = 0.7$. A similar calculation yields $\widehat{\mathit{MRS}}_i^{xy} = MRS^{xy}_i/(1-\phi)$ for $i=1,2$.
Hence, while private and social MRS do not coincide at the individual level, the equilibrium condition $MRS^{xy}_1 = MRS^{xy}_2$ is equivalent to the efficiency condition $\widehat{\mathit{MRS}}_1^{xy}= \widehat{\mathit{MRS}}_2^{xy}$ since the common factor $1/(1-\phi)$ cancels out under symmetry of consumers. It follows that interior competitive equilibrium is efficient. Figure~\ref{fig.box2} provides an illustration in the Edgeworth box. 


\begin{figure}[ht]
    \begin{subfigure}{0.45\textwidth}
        \centering
        \includegraphics[width=1\textwidth]{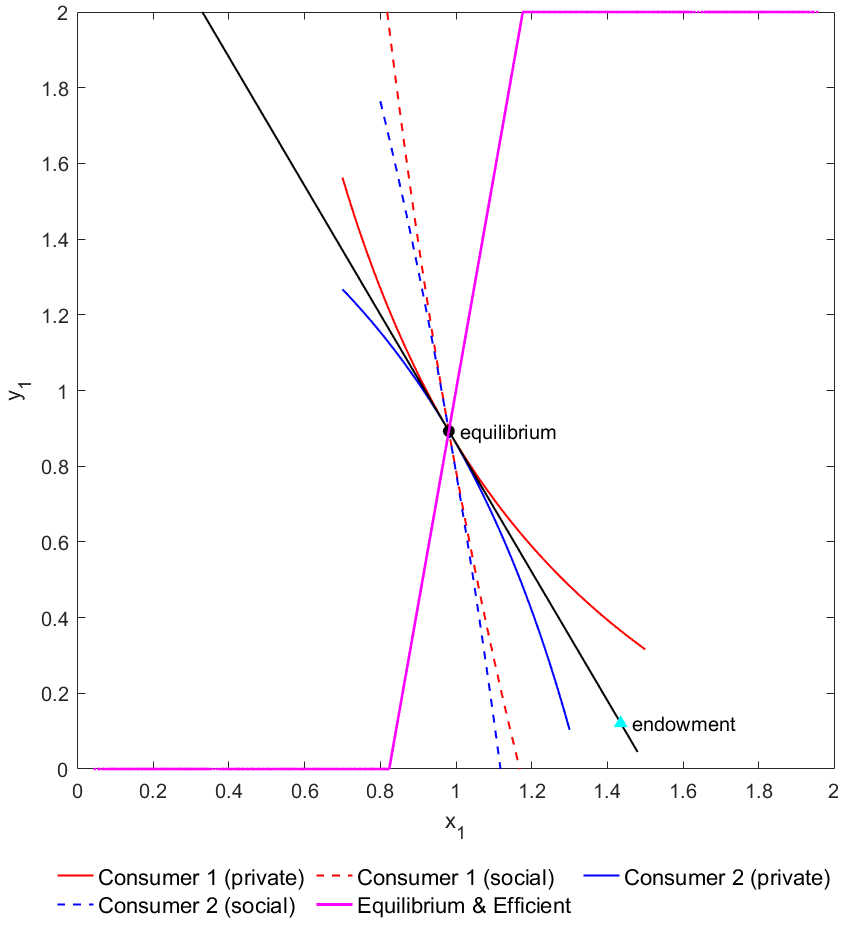}
        \caption{Symmetric consumers}
        \label{fig.box2}
    \end{subfigure}
    \hfill
    \begin{subfigure}{0.45\textwidth}
        \centering
        \includegraphics[width=1\textwidth]{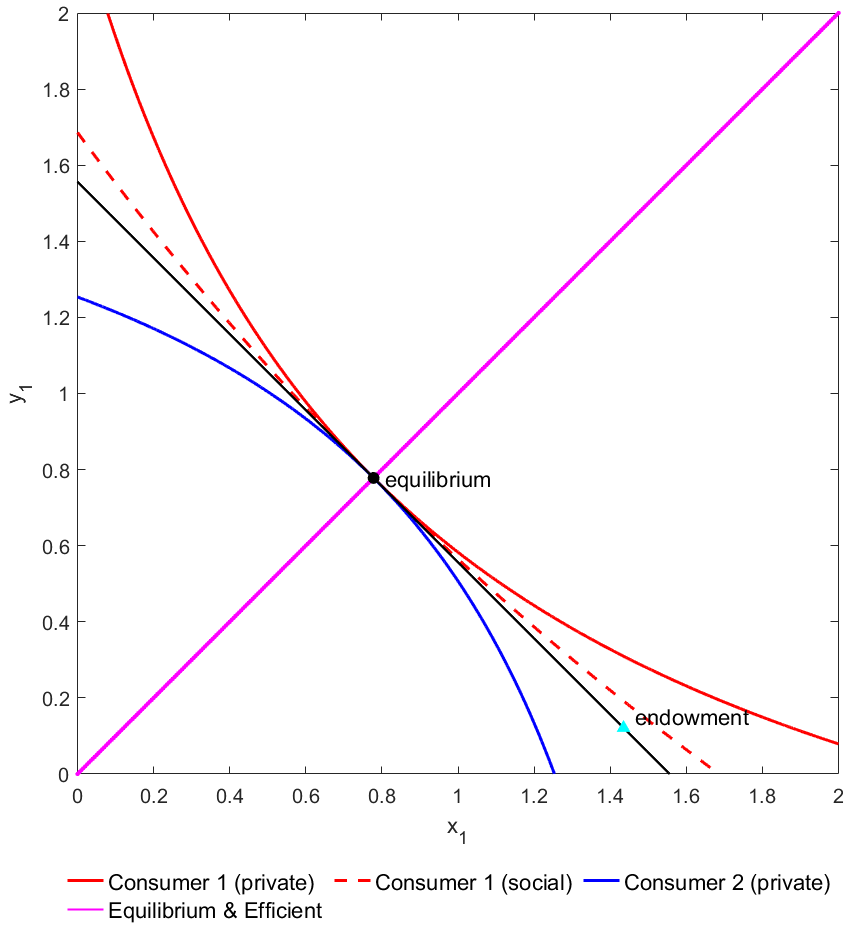}
        \caption{Symmetric goods}
        \label{fig.box3}
    \end{subfigure}
    \caption{Edgeworth boxes for Examples II and III, where the equilibria are efficient.}
    \label{fig.efficientexample}
\end{figure}

Symmetry across consumers here corresponds to the network being regular
(each consumer has the same degree). Theorem~\ref{thm.efficiency} and Lemma~\ref{lemma.symmetry} part $(i)$
generalize  this observation: competitive equilibria are efficient
when   networks are regular.

\paragraph*{Example III (symmetric goods)}
Finally, we impose symmetry across goods by assuming that consumer 1 receives spillovers from consumer 2’s consumption of both goods with the same intensity. Utilities are given by
\begin{equation*}
    u_1 = \alpha \ln(x_1 + \phi x_2) + (1-\alpha) \ln(y_1 + \phi y_2),
    \qquad
    u_2 = \alpha \ln(x_2) + (1-\alpha) \ln(y_2).
\end{equation*}
Under this symmetry across goods, we have $\widehat{\mathit{MRS}}_i^{xy} = MRS^{xy}_i$ for $i=1,2$.
Hence, as illustrated in Figure~\ref{fig.box3}, the private and social indifference curves are tangent at the competitive equilibrium, so the wedge disappears. It follows that the competitive equilibrium is efficient. Moreover, the set of Pareto-efficient allocations coincides with the diagonal of the Edgeworth box, which is identical to the benchmark case without externalities, that is, when $\phi=0$.

Symmetry across goods here corresponds to the network being the same for the two goods. Theorem~\ref{thm.efficiency}  and Lemma~\ref{lemma.symmetry} part $(ii)$
generalize  this observation: competitive equilibria are efficient
when  all goods/layers have the same network structure. 

\medskip

In summary, these examples highlight several key insights about competitive equilibrium in networked environments. We focus here on the simplest network structure—the dyad. In the sequel, we show that our main result—symmetry restores efficiency—extends to more general network structures. In particular, considering very general networks in which goods are embedded in multilayer structures, with each good corresponding to a layer, we show that network regularity  or same structure between  layers    is a  sufficient condition for the efficiency of the competitive equilibrium (Theorem~\ref{thm.efficiency} in Section~\ref{sec_Welfare}). In other words, although network spillovers and externalities typically distort incentives, the First and Second Welfare Theorems continue to hold under some conditions on the networks. When these conditions fail, a natural question is what policy intervention can restore (constrained) efficiency. We address this question in Section~\ref{sec_policy}
by introducing a Lindahl equilibrium.


\section{Model and equilibrium analysis}\label{sec_model}

\subsection{The model}

We consider a pure exchange economy denoted by $\mathcal{E} := (\mathbb{R}^{|\mS|}_+, u_i, \w_i)_{i \in \mN}$, where $\mN = \{1, \ldots, n\}$ is the set of consumers, $\mS=\{1,\cdots, \bar s\}$, the set of consumption goods, and  $\w_i \in \mathbb{R}^{|\mS|}_{++}$ is the initial (strictly positive) endowment vector for consumer $i$. Each consumer $i$ can consume a non-negative bundle of goods, thus the consumption set is $\mathbb{R}^{|\mS|}_+$. The preference of the consumption bundle is represented by the utility function $u_i: \mathbb{R}^{|\mS|\times |\mN|}_+ \rightarrow \mathbb{R}$. 
Following \cite{zenouGamesMultiplexNetworks2025}, we assume that each consumer~$i$ has the following (log form) Cobb-Douglas utility function:\footnote{For completeness,  we assume that the utility $u_i$ in \eqref{eq.utility} is equal to $-\infty$ when $q^s_i = x^{s}_i + \phi^s \sum_{j \in \mN} g^{s}_{ij} x^{s}_{j} \leq 0$ for some $s$.}
\begin{equation}\label{eq.utility}
	u_i(\bx_{i},\bx_{-i}) = \sum_{s\in \mS} \alpha^{s} \ln
	\left( 
	x^{s}_i + \phi^s \sum_{j \in \mN} g^{s}_{ij} x^{s}_{j}
	\right),
\end{equation}
where the parameter $\alpha^{s}>0$ represents the preference weight of good $s$ and $\phi^s \in \mathbb{R}$
denotes the network spillover parameter on good $s$. For each good $s \in \mS$, cross-consumer spillovers are captured by a non-negative adjacency matrix $\bG^s$ without self-loops, i.e., $g^s_{ij} \geq 0$ and $g^s_{ii}=0$ for all $i,j \in \mN$. The network is fully general and may be directed or undirected, as well as weighted or unweighted, and may differ across goods. The parameter $\phi^s$ can be positive or negative  depending on the economic context. 
Consumption externalities are embedded in a multiplex network, so utility is defined over the product of all consumers’ consumption sets and depends on both own and others’ consumption.  Define effective consumption   as
\begin{equation}\label{eq_eff_cons}
q_i^s:=  x^{s}_i + \phi^s \sum_{j \in \mN} g^{s}_{ij} x^{s}_{j}. 
\end{equation}
It 
enters preferences via a Cobb--Douglas aggregator with $\sum_{s \in \mS} \alpha^s = 1$.


\begin{definition}\label{def.ce}
    A competitive equilibrium of the economy $\mathcal{E}$ is a tuple $(\bX^*, \bp^*)$ satisfying
    \begin{enumerate}
        \item[(i)] Utility maximization: for each consumer $i$, given the price vector $\bp^*>\boldsymbol{0} $ and the consumption of others $\bx^*_{-i} := (\bx^*_1,\ldots,\bx^*_{i-1},\bx^*_{i+1},\ldots,\bx^*_n)$, self consumption $\bx^{*}_i$ solves
        \begin{equation*}
            \begin{gathered}
                \max\limits_{\bx_i \in \mathbb{R}^{|\mS|}_+} u_{i}(\bx_i,\bx^*_{-i}) \\
        		s.t. \ \bp^* \cdot \bx_i \leq \bp^* \cdot \w_i.
            \end{gathered}
        \end{equation*}

        \item[(ii)] Market clearing: $\bX^* \in \mathcal{A}$ is attainable, where
        \begin{equation*}
            \mathcal{A} := \left \{\bX \in \mathbb{R}^{|\mS|\times |\mN|}_{+}: \sum_{i \in \mN} \bx_i =  \sum_{i \in \mN} \w_i  \right \}.
        \end{equation*}
    \end{enumerate}
\end{definition}

\paragraph*{Modeling discussion}
Before studying the equilibrium, we clarify the implications of our modeling choices regarding preferences and network structure. First, our utility specification features non-separable externalities. Unlike models in which externalities enter additively or separably (e.g., \citealp{cresSymmetricSmoothConsumption1996}; \citealp{dufwenbergOtherRegardingPreferencesGeneral2011}), our framework explicitly captures how social interactions shape consumption incentives. 

Furthermore, the multiplex network framework, adapted from \cite{zenouGamesMultiplexNetworks2025}, captures the heterogeneity of social influence, enabling us to analyze how structural variations across layers\textemdash goods in our setting\textemdash interact to determine equilibrium outcomes. The framework also accommodates different types of externalities through the parameter $\phi^s$.   
Direct computation reveals that, for all distinct $i$ and $j$ in $\mN$,
\begin{equation*}
sign \left\lbrace 
\frac{\partial u_i}{\partial  x_j^s} \right\rbrace =  sign\left\lbrace  \phi^s \times g_{ij}^s\right\rbrace. 
\end{equation*} 
Specifically, $\phi^{s} > 0$ corresponds to local public goods \citep{bramoulle2007public, SET}, characterized by positive spillovers and strategic substitutes. Conversely, $\phi^{s} < 0$ captures social comparison \citep{ghiglinoKeepingNeighborsSocial2010, immorlicaSocialStatusNetworks2017,langtryKeepingJonesesReferenceDependent2023}, exhibiting negative externalities and strategic complements. 
The distinguishing feature of our analysis relative to \cite{zenouGamesMultiplexNetworks2025} is the treatment of income. While they assume exogenous budget constraints, our general equilibrium framework endogenizes prices and, therefore, income. This approach  not only allows us to characterize equilibrium prices and evaluate the impact of endowment transfers but also yields a novel network condition that ensures the efficiency of the equilibrium allocation (see Remark~\ref{remark.zz}).

In Section \ref{sec_gen_ut}, we examine a utility function that is more general than the log-form Cobb-Douglas specification in equation  \eqref{eq.utility}  and show that our main results---including equilibrium existence  and the welfare theorems---continue to hold under this more general formulation.


\subsection{Equilibrium analysis}\label{sec_Equilibrium}
\subsubsection{Existence}

For each good $s \in \mathcal S$, let $\bar g^s := \max_{i,j \in \mathcal N} g^s_{ij}\geq 0$ and $\bar g := \max_{s \in \mathcal S} \bar g^s \geq 0$. Let $\bar{\omega}^s = \sum_{k \in \mathcal N} \omega_k^s$ denotes the aggregate endowment of good $s$, and let $\eta_i^s := \omega_i^s/\bar{\omega}^s>0$ be consumer $i$’s endowment share. Write $\etab^s=(\eta_1^s,\cdots,\eta_n^s)^{\intercal}$ and $\underline{\eta} := \min_{i \in \mathcal N,\; s \in \mathcal S} \eta_i^s > 0$.\footnote{We use the superscript $\intercal$ to denote the transpose of a vector or matrix.}  
We impose the following assumptions.

\begingroup
    \addtocounter{assumption}{1}
    
    \edef\myprefix{\arabic{assumption}}
    
    \setcounter{assumption}{0}
    
    \renewcommand{\theassumption}{\myprefix\Alph{assumption}}

    \begin{assumption}\label{assump.nonempty}
        $\phi^s > -\frac{\min_{i \in \mN} \eta_i^s}{(n-1)\bar{g}^s}$ for any $s \in \mS$.
    \end{assumption}

    \begin{assumption}\label{assump.interior}
        $|\phi^s| < \frac{\ubar{\eta}}{(n+1) \bar{g}}$ for any $s \in \mS$.
    \end{assumption}

    \setcounter{assumption}{\myprefix}
\endgroup




Assumption~\ref{assump.nonempty} ensures that negative externalities are not too  strong to reduce effective consumption below zero, which would otherwise jeopardize the existence of an equilibrium. In particular, Assumption~\ref{assump.nonempty} is automatically satisfied when $\phi^s\geq 0$. 
Assumption~\ref{assump.interior} strengthens Assumption~\ref{assump.nonempty} by requiring that all externalitiess, in absolute value, be sufficiently small.

To show the existence of the competitive equilibrium, we follow  \cite{debreuSocialEquilibriumExistence1952} to define an abstract economy  $\hat{\mathcal{E}}$ as the tuples 
\begin{equation*}
    \hat{\mathcal{E}} := \left\{ (\mathcal{X}_i,u_i,A_i)_{i \in \mN}, \left(\mathcal{P},\bp^\intercal\left[\sum_{i \in \mN}\bx_i - \sum_{i \in N} \w_i\right],A_p \right) \right\}.
\end{equation*}
We explain each component of the abstract economy below.

First, define the choice set for each consumer:
\begin{equation*}
    \mathcal{X}_i := \left\{ \bx_i \in \mathbb{R}^{|\mS|}_+: \bx_i \leq C\sum_{i \in \mN} \w_i \right\},
\end{equation*}
where  $ C >1$ is a constant. Also, define the set of  normalized prices as $\mathcal{P} := \{ \bp \in \mathbb{R}^{|\mS|}_+: \mathbf{1}^\intercal \bp = 1 \}$.

Denote by $\mathcal{X}_{-i}$ the product of choice sets $\mathcal{X}_{1} \times \ldots \times \mathcal{X}_{i-1} \times \mathcal{X}_{i+1} \times \ldots \times \mathcal{X}_{n}$. Then for each individual $i$, define a correspondence $A_i: \mathcal{X}_{-i} \times \mathcal{P} \rightrightarrows \mathcal{X}_i$ such that
\begin{equation*}
    A_i(\bx_{-i},\bp) := \left\{ \bx_i \in \mathcal{X}_i: \bp^\intercal \bx_i \leq \bp^\intercal \w_i, x^{s}_i + \phi^s \sum_{j \in \mN} g^{s}_{ij} x^{s}_{j} \geq 0, \forall s \in \mS  \right\},
\end{equation*}
which can be interpreted as the constraints for consumer $i$. Moreover, the correspondence with respect to consumptions $A_p: \mathcal{X} \rightrightarrows \mathcal{P}$ is defined as $A_p (\bX) \equiv \mathcal{P}$.

In the proof, we show that, for a suitable choice of $C$, for any $(\bx_{-i},\bp) \in \Pi_{k \neq i} \mathcal{X}_k \times \mathcal{P}$, $A_i(\bx_{-i},\bp)$ always contains $\w_i$  and, hence, is non-empty. Furthermore, it can be shown that $A_i$ is continuous and has a compact and convex graph. Finally, since $\mathcal{X}_i$ and $\mathcal{P}$ are compact and $u_i$ is continuous and quasi-concave in $\bx_i$, it follows from \cite{debreuSocialEquilibriumExistence1952} that an equilibrium exists for the constructed abstract economy $\hat{\mathcal{E}}$.

\begin{lemma}[Existence for the abstract economy]\label{lemma.exist}
    If Assumption \ref{assump.nonempty} holds, then there exists an equilibrium $(\bX^*, \bp^*)$ of the abstract economy $\hat{\mathcal{E}}$ such that 
    \begin{align*}
        &\bx^*_i \in \arg\max_{\bx_i \in A_i(\bx^*_{-i},\bp^*)} u_i(\bx_i,\bx^*_{-i}),\\
        &\bp^* \in \arg\max_{\bp \in \mathcal{P}} \bp^\intercal \left[ \sum_{i \in \mN} \bx^*_i - \sum_{i \in N} \w_i \right]. 
    \end{align*}
\end{lemma}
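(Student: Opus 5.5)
We will verify the hypotheses of Debreu's (1952) social equilibrium existence theorem for $\hat{\mathcal{E}}$ and then invoke it. The theorem requires four things. First, each strategy set ($\mathcal{X}_i$ and $\mathcal{P}$) is nonempty, compact and convex. Second, each payoff is continuous on the product of strategy sets and quasi-concave in the player's own strategy. Third, each constraint correspondence is continuous (upper and lower hemicontinuous) with nonempty, compact, convex values; in the version with graphs, it has a closed graph. For the price player the first three conditions are immediate: $\mathcal{P}$ is the simplex, $A_p\equiv\mathcal{P}$ is constant, and the payoff $\bp^\intercal[\sum_i\bx_i-\sum_i\w_i]$ is bilinear, hence continuous and linear in $\bp$. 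For consumers, $\mathcal{X}_i$ is a box, so it is compact and convex.

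\textbf{Step 1 (nonemptiness: $\w_i\in A_i(\bx_{-i},\bp)$).} The budget constraint holds trivially at $\w_i$. We must show $\omega_i^s+\phi^s\sum_j g^s_{ij}x^s_j\ge 0$ for all $\bx_{-i}\in\mathcal{X}_{-i}$. If $\phi^s\ge0$ this is clear. If $\phi^s<0$, then $\sum_{j\neq i}g^s_{ij}x^s_j\le (n-1)\bar g^s C\bar\omega^s$. Assumption \ref{assump.nonempty} gives $|\phi^s|(n-1)\bar g^s<\min_k\eta^s_k\le \eta_i^s$, a strict inequality. So we can choose $C>1$ close enough to $1$ that $|\phi^s|(n-1)\bar g^s C\le \eta^s_i$ for all $i,s$. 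This yields $\omega^s_i+\phi^s\sum_j g^s_{ij}x^s_j\ge \bar\omega^s(\eta^s_i-|\phi^s|(n-1)\bar g^sC)\ge0$. For $\phi^s>0$ any $C>1$ works, so $C$ is fixed by the finitely many negative layers.

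\textbf{Step 2 (properties of $A_i$).} All defining inequalities are linear in $\bx_i$ for fixed $(\bx_{-i},\bp)$, so values are convex and compact, being closed subsets of $\mathcal{X}_i$. The constraints are jointly continuous in $(\bx_i,\bx_{-i},\bp)$, so the graph is closed and $A_i$ is upper hemicontinuous. Lower hemicontinuity is the delicate part, and I expect it to be the main obstacle, since $\w_i$ may lie on the boundary of the effective-consumption constraint or the budget constraint. The plan is to find a Slater-type point. Take $\bx^\circ_i=\w_i/2$ (or $\lambda\w_i$ with $\lambda<1$). Then $\bp^\intercal\bx^\circ_i<\bp^\intercal\w_i$, because $\bp\in\mathcal{P}$ is nonzero and $\w_i\gg0$.

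The effective-consumption constraint is harder at $\bx_i^\circ$, because $\lambda\omega_i^s-|\phi^s|\cdot(\text{others})$ may be negative. To handle this, make Step 1 strict: choose $C$ so that $|\phi^s|(n-1)\bar g^sC<\eta_i^s$, which is possible by the strict inequality in Assumption \ref{assump.nonempty}. Then pick $\lambda<1$ close to $1$ so that $\lambda\eta^s_i>|\phi^s|(n-1)\bar g^sC$. With this choice, $\lambda\w_i$ satisfies every constraint strictly, uniformly in $(\bx_{-i},\bp)$. A convex-valued correspondence defined by finitely many continuous inequalities that admits a uniformly strictly feasible point is lower hemicontinuous. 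The standard argument: given $\bx_i\in A_i(\bx_{-i},\bp)$ and a sequence $(\bx^k_{-i},\bp^k)\to(\bx_{-i},\bp)$, take convex combinations $(1-t_k)\bx_i+t_k\lambda\w_i$ with $t_k\to0$ chosen to restore feasibility. The box constraint $\bx_i\le C\sum\w$ is independent of parameters and is preserved under these convex combinations.

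\textbf{Step 3 (payoffs).} On the graph of $A_i$ effective consumption is nonnegative, but $u_i$ may equal $-\infty$ where some $q^s_i=0$. I would treat $u_i$ as an extended-real continuous function: $\ln$ extended by $\ln0=-\infty$ is continuous into $[-\infty,\infty)$. Alternatively, replace $u_i$ by the order-equivalent $\prod_s (q_i^s)^{\alpha^s}$, which is finite, continuous and nonnegative on the relevant domain. This function is concave in $\bx_i$, because $q_i$ is affine in $\bx_i$ and a Cobb--Douglas aggregator with $\sum\alpha^s=1$ is concave. Hence it is quasi-concave, and it has the same maximizers. Debreu's theorem then delivers a social equilibrium $(\bX^*,\bp^*)$ with exactly the two argmax properties stated in Lemma \ref{lemma.exist}.
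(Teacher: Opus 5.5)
Your proposal is correct and follows essentially the same route as the paper. Both verify Debreu's (1952) hypotheses: nonemptiness comes from $\w_i \in A_i$ for a suitably chosen $C$, and lower hemicontinuity comes from convex combinations with a uniformly strictly feasible point (you use $\lambda\w_i$, the paper uses $\w_i-\delta\mathbf{1}$). Your choice of $C$ strictly below $\min_{s:\phi^s<0}\min_i\eta_i^s/((n-1)|\phi^s|\bar g^s)$ is in fact slightly cleaner than the paper's: the paper sets $C$ equal to this minimum, which makes its nonemptiness bound only weak for the lowest-share agent, while its uniform-$\delta$ step needs that bound strict. Likewise, your order-preserving Cobb--Douglas transformation handles the $-\infty$ values of $u_i$, which the paper passes over.
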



Based on this existence result, we can next show the existence of the competitive equilibrium of the original pure exchange economy $\mathcal{E}$.

\begin{proposition}[Existence]\label{prop.exist}
If Assumption \ref{assump.nonempty} holds, a competitive equilibrium of economy $\mathcal{E}$ exists; if, in addition, Assumption \ref{assump.interior} holds, then the equilibrium exists and is interior, i.e., $\bX \in \mathbb{R}^{|\mS|\times |\mN|}_{++}$.
\end{proposition}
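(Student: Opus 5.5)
We take the equilibrium $(\bX^*,\bp^*)$ of the abstract economy $\hat{\mathcal{E}}$ from Lemma~\ref{lemma.exist} and show that it is a competitive equilibrium of $\mathcal{E}$. We proceed in four steps.

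\textbf{Step 1: budgets bind.} Each $u_i$ is strictly increasing in own consumption $x_i^s$ wherever $q_i^s>0$. We first check that at the optimum every $q_i^s>0$. This holds because $\w_i\in A_i$ and, by Assumption~\ref{assump.nonempty}, $u_i(\w_i,\bx^*_{-i})>-\infty$; hence the maximizer has finite utility. Local nonsatiation then gives $\bp^{*\intercal}\bx_i^*=\bp^{*\intercal}\w_i$, provided the cap $\bx_i\le C\sum_k\w_k$ does not bind. Summing over $i$ yields $\bp^{*\intercal}\mathbf z=0$, where $\mathbf z=\sum_i(\bx_i^*-\w_i)$.

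\textbf{Step 2: market clearing.} Since $\bp^*$ maximizes $\bp^\intercal\mathbf z$ over the simplex, we get $\max_s z^s=\bp^{*\intercal}\mathbf z=0$, so $\mathbf z\le 0$. Every good with $z^s<0$ must then have $p^{*s}=0$. Such a zero price is impossible: every consumer would demand an unbounded amount of that good, hitting the cap $C\bar{\omega}^s>\bar{\omega}^s$, which contradicts $z^s<0$. Hence $\bp^*\gg0$, and together with $\mathbf z\le0$ this gives $\mathbf z=0$. In particular $\bx_i^*\le\sum_k\w_k<C\sum_k\w_k$, so the cap is slack, which justifies its use in Step 1.

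\textbf{Step 3: dropping the artificial constraints.} We must show that $\bx_i^*$ maximizes $u_i$ over the true budget set. Two constraints of $A_i$ are artificial. The constraint $q_i^s\ge0$ is harmless, since violating it gives $u_i=-\infty$. The cap is slack, and concavity of $u_i$ in $\bx_i$ extends local optimality to global optimality: if some affordable $\bx_i'$ gave higher utility, the convex combination $(1-t)\bx_i^*+t\bx_i'$ for small $t$ would lie in $A_i$ and give strictly higher utility. We expect the main obstacle to be Step 2, namely ruling out zero prices while avoiding circular use of cap slackness. We would handle this carefully by taking $C$ large, e.g.\ $C>1$ with strict increase in $x_i^s$ forcing $x_i^{*s}=C\bar\omega^s$ whenever $p^{*s}=0$. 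Note that affordability holds at zero price because the remaining budget $\bp^{*\intercal}\w_i$ is unaffected.

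\textbf{Step 4: interiority.} Under Assumption~\ref{assump.interior}, the first-order conditions give $\alpha^s/q_i^s=\lambda_i p^s$, so $q_i^s=\alpha^s/(\lambda_ip^s)>0$. This is equivalent to the Cobb--Douglas expenditure share: $p^s q_i^s=\alpha^s\,\big(\bp\cdot\w_i+\sum_s p^s\phi^s\sum_j g^s_{ij}x^s_j\big)$, provided $x_i^s>0$. To exclude the corner $x_i^s=0$ with $q_i^s=\phi^s\sum_jg^s_{ij}x_j^s>0$, we must show that the optimal unconstrained $x_i^s=q_i^s-\phi^s(\bG^s\bx^s)_i$ is positive. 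We would bound $q_i^s$ from below by $\alpha^s$ times income share, using $\bp\cdot\w_i\ge\underline\eta\,\bp\cdot\bar{\w}$, and bound $|\phi^s(\bG^s\bx^s)_i|\le|\phi^s|\bar g\bar\omega^s$. With $|\phi^s|<\underline\eta/((n+1)\bar g)$, the own term dominates: we compare $p^sq_i^s$ against $p^s\bar\omega^s\cdot\underline\eta/(n+1)$. This requires the aggregate relation $p^s\bar\omega^s\approx\alpha^s\,\bp\cdot\bar{\w}$ up to spillover terms, obtained by summing the expenditure identities over $i$; the factor $(n+1)$ absorbs these error terms.
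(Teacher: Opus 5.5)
Your Steps 1--3 follow the paper's existence argument: binding budgets, complementary slackness from the price player, ruling out $p^{s*}=0$ via the cap $C\bar\omega^s$, and then dropping the cap and the $q_i^s\ge0$ constraint by concavity. The circularity you flag disappears if Step 2 does not use binding budgets. Budget feasibility alone gives $\bp^{*\intercal}\mathbf z\le 0$, so the price player's problem yields $\max_s z^s\le 0$. Hence $\mathbf z\le\mathbf 0$, and the cap is slack before Step 1 is invoked. Alternatively, with $C>1$ the cap cannot bind in every good while the budget is slack.

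For interiority you take a genuinely different route, and it works.

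\textbf{The paper's route.} It argues pairwise by contradiction. If $x_i^s=0$, then $\phi^s>0$. It picks a consumer $j$ with $x_j^s\ge\bar\omega^s/n$ and a good $s'$ with $x_i^{s'}\ge\omega_i^{s'}$. The KKT conditions in good $s'$ bound $\lambda_i/\lambda_j$, and the KKT inequality at $i$'s corner in good $s$ then contradicts Assumption~\ref{assump.interior}. No information on price levels is needed, and the same chain carries over to general utilities (Theorem~\ref{appthm.exist}).

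\textbf{Your route.} Set $e_i^t:=\phi^t(\bG^t\bx^t)_i$ and full income $I_i=\bp\cdot\w_i+\sum_t p^t e_i^t$. You show that the unconstrained Cobb--Douglas demand $p^s q_i^s=\alpha^s I_i$ already has $x_i^s>0$, so it coincides with the constrained optimum.

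\textbf{One repair needed.} The aggregate bound cannot come from summing the expenditure identities, since those presuppose the interiority you are proving. Instead use the KKT inequality $p^s x_i^s\le\alpha^s I_i+p^s|e_i^s|$. It holds whether or not $x_i^s>0$, because $1/\lambda_i\le I_i$. Summing over $i$ gives $(1-\epsilon)\,p^s\bar\omega^s\le(1+\epsilon)\,\alpha^s\,\bp\cdot\bar{\w}$, where $\epsilon:=(n-1)\max_t|\phi^t|\bar g<\frac{(n-1)\underline{\eta}}{n+1}\le\frac16$.

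\textbf{Constants.} Combine this with $I_i>\frac{n}{n+1}\underline{\eta}\,\bp\cdot\bar{\w}$. You get $\alpha^s I_i/p^s>\frac{n(1-\epsilon)}{1+\epsilon}\cdot\frac{\underline{\eta}}{n+1}\bar\omega^s>\frac{\underline{\eta}}{n+1}\bar\omega^s>e_i^s$ for $n\ge2$. So the factor $n+1$ does absorb the error terms.

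\textbf{Trade-off.} Your version is constructive and gives explicit bounds on equilibrium expenditure shares, but it relies on the closed-form Cobb--Douglas demand. The paper's two-consumer comparison is shorter and extends beyond Cobb--Douglas.
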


To show the existence, it suffices to verify that the equilibrium of the abstract economy~$\hat{\mathcal{E}}$ is a competitive equilibrium of $\mathcal{E}$. There are two gaps between these two equilibria. First, the equilibrium allocation of $\hat{\mathcal{E}}$ may not be attainable in $\mathcal{E}$. Moreover, although consumers face the same budget constraint, the feasible consumption set in $\hat{\mathcal{E}}$, $\mathcal{X}_i$, is smaller than in $\mathcal{E}$, namely $\mathbb{R}^{|\mS|}_+$.  In the proof of Proposition~\ref{prop.exist}, we close these gaps to show existence and, establish the interiority of the equilibrium under Assumption~\ref{assump.interior}.


\subsubsection{Characterization of  interior equilibria}

In this section, we give a further characterization of the interior equilibrium. Under Assumption \ref{assump.interior}, define the matrix:\footnote{Denote the smallest eigenvalue of $\phi^s \bG^s$ as $\lambda_{\min}(\phi^s \bG^s)$. Then $\bM^s$ is well-defined if $1+\lambda_{min}(\phi^s \bG^s) > 0$. Since $\lambda_{\min}(\phi^s \bG^s) \geq -(n-1) \bar{g}^s |\phi^s|$, the previous inequality can be implied by Assumption~\ref{assump.interior}.} 
\begin{equation*}
	\bM^{s} := [\mathbf{I}_n + \phi^{s} \bG^{s}]^{-1} 
	= \sum_{k=0}^{\infty} (-\phi^s \bG^s)^k,
\end{equation*}
which captures consumers’ influence in layer/good~$s$. The Katz-Bonacich (K-B) centrality  \citep{ballesterWhosWhoNetworks2006} for good/layer $s$ with weight vector~$\mathbf{z}$ is defined as $\mathbf{b}^s(\mathbf{z}) := \bM^{s} \mathbf{z}$    and  the aggregate   K-B centrality as $b^s(\mathbf{z}) := \mathbf{1}^\intercal \mathbf{b}^s(\mathbf{z}) = \mathbf{z}^\intercal \widetilde{\mathbf{b}}^s$, where $\widetilde{\mathbf{b}}^s := (\bM^s)^\intercal \mathbf{1}$ denote the (transposed)  unweighted K-B centrality for good $s$.\footnote{If the network is undirected, i.e, $g^s_{ij}=g^s_{ji}$, then $\widetilde{\mathbf{b}}^s=\mathbf{M}^s{\mathbf{1}}=\mathbf{b}^s(\mathbf{1})$, which is the unweighted K-B centrality for good $s$ \citep{ballesterWhosWhoNetworks2006}.}

\begin{theorem}[Characterization of interior equilibrium]\label{thm.interior}
    If Assumption \ref{assump.interior} holds, the interior competitive equilibrium $(\bX^*, \bp^*)$  has the following characterization. For each good $s\in\mS$, the consumption profile  is  given by
        \begin{equation}\label{eq.equix}
            \bx^{s*}  
            = \frac{\bar{\omega}^{s}}{ b^s(\se)}  \mathbf{b}^s (\se),
        \end{equation}
    the equilibrium price is equal to
        \begin{equation}\label{eq.equip}
            p^{s*} 
            =\frac{\alpha^{s}}{\bar{\omega}^{s}} b^s(\se),
        \end{equation}
                and, for each consumer $i \in \mN$, the utility  is given by
        \begin{equation}\label{eq.equiu}
            u_i^* =  \sum_{s \in \mS} \alpha^s \ln\left( \frac{\bar{\omega}^{s}\mu_i^*}{b^s(\se)}\right),
        \end{equation}
where $\se> \mathbf{0}$ solves the following homogeneous linear system 
\begin{equation}\label{eq.mudetermine}
    \underbrace{ \left[\sum_{s \in \mS}
    \alpha^s \left(\mathbf{I}_n - \boldsymbol{\eta}^{s}\mathbf{1}^\intercal  \right)\bM^{s}   \right]}_{\overline{\bM}}  \bmu = \mathbf{0}.
\end{equation} 
\end{theorem}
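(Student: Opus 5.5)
We derive the characterization from the first-order conditions of each consumer's problem at an interior equilibrium, which exists by Proposition~\ref{prop.exist} under Assumption~\ref{assump.interior}. Fix consumer $i$ and take $\bx^*_{-i}$ and $\bp^*$ as given. The objective \eqref{eq.utility} is strictly concave in $\bx_i$ and strictly increasing in each $x^s_i$, so the budget constraint binds. At an interior optimum the KKT conditions read $\alpha^s/q^{s*}_i = \lambda_i p^{s*}$ for every $s$, where $\lambda_i>0$ is the multiplier on the budget. Set $\mu^*_i := 1/\lambda_i$. Then effective consumption satisfies $q^{s*}_i = \alpha^s\mu^*_i/p^{s*}$, that is, $\mathbf{q}^{s*} = (\alpha^s/p^{s*})\se$ in vector form.

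Next we invert the network map. By \eqref{eq_eff_cons}, $\mathbf{q}^s = (\mathbf{I}_n+\phi^s\bG^s)\bx^s$, and $\bM^s$ is well defined under Assumption~\ref{assump.interior} (see the footnote defining it). Hence $\bx^{s*} = (\alpha^s/p^{s*})\bM^s\se = (\alpha^s/p^{s*})\mathbf{b}^s(\se)$. Summing over consumers and using market clearing, $\mathbf{1}^\intercal\bx^{s*}=\bar\omega^s$, gives $\bar\omega^s = (\alpha^s/p^{s*})\, b^s(\se)$. This is exactly the price formula \eqref{eq.equip}. Substituting it back into the expression for $\bx^{s*}$ yields \eqref{eq.equix}. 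Substituting it into $q^{s*}_i=\alpha^s\mu^*_i/p^{s*}=\bar\omega^s\mu^*_i/b^s(\se)$ and plugging into \eqref{eq.utility} gives \eqref{eq.equiu}. Positivity of $\se$ is immediate from $\lambda_i>0$.

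It remains to obtain the system \eqref{eq.mudetermine}, which comes from the binding budget constraints. Consumer $i$'s budget is $\sum_s p^{s*}x^{s*}_i = \sum_s p^{s*}\omega^s_i$. On the left, $p^{s*}x^{s*}_i = \alpha^s[\bM^s\se]_i$. On the right, $p^{s*}\omega^s_i = \alpha^s\eta^s_i\, b^s(\se) = \alpha^s\eta^s_i\mathbf{1}^\intercal\bM^s\se$. Stacking over $i$ gives $\sum_s\alpha^s(\mathbf{I}_n-\etab^s\mathbf{1}^\intercal)\bM^s\se=\mathbf{0}$, which is $\overline{\bM}\se=\mathbf{0}$. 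The system is homogeneous, consistent with the degree-zero homogeneity of prices: $\se$ is pinned down only up to scale, and the formulas \eqref{eq.equix} and \eqref{eq.equiu} are scale-invariant, while \eqref{eq.equip} scales with the price normalization.

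The argument is essentially mechanical once interiority is granted, so the delicate step is the first one: justifying that the interior KKT conditions characterize the optimum and that $\lambda_i>0$. For $\lambda_i>0$, we need the budget binding, which follows from monotonicity in own consumption, since $\partial u_i/\partial x^s_i=\alpha^s/q^s_i>0$. For the KKT conditions, $q^s_i>0$ at the equilibrium because $u_i>-\infty$ there, and concavity then makes first-order conditions sufficient. One should also confirm that $\se$ indeed lies in the kernel of $\overline{\bM}$ and is strictly positive, but both facts come directly from the construction above rather than needing a separate Perron–Frobenius argument.
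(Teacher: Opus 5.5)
Your proposal is correct and follows the paper's own derivation step by step. The steps are the same: the interior first-order conditions with $\mu_i^*=1/\lambda_i$, inversion through $\bM^s$, market clearing to obtain $p^{s*}$, substitution to get $\bx^{s*}$ and $u_i^*$, and the binding budgets stacked into $\overline{\bM}\se=\mathbf{0}$.

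Two small remarks. First, for this direction you need the necessity of the KKT conditions rather than their sufficiency; necessity is automatic here, since the only active constraint is the linear budget. Second, $\se$ is determined up to scale only under Assumption~\ref{assump.rank}, which the theorem does not impose.
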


Given that the equilibrium allocation is interior, the first-order condition of utility maximization problem for  consumer~$i$ is given by
\begin{equation*}
    \frac{\alpha^s}{q^s_i} - \lambda_i p^s = 0,
\end{equation*}
where $q_i^s$ is defined in \eqref{eq_eff_cons} and $\lambda_i>0$ is the Lagrange multiplier of $i$'s budget constraint; hence, it  represents the shadow price. Defining $\mu_i \equiv 1/\lambda_i$, we can rewrite the first-order condition as
\begin{equation}\label{eq.foc}
    \frac{\alpha^s}{p^s}\mu_i = q^s_i = x^s_i+\phi^s\sum_{j \in \mN} g_{ij}^s x^s_j.
\end{equation}
In matrix form, we obtain
\[
 \frac{\alpha^s}{p^s}\bmu= \bx^s+\phi^s\mathbf{G}^s\bx^s \implies  \bx^s = \frac{\alpha^s}{p^s} \bM^s \bmu = \frac{\alpha^s}{p^s}  \mathbf{b}^s (\bmu).
\] 
By further imposing the market clearing condition $\mathbf{1}^\intercal \bx^s = \bar{\omega}^s$, we can express prices as in equation~\eqref{eq.equip}. Substituting this price expression into the equation above yields equation~\eqref{eq.equix}. Hence, both consumption quantities and prices can be expressed as functions of $\bmu$. Moreover, the effective consumption for each $s\in\mS$ is given by
\begin{equation}\label{eq.q}
    \mathbf{q}^{s} = (\bM^s)^{-1} \bx^{s} =\frac{\alpha^s}{p^s} \bmu= \frac{\bar{\omega}^{s}}{ b^s(\bmu)} \bmu
\end{equation}
Under Cobb–Douglas aggregation, this directly yields the equilibrium utility of each consumer  $i$, as given in \eqref{eq.equiu}.

What remains is to determine $\bmu$.
The binding budget constraints for all consumers imply 
\begin{equation*} 
	\sum_{s \in \mS} \alpha^{s} 
    \etab^s \bM^{s} \bmu
	= \sum_{s \in \mS} p^{s} \w^{s}
	= \sum_{s \in \mS} p^{s} \bx^{s} 
	= 
	\left(
	\sum_{s \in \mS}
	\alpha^s \bM^{s}
	\right)
	\bmu
\end{equation*}
where we substitute $\bx^s$  and $p^s$ from equations \eqref{eq.equix} and \eqref{eq.equip}, respectively, and use $\etab^s=\w^s/\bar\omega^s$. 
This system determining $\bmu$  is equivalent to \eqref{eq.mudetermine} and admits at least one nontrivial solution because 
$\overline{\bM} = \sum_{s \in \mS} \alpha^s \left(\mathbf{I}_n - \boldsymbol{\eta}^{s}\mathbf{1}^\intercal \right)\bM^{s}$ 
is singular. Indeed, $\mathbf{1}^\intercal \overline{\bM} = \mathbf{0}^\intercal$ since $\mathbf{1}^\intercal \boldsymbol{\eta}^s = 1$ for all~$s$.\footnote{Due to the homogeneity of (\ref{eq.mudetermine}), any scalar multiple of a solution $\bmu$ to (\ref{eq.mudetermine}) induces the same proportional scaling of the equilibrium prices $\bp$, while leaving the consumption allocation $\bx^{s}$ unchanged.}

We now turn to the properties of the equilibrium. First, all equilibrium objects can be expressed as functions of $\se$, which we refer to as the effective endowment. In particular, equation~\eqref{eq.mudetermine} shows that $\se$ is determined by individual endowment shares adjusted by the multiplex network. As we show in Remark~\ref{remark.benchmark}, without externalities, $\se$ is simply proportional to a preference-weighted average of endowment shares.

Equilibrium consumption for good $s$, given in  \eqref{eq.equix}, is proportional to Katz–Bonacich (K-B) centrality. The ratio of equilibrium consumption between consumers $i$ and $j$ for good $s$  satisfies:
\begin{equation*}
    \frac{x^{s*}_i}{x^{s*}_j} = \frac{b^{s}_i(\se)}{b^{s}_j(\se)}.
\end{equation*}
Equivalently, by equation~\eqref{eq.q}
\begin{equation*}
    \frac{q^{s*}_i}{q^{s*}_j} = \frac{\mu^*_i}{\mu^*_j}.
\end{equation*}
Since this holds for all $s \in \mS$, effective consumption for every good is proportional to the effective endowment $\bmu^*$. By equation~\eqref{eq.equiu}, $u_i^*-u_j^* = \ln \mu^*_i-\ln \mu^*_j$. An immediate implication is that if $\mu^*_i > \mu^*_j$, then $u^*_i > u^*_j$.

With externalities, equilibrium prices,  given by \eqref{eq.equip}, depend not only on preference weights and scarcity (total endowments) but also increase with $b^s(\se)$,  the aggregate K-B centrality weighted by effective endowments.   Consequently, ceteris paribus, \emph{a ``more central'' layer/good characterized by a larger aggregate K-B centrality   has a higher price}. For instance, consider two goods $s$ and $t$ with identical preference weights, total endowments, and spillover parameters $\phi^s=\phi^t<0$ (indicating social comparison). If good $s$ possesses a denser network structure (i.e., $\bG^s \ge \bG^t$ elementwise), it will exhibit a higher equilibrium price. 

However, the effect of network centrality on prices depends on the sign of the spillover parameter. To see this, consider two goods $s$ and $t$ that are identical in all primitives except for the sign of spillovers, with $-\phi^s = \phi^t > 0$. Thus, good $s$ exhibits social comparison, whereas good $t$ is a local public good. Social comparison induces excessive private consumption, while local public-good induces free riding and hence insufficient consumption. Equilibrium prices adjust in opposite directions: the price of good $s$ is higher than the price of good $t$, as prices adjust to clear markets. This observation shows that equilibrium prices absorb network externalities and respond to the distortions they generate. Whether such price adjustments are sufficient to restore efficiency is the question we address in Section~\ref{sec_Welfare}.

\begin{remark}[Multiplexing with exogenous  prices]\label{remark.zz}
In the multiplex network setting of \cite{zenouGamesMultiplexNetworks2025}, agents maximize utility subject to \emph{exogenous} budget constraints $\mathbf{1}^\intercal \bx \leq T_i$, implying fixed prices $p^s = 1$ for all $s$. Consequently, $\bmu^{exo}$ is explicitly given by
\begin{equation*}
    \bmu^{exo} = \left( \sum_{s \in \mS} \alpha^s \bM^{s} \right)^{-1} \mathbf{T},
\end{equation*}
where $\mathbf{T} = (T_1, \cdots, T_n)^\intercal$ denotes the exogenous income vector. 

By contrast, in our general equilibrium setting, prices are \emph{endogenously} determined, implying that income is also endogenous and given by $T_i = \bp^\intercal \w_i$. As a result, $\se$ is characterized by equation~\eqref{eq.mudetermine}, which links income to the detailed distribution of endowments across goods, as captured by $\etab^s$. Moreover, endogenous prices incorporate the effects of network externalities and adjust in response to the distortions they create. This endogenous price adjustment leads to welfare properties that differ from those in the exogenous-price environment. We discuss this point further in Online Appendix~\ref{appsec.regular}.
\end{remark}

\begin{remark}[Model without externalities]\label{remark.benchmark}
Suppose that $\phi^s = 0$ for all $s \in \mS$, so that $\bM^s = \mathbf{I}_n$. In this case, the economy reduces to a standard competitive equilibrium with Cobb--Douglas preferences and no spillovers. Consumption then coincides with effective consumption, since there are no externalities, and is proportional to the effective endowment vector $\se$, determined by  \eqref{eq.mudetermine}. We have
\begin{equation*}
    \frac{\se}{\mathbf{1}^\intercal \se}
    =
    \sum_{t \in \mS} \alpha^t \boldsymbol{\eta}^t = \frac{\bx^{s*}}{\bar{\omega}^s},
    \quad \text{for all} \ s \in \mS.
\end{equation*}
That is, normalized effective endowments and equilibrium consumptions coincide with a weighted average of individual endowment shares, where the weights are given by the preference parameters $\{\alpha^t\}_{t \in \mS}$.
Moreover, equilibrium relative prices are given by
\begin{equation*}
    \frac{p^{s*}}{p^{t*}}
    =
    \frac{\alpha^s}{\alpha^t}
    \frac{\bar{\omega}^t}{\bar{\omega}^s},
\end{equation*}
which depend only on preference weights and aggregate endowments. In particular, prices are invariant to redistributions of endowments across consumers that leave aggregate endowments unchanged. As shown in the comparative statics analysis in Section~\ref{sec_CS}, this neutrality property breaks down in the presence of spillovers.
\end{remark}

\subsubsection{Uniqueness of the equilibrium}
The uniqueness of equilibrium is not guaranteed in general; we therefore impose a stronger condition to ensure the uniqueness of the interior equilibrium.
\begin{assumption}\label{assump.rank}
    The singular matrix, $\overline{\bM} \equiv \sum_{s \in \mS}
    \alpha^s \left(\mathbf{I}_n - \boldsymbol{\eta}^{s}\mathbf{1}^\intercal  \right) \bM^{s}$, has rank $n-1$.
\end{assumption}

Equilibrium allocations and relative prices characterized in Proposition \ref{thm.interior} are homogeneous of degree zero in $\se$; hence, under Assumption~\ref{assump.rank}, they are unique, since the solution space of equation~\eqref{eq.mudetermine} is one-dimensional, i.e., $\se$ is unique up to a scalar multiple.

To uniquely pin down the nominal price level, without loss of generality, we normalize the price of good~$1$ to 1:
\begin{align*}
    p^{1*} = 1 \iff \frac{\alpha^1 \mathbf{1^\intercal} \bM^1\se}{\bar{\omega}^1} = 1.
\end{align*} 
As a result, this additional equation and \eqref{eq.mudetermine} can uniquely pin down $\se$. Moreover, define the matrix
\begin{equation*}
    \bH := \overline{\bM} + \alpha^1\boldsymbol{\eta}^1 \mathbf{1}^\intercal\bM^1.
\end{equation*}
We have
\begin{equation*}
    \bH \se = \overline{\bM} \se + \alpha^1\boldsymbol{\eta}^1 \mathbf{1}^\intercal\bM^1\se
    = \alpha^1\boldsymbol{\eta}^1 \mathbf{1}^\intercal\bM^1\se =\bar{\omega}^1 \boldsymbol{\eta}^1
    =\w^1.
\end{equation*}
This equation has a unique solution given the invertibility of $\bH$, which we states in the following lemma.

\begin{lemma}[Uniqueness]\label{lemma.unique}
    Assume that Assumption~\ref{assump.interior} holds. Then Assumption~\ref{assump.rank} holds if and only if $\bH$ is invertible; consequently, under either condition, the interior equilibrium is unique with
    \begin{equation*}
        \se = \bH^{-1} \w^1.
    \end{equation*}
\end{lemma}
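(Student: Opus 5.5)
We prove the equivalence with a rank argument and then read off uniqueness from Theorem~\ref{thm.interior}. Write $\mathbf{c}^\intercal := \alpha^1 \mathbf{1}^\intercal \bM^1$, so that $\bH = \overline{\bM} + \boldsymbol{\eta}^1 \mathbf{c}^\intercal$ is a rank-one perturbation of $\overline{\bM}$. Two facts will be used throughout:
\begin{itemize}
\item $\mathbf{1}^\intercal \overline{\bM} = \mathbf{0}^\intercal$, as noted after Theorem~\ref{thm.interior}, so $\operatorname{rank}\overline{\bM} \le n-1$;
\item $\mathbf{1}^\intercal \boldsymbol{\eta}^1 = 1$.
\end{itemize}

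\emph{Invertibility of $\bH$ implies Assumption~\ref{assump.rank}.} Since $\boldsymbol{\eta}^1\mathbf{c}^\intercal$ has rank at most one, subadditivity of rank gives $n = \operatorname{rank}\bH \le \operatorname{rank}\overline{\bM} + 1$. Combined with $\operatorname{rank}\overline{\bM} \le n-1$, this yields $\operatorname{rank}\overline{\bM} = n-1$.

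\emph{Assumption~\ref{assump.rank} implies invertibility of $\bH$.} Suppose $\bH\mathbf{v} = \mathbf{0}$.
\begin{itemize}
\item Multiplying on the left by $\mathbf{1}^\intercal$ gives $0 = \mathbf{1}^\intercal\overline{\bM}\mathbf{v} + (\mathbf{1}^\intercal\boldsymbol{\eta}^1)\,\mathbf{c}^\intercal\mathbf{v} = \mathbf{c}^\intercal\mathbf{v}$.
\item Hence $\overline{\bM}\mathbf{v} = \bH\mathbf{v} - \boldsymbol{\eta}^1\mathbf{c}^\intercal\mathbf{v} = \mathbf{0}$, so $\mathbf{v}$ lies in $\ker\overline{\bM}$. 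Under Assumption~\ref{assump.rank} this kernel is one-dimensional.
\item Under Assumption~\ref{assump.interior}, Proposition~\ref{prop.exist} and Theorem~\ref{thm.interior} give an interior equilibrium with $\se > \mathbf{0}$ and $\overline{\bM}\se = \mathbf{0}$. So $\ker\overline{\bM} = \operatorname{span}\{\se\}$ and $\mathbf{v} = t\se$ for some scalar $t$.
\end{itemize}
It remains to show $\mathbf{c}^\intercal\se \neq 0$, which forces $t = 0$ and hence $\mathbf{v} = \mathbf{0}$. This is the main obstacle. First I would invoke the equilibrium itself: by \eqref{eq.equip}, $\alpha^1\mathbf{1}^\intercal\bM^1\se = \bar\omega^1 p^{1*} > 0$, since equilibrium prices are strictly positive. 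As a direct backup, the Neumann series gives $\mathbf{1}^\intercal\bM^1 \mathbf{z} = \mathbf{1}^\intercal\mathbf{z} - \phi^1\mathbf{1}^\intercal\bG^1\mathbf{z} + \cdots$ for $\mathbf{z} = \se > 0$. Under Assumption~\ref{assump.interior}, each entry of $\mathbf{1}^\intercal \bG^1$ is at most $(n-1)\bar g$, so $|\phi^1|\,\|(\bG^1)^\intercal\mathbf{1}\|_\infty < 1$. The tail therefore has $\ell^1$-norm at most a fraction strictly less than one of $\mathbf{1}^\intercal\se$, which leaves the sum strictly positive.

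\emph{Uniqueness.} Suppose either condition holds. Any interior equilibrium has a $\se$ that solves \eqref{eq.mudetermine}, by Theorem~\ref{thm.interior}. After the normalization $p^{1*} = 1$, this $\se$ satisfies $\bH\se = \w^1$, as computed just before the statement. Invertibility of $\bH$ then pins down $\se = \bH^{-1}\w^1$. Formulas \eqref{eq.equix}--\eqref{eq.equiu} express the allocation, prices and utilities as functions of $\se$, so all of them are uniquely determined. Equivalently, since relative prices and allocations are homogeneous of degree zero in $\se$, uniqueness follows from $\ker\overline{\bM}$ being one-dimensional.
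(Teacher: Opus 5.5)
Your proof is correct and follows essentially the paper's route. You use rank subadditivity for ``$\bH$ invertible $\Rightarrow$ Assumption~\ref{assump.rank}''; for the converse, you use $\mathbf{1}^\intercal\bH=\alpha^1\mathbf{1}^\intercal\bM^1$ to push a null vector of $\bH$ into $\ker\overline{\bM}=\mathrm{span}\{\se\}$ and then contradict $p^{1*}>0$. Your explicit treatment of the scalar $t$ is slightly more careful than the paper's ``$\mathbf{z}$ determines the equilibrium price.''

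The only blemish is in your optional Neumann-series backup. Keeping the tail below $\mathbf{1}^\intercal\se$ requires $|\phi^1|\,\|(\bG^1)^\intercal\mathbf{1}\|_\infty<1/2$, not merely $<1$. Assumption~\ref{assump.interior} does deliver this: since $\min_i\eta^1_i\le 1/n$, it gives $|\phi^1|(n-1)\bar g<(n-1)/(n(n+1))\le 1/6$.
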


\begin{remark}[Uniqueness without externalities]\label{remark.unique}
     Consider the benchmark case where $\phi^s \equiv 0$ for all $s \in \mS$. Then,
    \begin{equation*}
        \overline{\bM} = \mathbf{I}_n - \left( \sum_{s \in \mS} \alpha^s \etab^s \right)\mathbf{1}^\intercal, \qquad  
        \bH
        = \mathbf{I}_n - \sum_{s =2}^{\bar{s}}
        \alpha^s \boldsymbol{\eta}^s \mathbf{1}^\intercal. 
    \end{equation*}
We can show that  the rank of $\overline{\bM}$ is $n-1$ and  the rank of $\bH$ is $n$,\footnote{All off-diagonal entries of $\bH$ are negative and $\mathbf{1}^\intercal \bH = \alpha^1 \mathbf{1}^\intercal > \mathbf{0}^\intercal$, so each column sum is strictly positive. Hence $\bH$ is strictly column diagonally dominant and therefore non-singular. By contrast, $\overline{\bM}$ is singular: its off-diagonal entries are nonnegative, and its leading principal minor of order $n-1$ is strictly column diagonally dominant and thus non-singular. Consequently, $\overline{\bM}$ has rank $n-1$.}  consistent with   Lemma~\ref{lemma.unique}.
By continuity of the determinant, $\bH$ remains invertible for sufficiently small externalities—that is, there exists $\hat{\phi} > 0$ such that $|\phi^s| < \hat{\phi}$ for all $s \in \mS$—which, by Lemma~\ref{lemma.unique}, implies uniqueness of equilibrium.     
\end{remark}


\section{Welfare properties of equilibrium}\label{sec_Welfare}

\subsection{The effect of network structure on equilibrium efficiency}

Having characterized the equilibrium, we now turn to its welfare implications. Externalities generally invalidate the Fundamental Welfare Theorems. This section examines the precise conditions under which these theorems hold, as well as those under which they fail. We show that the network structure, which governs the pattern of externalities across consumers, plays a central role in determining whether the welfare theorems are satisfied.

An attainable allocation $\hat{\bX} \in \mathcal{A}$ is (Pareto) efficient if there is no attainable allocation $\mathbf{X} \in \mathcal{A}$ such that $u_i(\mathbf{X}) \geq u_i (\hat{\bX})$ for all $i \in \mathcal{N}$, with at least one inequality strict. By the convexity of $\mathcal{A}$ and the concavity of $u_i$, this definition can be represented through a planner's weighted utility maximization problem. 

\begin{lemma}[Pareto frontier]\label{lemma.efficienttoplanner}
    An interior attainable allocation $\hat{\bX} \in \mathcal{A}$ is (Pareto) efficient if and only if it solves
    \begin{equation}\label{eq.pareto}
        \max\limits_{\bX \in \mathcal{A}} \sum_{i\in \mN} \theta_i u_i(\bX)
    \end{equation}
    for some positive weight vector $\btheta \in \mathbb{R}^n_{++}$. Moreover, the interior solution takes the following form:
    \begin{equation}\label{eq.efficientx}
        \hat{\bx}^s(\btheta) = \bar{\omega}^s \bM^s \left\{ (\widetilde{\mathbf{b}}^s)^{-1}
            \odot \frac{\btheta}{\mathbf{1}^\intercal \btheta} \right\}
    \end{equation}
    where $(\widetilde{\mathbf{b}}^s)^{-1} =(1/\tilde{b}^s_1,\ldots,1/\tilde{b}^s_n)^\intercal$.
\end{lemma}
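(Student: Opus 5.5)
The plan is to work in effective-consumption coordinates. For each good $s$, the map $\bx^s \mapsto \mathbf{q}^s = (\mathbf{I}_n+\phi^s\bG^s)\bx^s$ is a linear bijection with inverse $\bM^s$. Under Assumption~\ref{assump.interior} this inverse is well defined, and it equals the Neumann series.

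\emph{Concavity.} In these coordinates $u_i=\sum_s\alpha^s\ln q_i^s$, which is strictly concave in $\mathbf{q}$. Since $\mathbf{q}$ is linear in $\bX$, each $u_i$ is concave in $\bX$, and strictly concave in the product of the $q_i^s$ coordinates that enter it.

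\emph{Equivalence with the weighted problem.} The attainable set $\mathcal{A}$ is convex and compact. The utility possibility set $U=\{\mathbf{u}:\mathbf{u}\le u(\bX),\ \bX\in\mathcal{A}\}$ (restricted to utilities above $-\infty$) is therefore convex.
\begin{itemize}
\item ``If'' direction: this is standard. Suppose $\hat\bX$ maximizes $\sum_i\theta_iu_i$ with $\btheta\gg 0$, and some $\bX$ weakly Pareto dominates it with at least one strict inequality. Then $\sum_i\theta_i u_i(\bX)>\sum_i\theta_i u_i(\hat\bX)$, a contradiction.
\item ``Only if'' direction: apply the supporting hyperplane theorem at $u(\hat\bX)$, which lies on the boundary of $U$. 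This yields $\btheta\ge 0$, $\btheta\neq 0$, such that $\hat\bX$ maximizes $\sum_i\theta_iu_i$ over $\mathcal{A}$.
\end{itemize}

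\emph{Strict positivity of $\btheta$ (main obstacle).} This is where interiority is used. Suppose $\theta_k=0$ for some $k$. Starting from $\hat\bX$, transfer a small $\varepsilon$ of some good from $k$ to a consumer $j$ with $\theta_j>0$. I must then show that $\sum_i\theta_i u_i$ strictly increases to first order, contradicting optimality.

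The externalities make this delicate, because the transfer also changes the effective consumption of others through $\bG^s$. The first-order change of the objective is $\sum_i\theta_i\alpha^s/q_i^s$ times the appropriate column of $(\mathbf{I}_n+\phi^s\bG^s)$, applied to $\mathbf{e}_j-\mathbf{e}_k$. I would try to exploit the smallness in Assumption~\ref{assump.interior}: the diagonal term $\theta_j\alpha^s/q_j^s$ dominates the off-diagonal spillover terms $|\phi^s|\bar g\,\cdots$. If this turns out to be awkward, a fallback is available. A Pareto optimum with $\theta_k=0$ cannot be interior unless $u_k$ is unaffected locally, and the first-order conditions below would force $\tilde b^s$-weighted marginal utilities to vanish, which is impossible. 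So I would argue directly from the KKT system.

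\emph{Solving the interior planner problem.} Use Lagrangian multipliers $\nu^s$ on $\mathbf{1}^\intercal\bx^s=\bar\omega^s$. The first-order condition in $x_j^s$ reads
$$\sum_i\theta_i\frac{\alpha^s}{q_i^s}\,(\mathbf{I}_n+\phi^s\bG^s)_{ij}=\nu^s .$$
In vector form this is $(\mathbf{I}_n+\phi^s\bG^s)^\intercal\mathbf{r}^s=\nu^s\mathbf{1}$, where $r_i^s=\theta_i\alpha^s/q_i^s$. Hence $\mathbf{r}^s=\nu^s(\bM^s)^\intercal\mathbf{1}=\nu^s\widetilde{\mathbf{b}}^s$, which gives
$$q_i^s=\frac{\alpha^s\theta_i}{\nu^s\tilde b_i^s}.$$
Then $\bx^s=\bM^s\mathbf{q}^s$, and imposing $\mathbf{1}^\intercal\bx^s=(\widetilde{\mathbf{b}}^s)^\intercal\mathbf{q}^s=\frac{\alpha^s}{\nu^s}\mathbf{1}^\intercal\btheta=\bar\omega^s$ pins down $\nu^s$. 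Substituting back yields \eqref{eq.efficientx}.

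Finally, by concavity these first-order conditions are sufficient, and strict concavity in $\mathbf{q}$ gives uniqueness of the solution. Positivity of $\tilde b^s$ under Assumption~\ref{assump.interior}, needed for the formula to be well defined, follows because $\mathbf{1}^\intercal\bM^s\approx\mathbf{1}^\intercal$ up to terms of order $(n-1)\bar g|\phi^s|<1$.
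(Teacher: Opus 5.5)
Your proposal is correct and follows essentially the same route as the paper. It uses the standard weighted-sum argument for ``if,'' a supporting hyperplane to the convex utility possibility set for ``only if,'' and then the interior first-order conditions $(\mathbf{I}_n+\phi^s\bG^s)^\intercal\mathbf{r}^s=\nu^s\mathbf{1}$, i.e.\ $\mathbf{r}^s=\nu^s\widetilde{\mathbf{b}}^s$, solved in effective-consumption coordinates with market clearing pinning down $\nu^s$.

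For strict positivity of $\btheta$, your KKT fallback is precisely the paper's argument: some $\theta_j>0$ and $\tilde b^s_j>0$ force $\nu^s>0$, hence every $\theta_i>0$. That fallback is the cleaner choice, because in your transfer argument the diagonal term need not dominate for an arbitrary $j$ with $\theta_j>0$; it does if you pick $j$ to maximize $\theta_j\alpha^s/q^s_j$.
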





Next, we focus on the welfare properties of the interior competitive equilibrium. Two vectors $\mathbf{x}$ and $\mathbf{y}$ are said to be parallel, denoted $\mathbf{x} \parallelsum \mathbf{y}$, if one is a scalar multiple of the other. We next formally introduce a property of the network structure that is crucial for the efficiency of the competitive equilibrium. 

\begin{theorem}[Welfare theorems]\label{thm.efficiency}
    Suppose the equilibrium allocation $\bX^*$ is interior.
    
    \begin{itemize}
        \item[(i)] If the following centrality parallel condition holds:

        \begin{equation}\label{eq.parallel}
            \widetilde{\mathbf{b}}^s \parallelsum \widetilde{\mathbf{b}}^t, \ \forall s,t \in \mS.
        \end{equation}
        \begin{enumerate}
            \renewcommand{\labelenumi}{(\roman{enumi})}
            \item[(i1)]  (First Welfare Theorem) The equilibrium is efficient. Moreover, the Pareto weight in \eqref{eq.efficientx} is $\btheta^* = \mathbf{b} \odot \se $, where $\odot$ denotes the Hadamard product, $\se$ is the effective endowment, and $\mathbf{b}$ is a vector satisfying $\mathbf{b} \parallelsum \widetilde{\mathbf{b}}^s, \forall s \in \mS$.
    
            \item[(i2)] (Second Welfare Theorem) Every interior efficient allocation can be supported by a competitive equilibrium for an appropriate choice of initial endowments.
             \end{enumerate}

        \item[(ii)]     If condition \eqref{eq.parallel} does not hold, then both welfare theorems fail.
    \end{itemize}
   
\end{theorem}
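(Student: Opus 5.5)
The plan is to compare the closed form of the interior equilibrium in Theorem~\ref{thm.interior} with the closed form of interior Pareto optima in Lemma~\ref{lemma.efficienttoplanner}. Both are stated in terms of effective consumption $\mathbf{q}^s=(\bM^s)^{-1}\bx^s$.

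\textbf{Preliminaries.} Under Assumption~\ref{assump.interior}, the Neumann series for $\bM^s$ converges. I would first check that $\widetilde{\mathbf{b}}^s=(\bM^s)^\intercal\mathbf{1}\gg\mathbf{0}$. The series gives $\widetilde{b}^s_i\ge 1-\sum_{k\ge1}(|\phi^s|(n-1)\bar g)^k>0$, because $|\phi^s|(n-1)\bar g<1/2$. This is what makes expressions such as $(\widetilde{\mathbf{b}}^s)^{-1}$ and $\mathbf{b}$ well defined and positive. Next, because $\bX\mapsto(\mathbf{q}^s)_s$ is a linear bijection and $\sum_i\theta_iu_i$ is strictly concave in $\mathbf{q}$, the planner problem \eqref{eq.pareto} has at most one interior solution for each $\btheta$. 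That solution is \eqref{eq.efficientx}, whose effective consumption is
\begin{equation*}
\hat{\mathbf{q}}^s(\btheta)=\frac{\bar\omega^s}{\mathbf{1}^\intercal\btheta}\,\btheta\odot(\widetilde{\mathbf{b}}^s)^{-1}.
\end{equation*}
By \eqref{eq.q}, the equilibrium effective consumption is $\mathbf{q}^{s*}=\frac{\bar\omega^s}{b^s(\se)}\se$. By Lemma~\ref{lemma.efficienttoplanner}, the interior equilibrium is efficient if and only if $\mathbf{q}^{s*}=\hat{\mathbf{q}}^s(\btheta)$ for all $s$ and some $\btheta\gg\mathbf{0}$.

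\textbf{First Welfare Theorem, (i1) and half of (ii).} The condition $\mathbf{q}^{s*}=\hat{\mathbf{q}}^s(\btheta)$ for all $s$ says that $\btheta$ is parallel to $\widetilde{\mathbf{b}}^s\odot\se$ for every $s$. Since $\se\gg\mathbf{0}$, such a common $\btheta$ exists if and only if all the $\widetilde{\mathbf{b}}^s$ are parallel, which is exactly \eqref{eq.parallel}. In that case take $\btheta^*=\mathbf{b}\odot\se$. The scalars then match automatically, because $b^s(\se)=(\se)^\intercal\widetilde{\mathbf{b}}^s=c_s\,\mathbf{1}^\intercal\btheta^*$ when $\widetilde{\mathbf{b}}^s=c_s\mathbf{b}$. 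If \eqref{eq.parallel} fails, no $\btheta$ works. The equilibrium allocation then differs from every interior planner solution, so by Lemma~\ref{lemma.efficienttoplanner} it is inefficient.

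\textbf{Second Welfare Theorem, (i2) and the other half of (ii).} Take an interior efficient allocation $\hat{\bX}=\hat{\bX}(\btheta)$ and give each consumer the endowment $\w_i:=\hat{\bx}_i$. Set $\bmu:=\btheta\odot\mathbf{b}^{-1}$ and $p^s:=\alpha^s b^s(\bmu)/\bar\omega^s$. Under \eqref{eq.parallel}, $\hat{\mathbf{q}}^s=\frac{\alpha^s}{p^s}\bmu$ for all $s$, so the first-order conditions \eqref{eq.foc} hold with $\lambda_i=1/\mu_i$. Budgets hold with equality because each consumer consumes her endowment, and markets clear by construction. Since each $u_i$ is concave in $\bx_i$ given $\bx_{-i}$, the first-order conditions are sufficient, so $(\hat{\bX},\bp)$ is a competitive equilibrium. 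Conversely, suppose \eqref{eq.parallel} fails and some endowment supports $\hat{\bX}(\btheta)$ as an interior equilibrium. Then \eqref{eq.q} forces every $\hat{\mathbf{q}}^s$ to be parallel to a single vector $\bmu$. That makes the vectors $\btheta\odot(\widetilde{\mathbf{b}}^s)^{-1}$ parallel across $s$, which, after dividing componentwise by $\btheta\gg\mathbf{0}$, means the $(\widetilde{\mathbf{b}}^s)^{-1}$ are parallel and hence the $\widetilde{\mathbf{b}}^s$ are parallel, a contradiction. So no efficient interior allocation can be decentralized.

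\textbf{Expected obstacle.} The delicate step is the ``only if'' direction of the reduction. It needs uniqueness of the interior planner solution, which justifies comparing the equilibrium with formula \eqref{eq.efficientx} alone, and it needs the positivity of $\widetilde{\mathbf{b}}^s$. I would establish both carefully from Assumption~\ref{assump.interior} before the comparisons above.
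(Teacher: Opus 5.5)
Your proposal is correct and follows essentially the paper's route. For (i1) you match the equilibrium effective consumption $\frac{\bar\omega^s}{b^s(\boldsymbol{\mu}^*)}\boldsymbol{\mu}^*$ with the planner's closed form at $\boldsymbol{\theta}^*=\mathbf{b}\odot\boldsymbol{\mu}^*$. For (i2) you set endowments equal to the efficient allocation with $\mu_i=\theta_i/b_i$, and for (ii) you obtain parallelism of the $\widetilde{\mathbf{b}}^s$ from requiring an interior allocation to satisfy both the equilibrium and the planner characterizations. The differences are only presentational: you run part (ii) through the closed forms in $\mathbf{q}$-space rather than through ratios of first-order conditions, and you explicitly verify $\widetilde{\mathbf{b}}^s\gg\mathbf{0}$ and uniqueness of the interior planner solution, which the paper leaves implicit.
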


The intuition is obtained by comparing the incentives faced by individual consumers with those of a social planner. Define the marginal rate of substitution (MRS) between goods $s$ and $t$ for consumer $i$ by
\begin{equation*}
    \mathit{MRS}^{st}_i(\bX) := \frac{\partial u_i (\bX)/ \partial x^s_i}{\partial u_i (\bX)/ \partial x^t_i}.
\end{equation*}
At a competitive equilibrium $\bX^*$, each consumer’s MRS is equal to the relative price:
\begin{equation*}
    \mathit{MRS}^{st}_i(\bX^*) = \frac{p^s}{p^t} = \mathit{MRS}^{st}_j(\bX^*).
\end{equation*}

From the planner’s perspective, an increase in $x_i^s$ affects not only consumer $i$’s own utility, but also the utilities of other consumers through network spillovers. In particular,
\begin{equation*}
    \theta_i \frac{\partial u_i}{\partial x^s_i} +
    \sum_{j\neq i} \theta_j\frac{\partial u_j}{\partial x^s_i}
    =
    \theta_i \frac{\partial u_i}{\partial x^s_i}   +
    \phi^s\sum_{j\neq i}g^s_{ji} \theta_j\frac{\partial u_j}{\partial x^s_j}
    = \beta^s, \quad \forall i \in \mN
\end{equation*}
where $\beta^s$ is the Lagrangian multiplier of the budget constraint, i.e., the shadow value associated with the attainability constraint for good $s$. Rewrite this in matrix form:
\begin{equation}\label{eq.plannerfoc}
    \beta^s \mathbf{1} = [\mathbf{I}_n + \phi^s (\bG^s)^\intercal] 
    \left(\theta_1\frac{\partial u_1}{\partial x^s_1},\ldots, \theta_n\frac{\partial u_n}{\partial x^s_n}  \right)^\intercal
    \implies
    \theta_i\frac{\partial u_i}{\partial x^s_i} = \beta^s \widetilde{b}^s_i,
\end{equation}
where $\widetilde{b}^s_i$ is the (transposed) unweighted K-B centrality. Then at an efficient allocation~$\hat{\bX}$,
\begin{equation*}
    \mathit{MRS}^{st}_i(\hat{\bX}) = \frac{\beta^s\widetilde{b}^s_i}{\beta^t\widetilde{b}^t_i}.
\end{equation*}

For the First Welfare Theorem to hold, the planner's MRS must also be equalized across consumers, which requires $\widetilde{b}^s_i / \widetilde{b}^t_i  = \widetilde{b}^{s}_j / \widetilde{b}^{t}_j$.
This is equivalent to $\widetilde{\mathbf{b}}^s \parallelsum \widetilde{\mathbf{b}}^t$ for any $s,t \in \mS$. Moreover, recall that the equilibrium first-order condition~\eqref{eq.foc} is given by
\begin{equation}\label{eq.foceqm}
    \mu_i \frac{\partial u_i}{\partial x^s_i} = p^s.
\end{equation}
Comparing this condition with~\eqref{eq.plannerfoc}, and using the parallel condition, we can choose multipliers and Pareto weights so that $\bX^*$ satisfies the planner’s KKT conditions. It follows that the competitive equilibrium is efficient. In particular, the associated Pareto weight vector is given by $\btheta^* = \mathbf{b} \odot \bmu^*$, where $\mathbf{b}$ is a representative vector proportional to $\widetilde{\mathbf{b}}^s$ for any $s \in \mS$.

If  the condition in \eqref{eq.parallel} is violated, a centrality-based wedge emerges between the private MRS and the planner’s MRS. In particular, the planner’s MRS cannot be equalized across individuals at the efficient allocation. This wedge implies that the equilibrium is not Pareto efficient, and can therefore be exploited to construct a Pareto improvement.

\begin{corollary}[Inefficiency]\label{cor.notparallel}
    Given an interior equilibrium $(\bX^*, \bp^*)$, if the centrality parallel condition~\eqref{eq.parallel} fails,  there exist an attainable allocation that strictly Pareto dominates~$\bX^*$. 
\end{corollary}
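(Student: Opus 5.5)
We will construct a strictly Pareto-improving feasible direction at $\bX^*$ and then take a small step along it. Let $d\bX=(d\bx^s)_{s\in\mS}$ with $\mathbf{1}^\intercal d\bx^s=0$ for every $s$, so that $\bX^*+\varepsilon\, d\bX\in\mathcal{A}$ for small $\varepsilon>0$; nonnegativity is automatic because $\bX^*$ is interior. Since each $u_i$ is $C^1$ near $\bX^*$, it suffices to find such a $d\bX$ with $du_i:=\nabla u_i(\bX^*)\cdot d\bX>0$ for all $i$. A first-order Taylor expansion then gives $u_i(\bX^*+\varepsilon d\bX)>u_i(\bX^*)$ for all $i$ once $\varepsilon$ is small.

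\emph{Computing $du_i$.} From \eqref{eq.q}, $\partial u_i/\partial q_i^s=\alpha^s/q_i^{s*}=p^{s*}/\mu_i^*$. Writing $d\mathbf{q}^s=(\mathbf{I}_n+\phi^s\bG^s)d\bx^s=(\bM^s)^{-1}d\bx^s$, we get
\[
du_i=\frac{1}{\mu_i^*}\sum_{s\in\mS}p^{s*}\,[(\bM^s)^{-1}d\bx^s]_i .
\]
Set $\mathbf{y}^s:=(\bM^s)^{-1}d\bx^s$, which ranges freely over $\mathbb{R}^n$ as $d\bx^s$ does. The feasibility constraint $\mathbf{1}^\intercal d\bx^s=0$ becomes $\mathbf{1}^\intercal\bM^s\mathbf{y}^s=(\widetilde{\mathbf{b}}^s)^\intercal\mathbf{y}^s=0$. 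The problem therefore reduces to finding vectors $\mathbf{y}^s$ with
\[
(\widetilde{\mathbf{b}}^s)^\intercal\mathbf{y}^s=0 \ \ \forall s, \qquad \sum_s p^{s*}\mathbf{y}^s\gg\mathbf{0}.
\]

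\emph{Duality step.} By Farkas'/Gordan's theorem of the alternative, this system is infeasible if and only if there exist $\btheta\ge\mathbf{0}$, $\btheta\ne\mathbf{0}$, and scalars $c^s$ with $p^{s*}\btheta=c^s\widetilde{\mathbf{b}}^s$ for every $s$. To see this, take the linear map $(\mathbf{y}^s)_s\mapsto\sum_s p^{s*}\mathbf{y}^s$ restricted to the subspace $\bigcap_s\{(\widetilde{\mathbf{b}}^s)^\intercal\mathbf{y}^s=0\}$. Its image is a subspace $V$, and $V$ fails to meet $\mathbb{R}^n_{++}$ if and only if some nonzero $\btheta\ge\mathbf{0}$ lies in $V^\perp$. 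The condition $\btheta\in V^\perp$ means $p^{s*}\btheta$ annihilates every $\mathbf{y}^s$ orthogonal to $\widetilde{\mathbf{b}}^s$, that is, $p^{s*}\btheta\in\mathrm{span}(\widetilde{\mathbf{b}}^s)$.

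Since $p^{s*}>0$ and $\btheta\ne\mathbf{0}$, each $c^s\ne0$, and hence $\widetilde{\mathbf{b}}^s\parallelsum\btheta$ for all $s$. This gives $\widetilde{\mathbf{b}}^s\parallelsum\widetilde{\mathbf{b}}^t$ for all $s,t$. Consequently, when the centrality parallel condition \eqref{eq.parallel} fails, the alternative cannot hold, so the desired $(\mathbf{y}^s)$ exists. Setting $d\bx^s=\bM^s\mathbf{y}^s$ yields the improving direction.

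\emph{Where care is needed.} The main obstacle is stating the alternative theorem correctly for a subspace meeting the open orthant; Gordan's version handles this, and I would verify it directly as above through $V^\perp$. A constructive alternative is also available. Pick $s,t$ with $\widetilde{\mathbf{b}}^s\not\parallelsum\widetilde{\mathbf{b}}^t$. Choose $\mathbf{y}^s=\mathbf{z}-\kappa\mathbf{a}$ and $\mathbf{y}^t=-\mathbf{z}'$ satisfying the two orthogonality constraints. Then $p^{s*}\mathbf{y}^s+p^{t*}\mathbf{y}^t$ is a combination of two free directions, which span $\mathbb{R}^n$ minus a codimension condition; this works only because the two orthogonal complements are distinct hyperplanes, whose sum is all of $\mathbb{R}^n$. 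That last observation yields a short proof: if $H^s\ne H^t$ then $H^s+H^t=\mathbb{R}^n$, so we can pick $\mathbf{y}^s\in H^s$ and $\mathbf{y}^t\in H^t$ with $p^{s*}\mathbf{y}^s+p^{t*}\mathbf{y}^t=\mathbf{1}$, and set the other $\mathbf{y}$'s to zero. Either route completes the proof.
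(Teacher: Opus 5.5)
Your proof is correct. Its skeleton matches the paper's: a first-order reallocation, the change of variables $\mathbf{y}^s=(\bM^s)^{-1}d\bx^s$ (the paper's $\hat{\btau}^s$), and, in your main route, a theorem of the alternative whose dual forces all $\widetilde{\mathbf{b}}^s$ to be parallel.

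The real difference is where you use the first-order condition. The paper keeps the raw marginal utilities (the diagonal matrices $\mathbf{Q}^{s*},\mathbf{Q}^{t*}$) in the primal system. It invokes $\alpha^t q^{s*}_i/(\alpha^s q^{t*}_i)=p^{t*}/p^{s*}$ only inside the dual, so it needs Farkas' lemma to conclude. You substitute $\alpha^s/q^{s*}_i=p^{s*}/\mu^*_i$ at the outset. After rescaling by $\mu^*_i$, the set of attainable first-order utility changes is then exactly the subspace $\sum_{s}H^s$, with $H^s:=(\widetilde{\mathbf{b}}^s)^{\perp}$.

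Your closing observation, that two distinct hyperplanes through the origin sum to $\mathbb{R}^n$, buys several things:
\begin{itemize}
\item It removes the need for any theorem of the alternative.
\item It gives an explicit improving direction by solving $p^{s*}\mathbf{y}^s+p^{t*}\mathbf{y}^t=\mathbf{1}$.
\item It proves more than the corollary: when the condition fails, \emph{every} direction of first-order utility change is attainable, not just a strictly positive one.
\item It makes the converse transparent. Under the parallel condition, $\sum_s H^s$ collapses to the single hyperplane orthogonal to $\mathbf{b}$ (with $\mathbf{b}\parallelsum\widetilde{\mathbf{b}}^s$ for all $s$). Its normal, rescaled by $\se$, is exactly the Pareto weight $\btheta^*=\mathbf{b}\odot\se$ of Theorem~\ref{thm.efficiency}(i1).
\item The same reduction applies verbatim to Theorem~\ref{appthm.efficiency}, where the equilibrium condition $\mu_i v^s_i=p^s$ has the same form.
\end{itemize}

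The only thing to tidy is the middle of your last paragraph, with the $\mathbf{z}-\kappa\mathbf{a}$ construction and the phrase about a codimension condition. It is vague and can be dropped in favor of the hyperplane-sum argument.
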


The proof proceeds by construction. We explicitly construct a small reallocation of consumption for goods \(s\) and \(t\) in the directions of \(\boldsymbol{\tau}^s\) and \(\boldsymbol{\tau}^t\), respectively. 
The existence of such directions of redistributions, \(\boldsymbol{\tau}^s\) and \(\boldsymbol{\tau}^t\), that make every consumer strictly better off is guaranteed precisely when the centrality parallel condition \eqref{eq.parallel} fails for goods \(s\) and \(t\). This follows from an application of Farkas’ Lemma. The detailed argument is provided in the proof of Corollary~\ref{cor.notparallel}.

\medskip 

Let us now study the Second Welfare Theorem. The construction is the converse of that for the First Welfare Theorem. Consider any interior efficient allocation $\hat{\bX}$. Setting initial endowments equal to $\hat{\bX}$, we show that this allocation can be supported by a competitive equilibrium under suitable supporting prices. When the parallel condition \eqref{eq.parallel} holds, equations \eqref{eq.plannerfoc} and \eqref{eq.foceqm} can be used to construct $\mu_i$ and $p^s$ in reverse, as in the proof of the First Welfare Theorem. Hence, the Second Welfare Theorem holds. If the parallel condition fails, then the Second Welfare Theorem fails as well, since no competitive equilibrium is efficient.

All three arguments in Theorem~\ref{thm.efficiency} shows that the centrality parallel condition~\eqref{eq.parallel} is a necessary and sufficient condition for the efficiency of the competitive equilibrium.


\begin{lemma}[Two special cases]\label{lemma.symmetry}
    The centrality parallel condition holds if either
    \begin{enumerate}
        \item[(i)] $(\bG^s)^\intercal$ is regular for all $s \in \mS$;\footnote{That is, all agents have the same indegrees. Formally, $(\bG^s)^\intercal \mathbf{1} \parallelsum \mathbf{1}$.} or
        
        \item[(ii)] $\phi^s \bG^s$ is identical for all $s \in \mS$ (so that $\widetilde{\mathbf{b}}^s = \widetilde{\mathbf{b}}^t$).
    \end{enumerate}
\end{lemma}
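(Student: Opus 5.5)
Both parts follow from the definition $\widetilde{\mathbf{b}}^s = (\bM^s)^\intercal \mathbf{1} = [\mathbf{I}_n + \phi^s (\bG^s)^\intercal]^{-1}\mathbf{1}$, which is well defined under Assumption~\ref{assump.interior}. We simply compute $\widetilde{\mathbf{b}}^s$ in each case and check that all of them are parallel to a common vector. No appeal to Theorem~\ref{thm.efficiency} is needed; the lemma is purely about the vectors $\widetilde{\mathbf{b}}^s$.

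For part (ii), suppose $\phi^s \bG^s = \phi^t \bG^t$ for all $s,t$. Then $\mathbf{I}_n + \phi^s \bG^s$ is the same matrix for every layer, so $\bM^s = \bM^t$. Hence $\widetilde{\mathbf{b}}^s = \widetilde{\mathbf{b}}^t$, and identical vectors are trivially parallel. Condition~\eqref{eq.parallel} therefore holds.

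For part (i), suppose $(\bG^s)^\intercal \mathbf{1} = d^s \mathbf{1}$ for some scalar $d^s \ge 0$, that is, all in-degrees in layer $s$ equal $d^s$. Then $[\mathbf{I}_n + \phi^s (\bG^s)^\intercal]\mathbf{1} = (1+\phi^s d^s)\mathbf{1}$, so $\mathbf{1}$ is an eigenvector. The key step is to show $1+\phi^s d^s \neq 0$, so that we can invert:
\begin{equation*}
\widetilde{\mathbf{b}}^s = \frac{1}{1+\phi^s d^s}\,\mathbf{1}.
\end{equation*}
Nonvanishing follows from invertibility of $\mathbf{I}_n + \phi^s (\bG^s)^\intercal$, guaranteed by Assumption~\ref{assump.interior}. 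One can also check it directly: $d^s \le (n-1)\bar g^s$ and $|\phi^s| < \ubar{\eta}/((n+1)\bar g)$ with $\ubar{\eta} \le 1/n$, so $|\phi^s d^s| < 1$ and indeed $1+\phi^s d^s > 0$. Alternatively, summing the Neumann series $\sum_k (-\phi^s (\bG^s)^\intercal)^k \mathbf{1} = \sum_k (-\phi^s d^s)^k \mathbf{1}$ gives the same formula. Thus every $\widetilde{\mathbf{b}}^s$ is a positive multiple of $\mathbf{1}$, and any two are parallel.

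The only step needing care is the sign and nonvanishing of $1+\phi^s d^s$ in part (i), and that is handled by the bound above. The rest is immediate.
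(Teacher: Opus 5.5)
Your proof is correct and follows the paper's own reasoning. The paper gives no separate appendix proof, but its discussion after the lemma and its regular-network computation $\mathbf{1}^\intercal \bM^s = \mathbf{1}^\intercal/(1+\phi^s d^s)$ in the Online Appendix rest on your two observations: $\mathbf{1}$ is an eigenvector of $(\bG^s)^\intercal$, so $\widetilde{\mathbf{b}}^s \parallelsum \mathbf{1}$, and identical $\phi^s\bG^s$ gives identical $\bM^s$. Your explicit check that $1+\phi^s d^s>0$ under Assumption~\ref{assump.interior}, with $d^s=0$ allowed for empty layers, is a small but useful addition.
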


Case $(i)$ generalizes Example~II. Regularity implies that consumers are symmetric in their network positions within each layer, so that $\widetilde{\mathbf{b}}^s \parallelsum  \widetilde{\mathbf{b}}^t \parallelsum \mathbf{1}$. Importantly, this condition allows for heterogeneity across layers. Consumers may be connected to different neighbors across goods. Regularity only requires that, within each layer, all consumers have the same (weighted) number of neighbors.

Case (ii) captures a different form of symmetry across goods. Goods are symmetric in the sense that the externalities induce the same Leontief inverse matrix, that is, $\bM^s=\bM^t \equiv \bM$ for all $s,t \in \mS$. Hence, while the underlying networks may be irregular, they must coincide across layers. In this case, equation~\eqref{eq.mudetermine} yields
\begin{equation*}
    \frac{\bM \se}{\mathbf{1}^\intercal \bM \se} = \sum_{s \in \mS} \alpha^s \boldsymbol{\eta}^s = \frac{\bx^s}{\bar{\omega}^s},
\end{equation*}
Therefore, equilibrium consumption shares coincide with those in Remark~\ref{remark.benchmark}, namely the benchmark economy without externalities. This symmetry condition is in the same spirit as \cite{arrow2009conspicuous} analysis of conspicuous consumption and leisure, where competitive equilibrium can be Pareto optimal when relative-consumption and leisure comparison enter the utility function symmetrically.

Observe that the conditions in parts $(i)$ and $(ii)$ of Lemma~\ref{lemma.symmetry} are sufficient but not necessary. Indeed, some network structures satisfy neither condition in Lemma~\ref{lemma.symmetry} yet still lead to an efficient equilibrium. For example, consider the following two undirected networks:\footnote{Other  examples can be constructed  using core-periphery networks. }
    \begin{figure}[ht]
        \centering
        \begin{tikzpicture}[scale=0.8]
            \coordinate (a1) at (0,0.5);
            \coordinate (b1) at (-1.5,1.5);
            \coordinate (c1) at (1,1.5);
            \coordinate (d1) at (2.5,0.5);
            
            \def\layergap{2.2}
            
            \coordinate (a2) at ($(a1)+(2.5*\layergap,0)$);
            \coordinate (b2) at ($(b1)+(2.5*\layergap,0)$);
            \coordinate (c2) at ($(c1)+(2.5*\layergap,0)$);
            \coordinate (d2) at ($(d1)+(2.5*\layergap,0)$);

            \draw[line width=1.4pt] (b1) -- (a1);
            \draw[line width=1.4pt] (a1) -- (d1);
            \draw[line width=1.4pt] (d1) -- (c1);
            \foreach \p/\lab in {a1/1,b1/2,c1/3,d1/4}{
              \draw[fill=white, line width=0.9pt] (\p) circle (5pt);
              \node at (\p) {\small\bfseries $\lab$};
            }
            \node at (0.5,-0.2) {good 1: $\phi^1>0$};
            
            \draw[line width=1.4pt] (a2) -- (b2);
            \draw[line width=1.4pt] (b2) -- (c2);
            \draw[line width=1.4pt] (c2) -- (d2);
            \foreach \p/\lab in {a2/1,b2/2,c2/3,d2/4}{
              \draw[fill=white, line width=0.9pt] (\p) circle (5pt);
              \node at (\p) {\small\bfseries $\lab$};
            }
            \node at (6,-0.2) {good 2: $\phi^2<0$};

        \end{tikzpicture}
    \end{figure}

 For good~1, we obtain $\widetilde{b}^1_1 / \widetilde{b}^1_2 = 1 - \phi^1$, while for good~2, $\widetilde{b}^2_1 / \widetilde{b}^2_2 = \frac{1}{1 - \phi^2}$. Hence, the centrality parallel condition holds if $(1 - \phi^1)(1 - \phi^2) = 1$; for instance, this is satisfied when $\phi^1 = 0.2$ and $\phi^2 = -0.25$.

\subsection{Measurement of efficiency}

This section quantifies how efficient the competitive equilibrium is. The first measure is utility-based. Fix the equilibrium allocation $\bX^*$ and a Pareto weight vector $\btheta$. We measure the efficiency loss at $\btheta$ by the maximal weighted utility gain that can be obtained by moving from the equilibrium allocation to the Pareto frontier:
\begin{equation*}
    L(\btheta):= \max_{\bX \in \mathcal{A}} \btheta^\intercal \mathbf{u}(\bX) - \btheta^\intercal \mathbf{u}(\bX^*).
\end{equation*}

The second measure is resource-based and follows the coefficient of resource utilization (CRU) in \cite{debreu1951coefficient}. Instead of asking how much utility can be gained with the given resources, it asks how much of the aggregate endowment vector is sufficient to attain the equilibrium utility profile. Formally, define
\begin{equation}\label{eq.efficiencycoef}
    \begin{gathered}
        \mathcal{CRU} := \min_{\bX \geq \mathbf{0},\gamma \in [0,1]} \gamma\\
        s.t. \ \sum_{i \in \mN} x^s_i = \gamma \bar{\omega}^s, \ \forall s \in \mS,\\
        u_i(\bX) \geq u_i(\bX^*), \forall i \in \mN.
    \end{gathered}
\end{equation}
The coefficient $\mathcal{CRU}$ is therefore the smallest proportional amount of resources needed to reproduce the equilibrium utility profile. The closer $\mathcal{CRU}$ is to one, the more efficiently resources are used at the competitive equilibrium. In particular, $\mathcal{CRU}=1$ if and only if the equilibrium is Pareto efficient.

We now characterize these two measures. Define the simplex $\Delta := \{\mathbf{z} \in \mathbb{R}^n_{+} : \mathbf{1}^\intercal \mathbf{z} = 1\}$, and denote its interior by $\Delta^o$. For each good $s \in \mS$, define the layer-specific vector
\begin{equation*}
    \brho^s := \frac{\se \odot \widetilde{\mathbf{b}}^s}{(\se)^\intercal \widetilde{\mathbf{b}}^s} \in \Delta^o.
\end{equation*}
It can be interpreted as the distribution generated by centrality-weighted effective endowments in layer $s$. For any $\mathbf{y}, \mathbf{z} \in \Delta^o$, Kullback-Leibler (KL) divergence $D_{KL} (\mathbf{z} \,\|\, \mathbf{y}) := \sum_i z_i \ln \frac{z_i}{y_i}$, which is nonnegative and equals zero if and only if $\mathbf{y} = \mathbf{z}$.

\begin{proposition}[Characterization of the efficiency measure]\label{prop.efficiencyloss}
Suppose externalities are sufficiently small, given an interior equilibrium:
\begin{enumerate}
    \item[(i)] Let $\btheta \in \Delta^o$ be a Pareto weight vector that supports an interior efficient allocation,
    \begin{equation*}
        L(\btheta) 
        = 
        \sum_{s \in \mS} \alpha^s  
        D_{KL} \left( \btheta \,\|\, \boldsymbol{\rho}^s \right).
    \end{equation*}
    \item[(ii)] The CRU satisfies
    \begin{equation*}
        \ln(\mathcal{CRU})
        = - \min_{\btheta \in \Delta}\sum_{s \in \mS} \alpha^s  
        D_{KL} \left( \btheta \,\|\, \brho^s \right) .
    \end{equation*}
    \item[(iii)] The CRU has the following bounds:
    \begin{equation} \label{eq-L-bounds}
        - \max_{s,t \in \mS} D_{KL} \left( \brho^s \,\|\, \brho^t \right) 
        \leq
        \ln(\mathcal{CRU})
        \leq
        -\frac{1}{2} \sum_{s, t \in \mS} 
        \alpha^s \alpha^t H(\brho^s, \brho^t )^2 .
    \end{equation}
    where $H$ denotes the Hellinger distance.\footnote{$H(\mathbf{y}, \mathbf{z} ):=  \sqrt{\sum_{i \in \mN}\left(\sqrt{y_i}-\sqrt{z_i} \right)^2}$ for $\mathbf{y},\mathbf{z} \in \Delta$ (See, for example, \citealp{tsybakov2008introduction})}
\end{enumerate}
    
\end{proposition}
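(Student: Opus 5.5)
The plan is to compute both measures in closed form from the characterizations already available, namely Theorem~\ref{thm.interior} for the equilibrium and Lemma~\ref{lemma.efficienttoplanner} for the frontier. Everything should reduce to comparing two families of utility profiles indexed by points of the simplex.

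\textbf{Step 1: utility profiles.} By Lemma~\ref{lemma.efficienttoplanner}, for $\btheta\in\Delta^o$ the efficient allocation has effective consumption $\hat{\mathbf q}^s=(\bM^s)^{-1}\hat{\bx}^s=\bar\omega^s\,\btheta\odot(\widetilde{\mathbf b}^s)^{-1}$. Hence
\[
\hat u_i(\btheta)=\sum_{s}\alpha^s\ln\bigl(\bar\omega^s\theta_i/\widetilde b^s_i\bigr).
\]
From \eqref{eq.equiu}, using $b^s(\se)=(\se)^\intercal\widetilde{\mathbf b}^s$, we get $\bar\omega^s\mu^*_i/b^s(\se)=\bar\omega^s\rho^s_i/\widetilde b^s_i$. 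Therefore
\[
\hat u_i(\btheta)-u^*_i=\sum_s\alpha^s\ln(\theta_i/\rho^s_i).
\]
For (i), we use that $\btheta$ supports an interior efficient allocation, which is where smallness of externalities is used. Concavity then gives that the maximum in $L(\btheta)$ is attained at $\hat\bX(\btheta)$. Weighting the difference by $\btheta$ and summing over $i$ yields $\sum_s\alpha^s D_{KL}(\btheta\|\brho^s)$.

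\textbf{Step 2: the CRU.} With resources $\gamma\bar{\boldsymbol\omega}$, the problem is identical after rescaling. Efficient profiles become $\ln\gamma+\hat u_i(\btheta)$, and every feasible utility profile is weakly dominated by one of these. Here I would use concavity and the planner representation, and note that boundary $\btheta$ give $-\infty$ utility for some agent, so they can be discarded. Hence $\gamma$ is feasible iff there is $\btheta\in\Delta^o$ with $\ln\gamma+\sum_s\alpha^s\ln(\theta_i/\rho^s_i)\ge0$ for all $i$. Set $g_i:=\prod_s(\rho^s_i)^{\alpha^s}$. The optimal choice equalizes the constraints, $\theta_i=g_i/\sum_j g_j$, which gives
\[
\mathcal{CRU}=\sum_i g_i .
\]
Separately, $\sum_s\alpha^sD_{KL}(\btheta\|\brho^s)=\sum_i\theta_i\ln(\theta_i/g_i)$. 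This is a KL divergence against the unnormalized vector $\mathbf g$, so it is minimized at the same $\btheta\propto\mathbf g$ with value $-\ln\sum_i g_i$. This proves (ii). I expect the main obstacle to be the dominance claim: every allocation using $\gamma\bar{\boldsymbol\omega}$ is weakly dominated by some interior efficient $\hat\bX(\btheta)$ computed with scaled resources. This needs the Pareto-frontier lemma applied to the scaled economy together with interiority under small externalities.

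\textbf{Step 3: bounds.} For the lower bound, take $\btheta=\brho^t$ in (ii) for any $t$. This gives
\[
\ln\mathcal{CRU}\ge-\sum_s\alpha^sD_{KL}(\brho^t\|\brho^s)\ge-\max_{s,t}D_{KL}(\brho^s\|\brho^t).
\]
For the upper bound, start from $\ln x\le x-1$, so $\ln\mathcal{CRU}\le\sum_i g_i-1$. By weighted AM--GM, $g_i=\bigl(\prod_s(\sqrt{\rho^s_i})^{\alpha^s}\bigr)^2\le(\sum_s\alpha^s\sqrt{\rho^s_i})^2$. Summing over $i$ gives $\sum_i g_i\le\sum_{s,t}\alpha^s\alpha^t\sum_i\sqrt{\rho^s_i\rho^t_i}$. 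Since $H(\brho^s,\brho^t)^2=2-2\sum_i\sqrt{\rho^s_i\rho^t_i}$ and $\sum_{s,t}\alpha^s\alpha^t=1$, we obtain
\[
\sum_i g_i-1\le-\tfrac12\sum_{s,t}\alpha^s\alpha^tH(\brho^s,\brho^t)^2,
\]
which is \eqref{eq-L-bounds}.
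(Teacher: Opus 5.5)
Your Step 1 and the lower bound in Step 3 are the same as the paper's argument. Where you differ is in how you obtain $\mathcal{CRU}=\sum_i g_i$ and in the Hellinger upper bound.

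\textbf{Part (ii).} The paper never goes through the Pareto frontier. It shows directly that every feasible $(\bX,\gamma)$ satisfies $\gamma\ge\sum_i g_i$. The constraint $u_i(\bX)\ge u_i^*$ reads $\prod_s(\widetilde b^s_i q^s_i/\bar\omega^s)^{\alpha^s}\ge g_i$. Weighted AM--GM together with $\mathbf 1^\intercal\bx^s=(\widetilde{\mathbf b}^s)^\intercal\mathbf q^s=\gamma\bar\omega^s$ then gives $\sum_i g_i\le\sum_s\alpha^s\gamma=\gamma$. The bound is attained at $q^s_i=g_i\bar\omega^s/\widetilde b^s_i$, and small externalities are used only to ensure $\bM^s\mathbf q^s\ge\mathbf 0$ there.

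Your route instead rests on the dominance claim you flagged, and as stated it is too strong. When some $\phi^s>0$, efficient allocations can sit at corners; the contract curve in Example I has a segment with $y_1=0$. So not every feasible profile is dominated by an interior $\hat\bX(\btheta)$, and the Pareto-frontier lemma says nothing about those corner allocations.

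What you actually need is weaker, and it can be supplied:
\begin{itemize}
\item Only allocations with $u_i(\bX)\ge u_i^*$ matter. For these, each $q^s_i$ is bounded below by a constant determined by $u_i^*$ and total resources.
\item By contrast, $x^s_i=0$ would force $q^s_i\le|\phi^s|\bar g\bar\omega^s$. Hence for small $|\phi^s|$ every such allocation is interior.
\item A compactness argument then gives an efficient allocation dominating it, which is also interior, so the lemma applies.
\item The ``if'' direction also needs $\hat\bX(\btheta)\ge\mathbf 0$ at $\btheta\propto\mathbf g$, exactly as in the paper.
\end{itemize}
So your approach works, but it costs an interiority-plus-compactness argument that the paper's one-line inequality avoids. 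Equivalently, if you drop $\bX\ge\mathbf 0$, your family is exactly the relaxed frontier, and the AM--GM inequality is what proves that. Your Gibbs-inequality minimization (KL against the unnormalized $\mathbf g$) is a tidier substitute for the paper's KKT computation.

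\textbf{Upper bound in (iii).} The paper uses $D_{KL}(\btheta\|\brho^s)\ge H(\btheta,\brho^s)^2$ inside the variational formula. It then applies the decomposition $\sum_s\alpha^s\|\sqrt{\btheta}-\sqrt{\brho^s}\|^2=\|\sqrt{\btheta}-\sum_s\alpha^s\sqrt{\brho^s}\|^2+\frac12\sum_{s,t}\alpha^s\alpha^tH(\brho^s,\brho^t)^2$ and drops the first term. You work directly on the closed form, using $\ln x\le x-1$ and AM--GM on $\sqrt{\rho^s_i}$. This is self-contained, needing no KL--Hellinger inequality, and it even gives the slightly sharper $\mathcal{CRU}\le 1-\frac12\sum_{s,t}\alpha^s\alpha^tH(\brho^s,\brho^t)^2$.
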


Proposition~\ref{prop.efficiencyloss} (i) shows that the efficiency loss can be written as a measure of dissimilarity between the planner's Pareto weight vector $\btheta$ and the layer-specific weight vectors $\{\boldsymbol{\rho}^s\}_{s \in \mS}$ induced by the equilibrium. The loss is large when agents who receive high Pareto weights under $\btheta$ receive low equilibrium-induced weights in some $\brho^s$.

Proposition~\ref{prop.efficiencyloss} (ii) establishes the duality between the $\mathcal{CRU}$ and $L$. The $\mathcal{CRU}$, defined from the resource-minimization problem, is negatively related to the minimal utility-based efficiency loss. Using standard inequalities between statistical distances, we obtain its lower and upper bounds depending only on the dispersion among the layer-specific weights $\{\brho^s\}_{s \in \mS}$. When the centrality parallel condition holds, we have $\brho^s=\brho^t = \brho$ for all $s,t \in \mS$. Both bounds in \eqref{eq-L-bounds} are then equal to 0, which implies $\mathcal{CRU}=1$. Hence, the equilibrium is efficient. This corresponds exactly to Theorem~\ref{thm.efficiency}.

Conversely, when the centrality parallel condition fails, the layer-specific weights no longer coincide. The distance terms in \eqref{eq-L-bounds} become positive, so the bounds reveal a strictly positive loss of resource utilization. Each $\brho^s$ has two components. The first is the K-B centrality $\widetilde{\mathbf{b}}^s$, which is determined by the network structure of layer $s$. The second is the effective endowment $\se$, which captures the general-equilibrium effect of all layers. Hence, the bounds measure the dispersion of centrality-weighted effective endowment distributions across layers. This highlights the role of multiplexity in shaping the magnitude of equilibrium inefficiency.\footnote{Other measures of multiplexity and their implications for diffusion in networks are studied, for example, by \cite{chandrasekhar2024multiplexing}.}

\paragraph*{Example IV}
Consider an economy with four consumers and $m$ goods. In this example, we consider the cases $m=2$ and $m=3$. Endowments and preferences are symmetric: each consumer is endowed with $0.25$ units of each good, and all goods receive equal preference weight, $\alpha^s \equiv 1/m$. We set $\phi^s \equiv \phi$ for all goods, with $\phi=0.2$ or $0.3$.

Suppose that good 1 is associated with an unweighted symmetric star network, denoted by $\bG^1 = \mathbf{S}_4$. The network for good 2 is a convex combination of the star network and its complement: $\bG^2 = (1-\beta) \mathbf{S}_4+\beta \mathbf{S}^c_4$ with $\beta \in [0,1]$. When $m=3$, we assume that good 3 has an empty network. The cosine similarity between $\bG^1$ and $\bG^2$ decreases with $\beta$, so $\beta$ indexes the extent to which the two nontrivial network layers become more multiplexed.

\begin{figure}[ht]
    \begin{subfigure}{0.45\textwidth}
        \centering
        \includegraphics[width=1\textwidth]{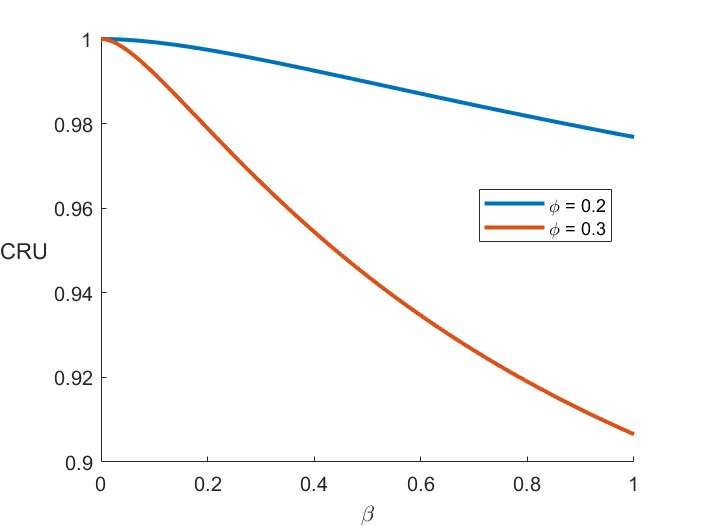}
        \caption{Two goods}
        \label{fig.efficiency2}
    \end{subfigure}
    \hfill
    \begin{subfigure}{0.45\textwidth}
        \centering
        \includegraphics[width=1\textwidth]{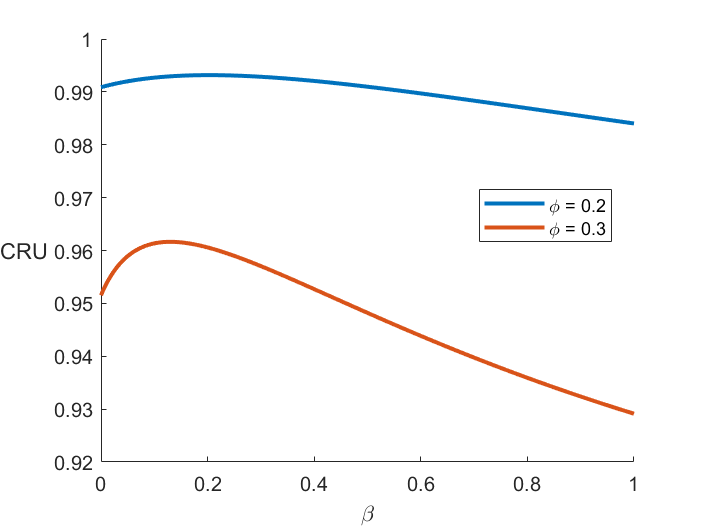}
        \caption{Three goods}
        \label{fig.efficiency3}
    \end{subfigure}
    \caption{The values of CRU for different values of $\beta$.}
    \label{fig.efficiency}
\end{figure}


Figure~\ref{fig.efficiency2} shows that, when there are only two goods, efficiency is monotone in multiplexity. As $\beta$ increases, the two network structures become less aligned, and the equilibrium becomes less efficient. In particular, the two networks are the same when $\beta=0$, and thus $\mathcal{CRU}=1$ and the equilibrium is efficient. By contrast, when an additional private good is introduced, the equilibrium is never efficient. Moreover, efficiency initially increases even though the networks for goods 1 and 2 become less similar. As $\beta$ increases further, however, efficiency decreases and reaches its lowest level when $\beta=1$, so that the networks for goods 1 and 2 are complements.

\subsection{Generalizing the utility function}\label{sec_gen_ut}

Let us generalize the utility function \eqref{eq.utility}, which is a Cobb-Douglas utility function. We now assume that consumer $i$'s utility from an allocation $\bX$ is given by
\begin{equation*}
    u_i(\bX) := v_i\bigl((q_i^s(\bX))_{s \in \mS}\bigr).
\end{equation*}
where  $q_i^s(\bX) = x_i^s + \phi^s \sum_{j \in \mN} g_{ij}^s x_j^s$ (see equation \eqref{eq_eff_cons}) and $v_i: \mathbb{R}^{|\mS|} \to \bar{\mathbb{R}}$ with $\bar{\mathbb{R}} := \mathbb{R} \cup \{-\infty\}$.

In Online Appendix~\ref{appsec.general}, we show that, under Assumption~\ref{appassump.utility},\footnote{The  CES specification   where $v_i(\mathbf{q}^s)=\sum_{s\mS}\alpha^s(q_i^s)^\rho$ satisfies Assumption~\ref{appassump.utility} for any $\rho\in(0,1)$.} a competitive equilibrium exists and is interior provided that $|\phi^s|$ is sufficiently small (Theorem~\ref{appthm.exist}). We further show that, under the same assumption, an interior equilibrium is efficient if and only if the centrality  parallel condition~\eqref{eq.parallel} holds (Theorem~\ref{appthm.efficiency}). Thus, Theorem~\ref{thm.efficiency} is robust to a very general class of utility functions.


\section{Comparative statics analysis}\label{sec_CS}

We now turn to comparative statics with a special focus on the effects of endowment changes.\footnote{Comparative statics with respect to other parameters, such as the layer weights $\alpha^s$, externality intensities $\phi^s$ or the network links $g_{ij}^s$, are carried out in Online Appendix \ref{appsec.cs}, but are omitted here for brevity.} Since the endowment shares $\{\boldsymbol{\eta}^{s}\}_{s \in \mS}$ enter the matrix $\overline{\bM}$ determining $\se$, it is convenient to fix the directions of the change in order to obtain a well-defined derivative. Formally, an endowment change is characterized by a collection of direction vectors $\{\btau^s\}_{s \in \mS}$, where each $ \btau^s \in \mathbb{R}^{n}$, and by a scalar magnitude $l \in \mathbb{R}$. The new endowment for each good $s$ is given by $\w^s(l) = \w^s+l\btau^s$.


Under this definition, all equilibrium objects are well-defined functions of $l$ in an open neighborhood of 0. Since the effective endowment vector is only determined up to a positive scalar multiple, we fix a reference solution $\bmu^{*0}$ at $l=0$;  hence, we also  fix a reference price $\bp^{*0}$. Consider a differentiable selection $\bmu(l)$ passing through $\bmu^{*0}$, for any equilibrium object $\mathbf z(l)$, define
\begin{equation*}
    \left. \dot{\mathbf{z}}  := \frac{\partial \mathbf{z}(l)}{\partial l} \right|_{l=0}
\end{equation*}



The following proposition derives our main comparative statics results on relative prices and welfare.\footnote{In the proof, we also derive the effects on consumptions . We omit them here for brevity.}


\begin{proposition}[Comparative statics]\label{prop.comparative}
    If Assumptions \ref{assump.interior} and \ref{assump.rank} hold, an endowment change has the following effects.

    \begin{enumerate}
        \renewcommand{\labelenumi}{(\roman{enumi})}
        \item Price effect 
        \begin{equation}\label{eq.priceeffect}
            \frac{\dot{p}^{s*}}{p^{s*}} - \frac{\dot{p}^{t*}}{p^{t*}}
            =
            \underbrace{
            \frac{b^s(\dot{\bmu}^*)}{b^s(\bmu^{*0})}
            -
            \frac{b^t(\dot{\bmu}^*)}{b^t(\bmu^{*0})}
            }_{\text{redistribution effect}}
            \underbrace{
            -
            \frac{\mathbf{1}^\intercal\btau^s}{\bar{\omega}^s}
            +
            \frac{\mathbf{1}^\intercal\btau^t}{\bar{\omega}^t}
            }_{\text{aggregate effect}}.
        \end{equation}


        \item Welfare effect
        \begin{equation}\label{eq.welfareeffect}
            \dot{u}_i^*
            =
            \underbrace{
            \frac{\dot{\mu}^*_i}{\mu^{*0}_i}
            }_{ \text{income effect} }
            -
            \underbrace{
            \sum_{s \in \mS}\alpha^s
            \frac{b^s(\dot{\bmu}^*)}{b^s(\bmu^{*0})}
            }_{\text{price effect}}
            +
            \underbrace{
            \sum_{s \in \mS}
            \alpha^s\frac{\mathbf{1}^\intercal\btau^s}{\bar{\omega}^s}
            }_{\text{aggregate effect}},
        \end{equation}

    \end{enumerate}
    where $\dot{\bmu}^*$ solves
    \begin{equation}\label{eq.dotmu}
       \overline{\bM}\dot{\bmu}
        =
        \underbrace{
        \sum_{s \in \mS} p^{s*}(\btau^s-\boldsymbol{\eta}^s \mathbf{1}^\intercal\btau^s)
        }_{=:\tilde{\btau}}
        . 
    \end{equation}
\end{proposition}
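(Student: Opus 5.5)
The plan is to differentiate the closed-form characterization of Theorem~\ref{thm.interior} along the path $\w^s(l)=\w^s+l\btau^s$. Under Assumptions~\ref{assump.interior} and~\ref{assump.rank}, Lemma~\ref{lemma.unique} tells us that the interior equilibrium is locally unique, with $\se(l)=\bH(l)^{-1}\w^1(l)$ once we normalize $p^1=1$. Since $\bH(l)$ depends smoothly on $l$ through $\etab^s(l)$ and stays invertible near $l=0$, the implicit function theorem gives a differentiable selection $\bmu(l)$ through $\bmu^{*0}$. Any other normalization only rescales $\bmu$ and changes $\dot{\bmu}$ by a multiple of $\bmu^{*0}$. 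This does not affect the price ratios, and I will check that it also drops out of $\dot u_i^*$: a shift $\dot\bmu\to\dot\bmu+c\bmu^{*0}$ adds $c$ to the income term and $c\sum_s\alpha^s=c$ to the price term, so the two cancel. Throughout I hold $\bM^s$ fixed, because only endowments move.

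For part (i), I take logs of \eqref{eq.equip}: $\ln p^{s*}=\ln\alpha^s+\ln b^s(\bmu)-\ln\bar\omega^s(l)$. Here $b^s(\cdot)$ is linear and $\bar\omega^s(l)=\bar\omega^s+l\,\mathbf 1^\intercal\btau^s$. Differentiating gives
\[
\dot p^{s*}/p^{s*}=b^s(\dot\bmu^*)/b^s(\bmu^{*0})-\mathbf 1^\intercal\btau^s/\bar\omega^s,
\]
and subtracting the same expression for $t$ yields \eqref{eq.priceeffect}.

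For part (ii), I differentiate \eqref{eq.equiu} term by term: $\ln\bar\omega^s$ gives the aggregate effect, $\ln\mu_i$ gives $\dot\mu_i^*/\mu_i^{*0}$, and $-\ln b^s(\bmu)$ gives the price effect. This produces \eqref{eq.welfareeffect} directly.

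The main step is deriving \eqref{eq.dotmu}. I will not differentiate $\overline{\bM}(l)\bmu(l)=\mathbf 0$ in the raw form \eqref{eq.mudetermine}; instead I go back to the budget-constraint form it came from,
\[
\sum_s p^s\w^s=\sum_s\alpha^s\bM^s\bmu,
\]
where $p^s=\alpha^s b^s(\bmu)/\bar\omega^s$. Substituting $\w^s=\bar\omega^s\etab^s$ turns the left side into $\sum_s\alpha^s\etab^s\mathbf 1^\intercal\bM^s\bmu$, which is what gives $\overline{\bM}$.

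Differentiating the product $p^s(l)\w^s(l)$ gives $\dot p^s\w^s+p^s\btau^s$. Using the log-derivative from part (i),
\[
\dot p^s\w^s=p^s\bar\omega^s\etab^s\Bigl[\tfrac{\mathbf 1^\intercal\bM^s\dot\bmu}{b^s(\bmu)}-\tfrac{\mathbf 1^\intercal\btau^s}{\bar\omega^s}\Bigr]=\alpha^s\etab^s\mathbf 1^\intercal\bM^s\dot\bmu-p^s\etab^s\mathbf 1^\intercal\btau^s .
\]
Collecting terms, I get
\[
\sum_s\alpha^s\etab^s\mathbf 1^\intercal\bM^s\dot\bmu+\sum_s p^{s*}\bigl(\btau^s-\etab^s\mathbf 1^\intercal\btau^s\bigr)=\sum_s\alpha^s\bM^s\dot\bmu,
\]
which rearranges to $\overline{\bM}\dot\bmu=\tilde\btau$. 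The delicate point is the bookkeeping: I will evaluate $\etab^s$ and $p^{s*}$ at $l=0$, and I need to confirm that the $\dot p$ terms reassemble exactly into $\overline{\bM}\dot\bmu$.

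I also need to confirm solvability. Because $\mathbf 1^\intercal\tilde\btau=\sum_s p^{s*}(\mathbf 1^\intercal\btau^s-\mathbf 1^\intercal\btau^s)=0$, $\tilde\btau$ lies in the range of $\overline{\bM}$. That range is the orthogonal complement of $\mathbf 1$, since $\mathbf 1^\intercal\overline{\bM}=\mathbf 0^\intercal$ and $\overline{\bM}$ has rank $n-1$ by Assumption~\ref{assump.rank}. So \eqref{eq.dotmu} has solutions, unique up to adding multiples of $\bmu^{*0}$, and by the earlier remark these do not affect the stated formulas.
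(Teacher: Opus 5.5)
Your proposal is correct and follows essentially the paper's proof: differentiate the closed forms for $p^{s*}$ and $u_i^*$, differentiate the equation that pins down $\bmu$, and check that the formulas are invariant to adding multiples of $\bmu^{*0}$ to $\dot{\bmu}$. Two small differences are worth noting. Your budget-constraint equation is just $\overline{\bM}(l)\bmu(l)=\mathbf{0}$ before substituting $\w^s=\bar{\omega}^s\etab^s$, so your product-rule bookkeeping reproduces the paper's $-\dot{\overline{\bM}}\bmu^{*0}$. Your use of $\bH(l)^{-1}\w^1(l)$ for differentiability, and of the rank condition to identify the range of $\overline{\bM}$ with $\mathbf{1}^\perp$, are slightly more careful than the paper's treatment.
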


Since all equilibrium objects are functions of $\bmu^{*0}$, it suffices to characterize $\dot{\bmu}^*$ to derive the comparative statics. Recall that $\bmu^{*0}$ solves equation~\eqref{eq.mudetermine}. Differentiating that equation yields
\begin{equation*}
    \overline{\bM}\dot{\bmu} + \dot{\overline{\bM}} \bmu^{*0} = \mathbf{0},
\end{equation*}
and a direct calculation of $\dot{\overline{\bM}}$ gives equation~\eqref{eq.dotmu}.\footnote{This linear system has multiple solutions. In the proof, we show that expressions in~\eqref{eq.priceeffect} and~\eqref{eq.welfareeffect} are independent of the  selection of solution.}

Both the price effect and the welfare effect admit a decomposition into a redistribution component and an aggregate component. For each good $s$, the term $\btau^s-\boldsymbol{\eta}^s \mathbf{1}^\intercal\btau^s$ represents the redistribution effect of the endowment change. The vector $\tilde{\btau}$ captures the implied net redistribution and determines $\dot{\bmu}^*$. The aggregate component, in contrast, depends only on the change in total endowment, represented by $\mathbf{1}^\intercal\btau^s$. Therefore, if all $\btau^s$ are pure redistributions, that is, $\mathbf{1}^\intercal\btau^s=0$ for all $s \in \mS$, the aggregate components in~\eqref{eq.priceeffect} and~\eqref{eq.welfareeffect}  vanish.



For the price effect, we consider a transfer from consumer $i$ to consumer $j$. Fix a good $t$ and suppose that the endowment change is given by $\btau^t = \btau$, where $\tau_j=-\tau_i >0$ and $\tau_k=0$ for all $k \neq i,j$, while $\btau^s = \mathbf{0}$ for all $s \neq t$.\footnote{The effect of any redistribution will be a linear combination of such bilateral transfers.} By the independence of solution selection, we can choose $\bmu^{*0}=\bH^{-1} \w^1$ as in Lemma~\ref{lemma.unique}, and we can further show that $\dot{\bmu}^* = \bH^{-1} \btau$. Such selection implies $\dot{p}^{1*} = 0$. Hence, by equation \eqref{eq.priceeffect}, the induced (relative) price effect on good $s$ is
\begin{equation*}
    \frac{\dot{p}^{s*}}{p^{s*}} - \frac{\dot{p}^{1*}}{p^{1*}} = \frac{(\widetilde{\mathbf{b}}^s)^\intercal \bH^{-1}\btau}{b^s(\bmu^{*0})}.
\end{equation*}
This shows that the multiplex network structure plays a crucial role in determining the sign of the price effect. 

\begin{proposition}[Centrality and price effects]\label{prop.priceeffect}
    A transfer from consumer $i$ to consumer $j$ increases relative price $p^{s*}/p^{1*}$ if and only if $c_j^s > c_i^s$, where
    the generalized influence centrality
    \begin{equation*}
        \mathbf{c}^s := (\bH^\intercal)^{-1} \widetilde{\mathbf{b}}^s. 
    \end{equation*}
\end{proposition}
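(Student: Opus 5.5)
The plan is to combine the comparative statics formula of Proposition~\ref{prop.comparative} with the normalization of Lemma~\ref{lemma.unique}, exactly as in the discussion preceding the statement. The transfer is concentrated in a single good $t$, with $\btau^t=\btau$ satisfying $\tau_j=-\tau_i>0$ and $\btau^s=\mathbf{0}$ for $s\neq t$. This is a pure redistribution, since $\mathbf{1}^\intercal\btau=0$, so the aggregate term in \eqref{eq.priceeffect} vanishes. The relative price change therefore reduces to the redistribution term $b^s(\dot{\bmu}^*)/b^s(\bmu^{*0})-b^1(\dot{\bmu}^*)/b^1(\bmu^{*0})$. Because the footnote to Proposition~\ref{prop.comparative} states that this expression does not depend on which solution of \eqref{eq.dotmu} we select, we are free to choose a convenient one.

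The key step is to verify that the selection $\bmu(l)=\bH(l)^{-1}\w^1(l)$, which normalizes $p^{1*}(l)\equiv 1$, is differentiable and satisfies $\dot{\bmu}^*=\bH^{-1}\btau$. There are two cases.

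\begin{itemize}
\item If $t\neq 1$, then $\w^1$ is fixed, and $\bH(l)=\overline{\bM}(l)+\alpha^1\etab^1\mathbf{1}^\intercal\bM^1$ depends on $l$ only through $\etab^t(l)$. Differentiating $\bH(l)\bmu(l)=\w^1$ gives $\bH\dot{\bmu}=-\dot{\overline{\bM}}\bmu^{*0}$. Computing $\dot{\overline{\bM}}\bmu^{*0}=-\alpha^t\dot{\etab}^t\mathbf{1}^\intercal\bM^t\bmu^{*0}$, and using $\dot{\etab}^t=\btau/\bar\omega^t$ (because $\mathbf{1}^\intercal\btau=0$) together with $\alpha^t\mathbf{1}^\intercal\bM^t\bmu^{*0}/\bar\omega^t=p^{t*}$, yields $\bH\dot{\bmu}=p^{t*}\btau$.
\item If $t=1$, the perturbation enters both $\overline{\bM}$ and the correction term $\alpha^1\etab^1\mathbf{1}^\intercal\bM^1$. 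In the combination $\bH$ the $\etab^1$ terms cancel, so $\bH$ is independent of $l$, while $\w^1$ moves by $\btau$. Hence $\bH\dot{\bmu}=\btau=p^{1*}\btau$.
\end{itemize}

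In both cases $\dot{\bmu}^*=p^{t*}\bH^{-1}\btau$, with $p^{t*}>0$, so the positive factor $p^{t*}$ does not affect the sign. It will also be worth checking that this $\dot{\bmu}^*$ solves \eqref{eq.dotmu}, which is immediate from $\overline{\bM}=\bH-\alpha^1\etab^1\mathbf{1}^\intercal\bM^1$ combined with $\dot p^{1*}=0$. Along this selection $p^{1*}\equiv1$, so $\dot p^{1*}=0$, and by \eqref{eq.equip} we get $\dot p^{s*}/p^{s*}=(\widetilde{\mathbf{b}}^s)^\intercal\dot{\bmu}^*/b^s(\bmu^{*0})$.

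Substituting gives
\[
\frac{\dot p^{s*}}{p^{s*}}-\frac{\dot p^{1*}}{p^{1*}}=\frac{p^{t*}\,(\widetilde{\mathbf{b}}^s)^\intercal\bH^{-1}\btau}{b^s(\bmu^{*0})}=\frac{p^{t*}\,(\mathbf{c}^s)^\intercal\btau}{b^s(\bmu^{*0})}.
\]
The last equality uses $(\widetilde{\mathbf{b}}^s)^\intercal\bH^{-1}=((\bH^\intercal)^{-1}\widetilde{\mathbf{b}}^s)^\intercal=(\mathbf{c}^s)^\intercal$. The denominator $b^s(\bmu^{*0})=(\alpha^s)^{-1}\bar\omega^s p^{s*}$ is positive. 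Since $(\mathbf{c}^s)^\intercal\btau=\tau_j(c^s_j-c^s_i)$ with $\tau_j>0$, the relative price $p^{s*}/p^{1*}$, whose log-derivative is the left-hand side, increases if and only if $c^s_j>c^s_i$.

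The main obstacle is the differentiation in the key step, especially the bookkeeping of how $\etab^t$ appears inside $\bH$ and the case distinction $t=1$ versus $t\neq1$. I would handle this by differentiating $\bH(l)\bmu(l)=\w^1(l)$ directly, relying on the invertibility of $\bH$ from Lemma~\ref{lemma.unique} together with the implicit function theorem to justify differentiability.
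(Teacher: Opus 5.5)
Your proposal is correct and follows essentially the same route as the paper. Both arguments normalize $\bmu^{*0}=\bH^{-1}\w^1$, obtain $\dot{\bmu}^*=\bH^{-1}\tilde{\btau}=p^{t*}\bH^{-1}\btau$ with $\dot p^{1*}=0$, and read the sign off $(\mathbf{c}^s)^\intercal\btau=\tau_j(c^s_j-c^s_i)$. The only difference is that the paper checks directly that $\bH^{-1}\tilde{\btau}$ solves \eqref{eq.dotmu}, using $\mathbf{1}^\intercal\bH=\alpha^1\mathbf{1}^\intercal\bM^1$ and $\mathbf{1}^\intercal\tilde{\btau}=0$, whereas you differentiate $\bH(l)\bmu(l)=\w^1(l)$ along the normalized path, which costs you the case split between $t=1$ and $t\neq 1$.
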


Proposition \ref{prop.priceeffect} shows that a transfer affecting one good may
propagate across layers and change the equilibrium prices of other
goods. This channel is absent in the benchmark without
externalities. With multiplex networks, prices reflect heterogeneous aggregate centralities across layers. As a result, the direction of the induced price effect is determined by the layer-specific generalized influence centrality $\mathbf{c}^s$.\footnote{If all layers have the same network structure, relative prices do not change after endowment transfers, the same as the benchmark.} This further shows that inequality in network centrality may reinforce inequality in endowments. When resources are transferred from a less central consumer to a more central one, the induced increase in equilibrium prices further disadvantages the consumer who loses endowment.

\medskip

\cite{ghiglinoKeepingNeighborsSocial2010} study a setting closely related to ours, but with only two goods: one private (corresponding to an empty network) and one network good subject to social comparison (i.e., $\phi < 0$). Their framework is therefore a special case of ours. In Online Appendix \ref{appsec.examplegg}, we show formally how their results can be recovered using our $\bmu$-based approach. Moreover, in their setting, when externalities are small, the standard Katz--Bonacich centrality and the generalized influence centrality $\mathbf{c}^s$ induce the same ranking. Hence, the former is sufficient to determine the direction of price effects. By contrast, this may not hold under multiplexity with multiple network goods. The following example illustrates that multiplexity can even reverse the sign of the price effect.

\paragraph*{Example V}     Consider an economy with four consumers and three goods. Good~1 is private, while goods~2 and~3 generate network externalities with $\phi^2 = \phi^3 = -0.04$. The corresponding network structures are depicted in Figure~\ref{fig.network}. We assume symmetric endowments, $\omega_i^s = 0.25$ for all $s \in \mS$ and $i \in \mN$.
We fix the preference weight on good~1 at $\alpha^1 = 1/3$, and let $\alpha^2 = \sigma$ and $\alpha^3 = 2/3 - \sigma$, where $\sigma \in [1/3,\,2/3]$ varies to examine the role of multiplexity in equilibrium outcomes.

When $\sigma = 2/3$, we have $\alpha^3 = 0$,  which effectively eliminates the third good from the economy. The model therefore collapses to the framework of \cite{ghiglinoKeepingNeighborsSocial2010}. In this case, a transfer from consumer~3 to consumer~2 increases $p^{2*}/p^{1*}$, since consumer~2 is more K-B central than consumer~3 in network~2.

In the multiplex setting, i.e., when $0 < \sigma < 2/3$, a decrease in $\sigma$ raises the weight on layer~3. Numerical computations show that, under multiplexity, consumer~3 becomes more $c$-central than consumer~2, i.e., $c^2_3>c^2_2$, reversing the comparative statics: $p^{2*}/p^{1*}$ decreases rather than increases. Table~\ref{table.priceeffect} reports the generalized influence centrality $\mathbf{c}^2$ for good 2, along with the corresponding directional changes of $p^{2*}/p^{1*}$ for different values of $\sigma$. This example illustrates that multiplex interactions can overturn in some cases the comparative statics derived from \cite{ghiglinoKeepingNeighborsSocial2010}.

    \begin{figure}[htbp]
        \centering
    
        \begin{minipage}[t]{0.55\linewidth}
            \vspace{0pt}
            \centering
            \resizebox{\linewidth}{!}{%
            \begin{tikzpicture}[scale=0.9]
                
                \coordinate (a1) at (0,0.5);
                \coordinate (b1) at (-1.5,1.5);
                \coordinate (c1) at (1,1.5);
                \coordinate (d1) at (2.5,0.5);
                
                \def\layergap{2.2}
                
                \coordinate (a2) at ($(a1)-(0,\layergap)$);
                \coordinate (b2) at ($(b1)-(0,\layergap)$);
                \coordinate (c2) at ($(c1)-(0,\layergap)$);
                \coordinate (d2) at ($(d1)-(0,\layergap)$);
    
                \coordinate (a3) at ($(a2)-(0,\layergap)$);
                \coordinate (b3) at ($(b2)-(0,\layergap)$);
                \coordinate (c3) at ($(c2)-(0,\layergap)$);
                \coordinate (d3) at ($(d2)-(0,\layergap)$);
                
                \path[fill=gray!12, draw=none]
                  (-3, 2) -- (1, 2) -- (4,0) -- (0,0) -- cycle;
                \path[fill=gray!12, draw=none]
                  ($(-3, 2)-(0,\layergap)$) -- ($(1, 2)-(0,\layergap)$)
                  -- ($(4, 0)-(0,\layergap)$) -- ($(0, 0)-(0,\layergap)$) -- cycle;
                \path[fill=gray!12, draw=none]
                  ($(-3, 2)-(0,2*\layergap)$) -- ($(1, 2)-(0,2*\layergap)$)
                  -- ($(4, 0)-(0,2*\layergap)$) -- ($(0, 0)-(0,2*\layergap)$) -- cycle;
                
                \draw[line width=1.4pt] (a2) -- (b2)
                  node[pos=0.4, above, yshift=1mm, font=\small] {$1.01$};
                \draw[line width=1.4pt] (a2) -- (c2)
                  node[midway, right, font=\small] {$1$};
                \draw[line width=1.4pt] (a2) -- (d2)
                  node[midway, below, font=\small] {$1$};
                \foreach \p/\lab in {a2/1,b2/2,c2/3,d2/4}{
                  \draw[fill=white, line width=0.9pt] (\p) circle (5pt);
                  \node at (\p) {\small\bfseries $\lab$};
                }
                \node[anchor=west] at (2.5,-\layergap+1.2) {good 2: star};
                
                \draw[line width=1.4pt] (a3) -- (b3)
                  node[midway, left, yshift=-1mm, font=\small] {$1$};
                \draw[line width=1.4pt] (b3) -- (c3)
                  node[midway, above, font=\small] {$1$};
                \draw[line width=1.4pt] (c3) -- (d3)
                  node[midway, right, yshift=1mm, font=\small] {$1$};
                \foreach \p/\lab in {a3/1,b3/2,c3/3,d3/4}{
                  \draw[fill=white, line width=0.9pt] (\p) circle (5pt);
                  \node at (\p) {\small\bfseries $\lab$};
                }
                \node[anchor=west] at (2.5,-2*\layergap+1.2) {good 3: line};
    
                \foreach \p/\lab in {a1/1,b1/2,c1/3,d1/4}{
                  \draw[fill=white, line width=0.9pt] (\p) circle (5pt);
                  \node at (\p) {\small\bfseries $\lab$};
                }
                \node[anchor=west] at (2.5,1.2) {good 1: private};
                
            \end{tikzpicture}
            }
            \caption{Network structures for the three goods. Numbers on the edges indicate the weights of the undirected links.}
            \label{fig.network}
        \end{minipage}\hfill
        \begin{minipage}[t]{0.4\linewidth}
            \vspace{2em}
            \centering
            \captionof{table}{Multiplex centrality and price change}
            \label{table.priceeffect}
            
            \renewcommand{\arraystretch}{1.3}
            \resizebox{\linewidth}{!}{%
            \begin{tabular}{lllc}
                \toprule
                \multicolumn{1}{c}{$\sigma$} 
                & \multicolumn{1}{c}{$c^2_2$} 
                & \multicolumn{1}{c}{$c^2_3$} 
                & \multicolumn{1}{c}{$p^{2*}/p^{1*}$ change} \\
                \midrule
                $\frac{2}{3}$ & $3.1729^*$ & $3.1724$   &   $+$\\
                $\frac{1}{2}$ & $3.1734$   & $3.1737^*$ &   $-$\\
                $\frac{1}{3}$ & $3.1740$   & $3.1749^*$ &   $-$\\
                \bottomrule
            \end{tabular}
            }
        \end{minipage}
    \end{figure}

We now turn to the welfare effect, which is determined in \eqref{eq.welfareeffect}. The redistribution component can be decomposed into an income effect and a price effect. The first term captures the change in consumer $i$'s effective endowment $\mu^*_i$, and therefore represents the income effect. The second term is the weighted aggregation of the price effects across goods. It follows that consumer $i$ benefits from the endowment change if and only if the sum of her normalized income effect and the aggregate endowment effect dominates the price effect.

Moreover, differences in welfare effects across consumers are driven entirely by the income component, since the price effect and the aggregate endowment effect are common across consumers. In particular, for any two consumers $i$ and $j$,
\begin{equation*}
    \dot{u}_i^* - \dot{u}_j^*
    =
    \frac{\dot{\mu}^*_i}{\mu_i^{*0}}
    -
    \frac{\dot{\mu}^*_j}{\mu_j^{*0}}.
\end{equation*}
Hence, the ranking of consumers' welfare gains and losses is determined solely by their normalized changes of effective endowments.

Intuitively, a pure redistribution does not create additional resources. It only reallocates effective endowments across consumers while inducing a common price effect faced by all consumers. Hence, any gain generated through the income component for some consumers must be offset by losses for others. Redistribution can alter the distribution of welfare, but it cannot make all consumers better off.

\begin{corollary}[No Pareto-improving redistribution]\label{cor.infeasible}
    There exists a vector $\boldsymbol{\varpi}  \in \mathbb{R}^n_{++}$ such that, for any pure redistribution, $\boldsymbol{\varpi}^\intercal \dot{\mathbf{u}}^* = 0$. Therefore, any redistribution cannot generate a local Pareto improvement at the competitive equilibrium.
    
\end{corollary}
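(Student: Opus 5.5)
The plan is to exploit the linear structure of the welfare effect in Proposition~\ref{prop.comparative}. For a pure redistribution we have $\mathbf{1}^\intercal\btau^s=0$ for all $s$, so the aggregate term in \eqref{eq.welfareeffect} vanishes. Using $b^s(\dot{\bmu}^*)=(\widetilde{\mathbf{b}}^s)^\intercal\dot{\bmu}^*$, the welfare effect becomes linear in $\dot{\bmu}^*$:
\begin{equation*}
    \dot{\mathbf{u}}^* = \mathbf{A}\,\dot{\bmu}^*, \qquad
    \mathbf{A} := \mathrm{diag}\bigl(1/\mu^{*0}_1,\ldots,1/\mu^{*0}_n\bigr) - \mathbf{1}\,\mathbf{a}^\intercal, \qquad
    \mathbf{a} := \sum_{s\in\mS}\alpha^s\frac{\widetilde{\mathbf{b}}^s}{b^s(\bmu^{*0})}.
\end{equation*}
It therefore suffices to find a strictly positive left null vector $\boldsymbol{\varpi}$ of $\mathbf{A}$. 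Then $\boldsymbol{\varpi}^\intercal\dot{\mathbf{u}}^*=0$ holds for every $\dot{\bmu}^*$. This also sidesteps the non-uniqueness of solutions to \eqref{eq.dotmu}, since no property of $\dot{\bmu}^*$ beyond its existence is used.

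The condition $\boldsymbol{\varpi}^\intercal\mathbf{A}=\mathbf{0}^\intercal$ reads $\varpi_i/\mu^{*0}_i=(\boldsymbol{\varpi}^\intercal\mathbf{1})\,a_i$ for all $i$. I will guess the candidate
\begin{equation*}
    \varpi_i := \mu^{*0}_i\sum_{s\in\mS}\alpha^s\frac{\widetilde b^s_i}{b^s(\bmu^{*0})} = \sum_{s\in\mS}\alpha^s\rho^s_i,
\end{equation*}
where $\brho^s$ is the layer-specific vector defined before Proposition~\ref{prop.efficiencyloss}. The key check is consistency of the normalization. Since each $\brho^s$ lies in the simplex and $\sum_s\alpha^s=1$, we get $\boldsymbol{\varpi}^\intercal\mathbf{1}=1$, so the required identity holds exactly. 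As a sanity check, $\mathbf{A}\bmu^{*0}=\mathbf{1}-\mathbf{1}\sum_s\alpha^s=\mathbf{0}$, consistent with scale invariance of $\bmu$.

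Strict positivity of $\boldsymbol{\varpi}$ is the one point needing care. It follows because $\bmu^{*0}>\mathbf{0}$ by Theorem~\ref{thm.interior}, and because $\widetilde{\mathbf{b}}^s=(\bM^s)^\intercal\mathbf{1}>\mathbf{0}$ under Assumption~\ref{assump.interior}. The latter is the fact already used when asserting $\brho^s\in\Delta^o$. If one wants to verify it directly rather than cite that assertion, I would argue as follows. Write $\widetilde{\mathbf{b}}^s=\mathbf{1}-\phi^s(\bG^s)^\intercal\widetilde{\mathbf{b}}^s$. Bounding $\|\widetilde{\mathbf{b}}^s\|_\infty$ via the Neumann series, the perturbation is smaller than $1$ in each entry given $|\phi^s|(n-1)\bar g<1$. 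I expect this to be the main (though mild) obstacle.

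Finally, the "therefore" part follows immediately. Suppose some pure redistribution gave $\dot{\mathbf{u}}^*\ge\mathbf{0}$ with at least one strict inequality. Then $\boldsymbol{\varpi}^\intercal\dot{\mathbf{u}}^*>0$ because $\boldsymbol{\varpi}\in\mathbb{R}^n_{++}$. This contradicts $\boldsymbol{\varpi}^\intercal\dot{\mathbf{u}}^*=0$, so no local Pareto improvement via redistribution exists.
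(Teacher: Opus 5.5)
Your proof is correct and is essentially the paper's: the paper uses the same weight vector $\boldsymbol{\varpi}=\sum_{s}\alpha^s\brho^s$ and writes $\dot{\mathbf{u}}^*=[\mathbf{I}_n-\mathbf{1}\boldsymbol{\varpi}^\intercal]\{(\bmu^{*0})^{-1}\odot\dot{\bmu}^*\}$, which is exactly your factorization of $\mathbf{A}$, and it concludes from $\mathbf{1}^\intercal\boldsymbol{\varpi}=1$ just as you do. One small correction to your positivity check for $\widetilde{\mathbf{b}}^s$: the Neumann bound only gives perturbation entries of at most $x/(1-x)$ with $x=|\phi^s|(n-1)\bar g$, so you need $x<1/2$ rather than $x<1$; Assumption~\ref{assump.interior} delivers this comfortably, since $\ubar{\eta}\le 1/n$ gives $x<(n-1)/(n(n+1))<1/2$.
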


In the proof, we construct the positive vector
\begin{equation*}
    \boldsymbol{\varpi} = \sum_{s \in \mS} \alpha^s 
    \frac{\bmu^{*} \odot \widetilde{\mathbf{b}}^s}{(\bmu^{*})^\intercal \ \widetilde{\mathbf{b}}^s} \in \mathbb{R}^n_{++}
\end{equation*}
This vector can be interpreted as the welfare weights under which the gains and losses from any pure redistribution exactly balance out. Since all components of $\boldsymbol{\varpi}$ are strictly positive, the equality $\boldsymbol{\varpi}^\intercal \dot{\mathbf{u}}^* = 0$ precludes $\dot{\mathbf{u}}^* \geq \mathbf{0}$ with at least one strict inequality. Hence, no pure redistribution can be locally Pareto improving.
 
In particular, when the centrality parallel condition~\eqref{eq.parallel} holds, $\boldsymbol{\varpi} = \frac{\bmu^{*} \odot \mathbf{b}}{(\bmu^{*})^\intercal \ \mathbf{b}} \parallelsum \btheta^*$, the associated Pareto weight in Theorem~\ref{thm.efficiency} (i). Hence, the corollary is consistent with the First Welfare Theorem: under the centrality parallel condition, the competitive equilibrium utility vector already lies on the Pareto frontier, so no feasible redistribution can generate a Pareto improvement.

Finally, note that the key object behind both the price and welfare changes is $\dot{\bmu}^*$, which embeds within-layer network effects, cross-layer interactions, and the endowment distribution. This is precisely where the multiplex structure matters.

\section{Policy interventions: Lindahl equilibrium}\label{sec_policy}

When the centrality parallel condition fails, the equilibrium allocation is inefficient; hence, it can be improved through direct reallocation of consumption levels (Corollary~\ref{cor.notparallel}).
However, such centralized interventions may be impractical. In this section, we turn to Lindahl equilibrium (decentralized policy) and examine whether they can deliver Pareto improvements.

Following \cite{arrow1969organization}, one approach to addressing inefficiencies arising from externalities is to complete the missing markets associated with them. In particular, each consumer faces personalized prices for the consumption of other consumers’ goods, thereby internalizing the external effects generated by others’ consumption choices. These  consumptions  enter only the individual’s own utility and do not affect the utilities of other consumers. Under such a market structure, the resulting competitive equilibrium restores efficiency.

Following this insight, we define the extended price system and the extended consumption bundles as follows. An extended price system is $\vec{\mathbbm{P}} = \{ \vec{\lp},(\vec{\lp}_{ij})_{i,j \in \mN}\} \in \mathbb{R}^{|\mS|}_+ \times \mathbb{R}^{|\mS| \times |\mN|^2}$, where $\vec{\lp}$ is the price vector for physical goods, $\vec{\lp}_{ii}$ is the effective price paid by consumer~$i$ for her own consumption, and $\vec{\lp}_{ij}$ is the personalized price paid by $i$ for the externality generated by $j$'s consumption. The price $\lp^s_{ij}$ can be negative, which means consumer $i$ is compensated for the externality generated by consumer $j$’s consumption of good $s$. 

The extended price system is called compatible if
\begin{equation*}
    \vec{\lp}_{ii} = \vec{\lp} - \sum_{j \neq i} \vec{\lp}_{ji}, \ \forall i \in \mN.
\end{equation*}
This condition requires the effective price paid by consumer $i$ for her own consumption to be equal to the physical price net of the payments made by other consumers for the externalities generated by $i$'s consumption.

The extended consumption bundle of consumer $i$ is $\vec{\mathbbm{X}}_i = (\vec{\mathbbm{x}}_{ij})_{j \in \mN} \in \mathbb{R}_+^{|\mS| \times |\mN|}$, where $\vec{\mathbbm{x}}_{ii}$ is consumer $i$'s own consumption, and $\vec{\mathbbm{x}}_{ij}$ denotes $i$'s demand for the consumption vector of $j$. Given an extended price system, the budget set of consumer $i$ is
\begin{equation*}
    B_i(\vec{\mathbbm{P}}) = \left\{\vec{\mathbbm{X}}_i \in \mathbb{R}_+^{|\mS| \times |\mN|}: \sum_{j \in \mN} \vec{\lp}_{ij} \cdot \vec{\mathbbm{x}}_{ij}  \leq \vec{\lp} \cdot \w_i \right\}
\end{equation*}

\begin{definition}
    A Lindahl equilibrium consists of extended consumption bundles $\{ \vec{\mathbbm{X}}_i^*\}_{i \in \mN}$ and a compatible price system $\vec{\mathbbm{P}}^*$, such that:
    \begin{enumerate}
        \item[(i)] Utility maximization: for each consumer $i \in \mN$,
        \[
        \vec{\mathbbm{X}}^*_i \in \arg\max \{u_i(\vec{\mathbbm{X}}_i): \vec{\mathbbm{X}}_i \in B_i({\vec{\mathbbm{P}}^*})\}.
        \]

        \item[(ii)] Market clearing and consistency: there exists an attainable allocation $\vec{\mathbbm{X}}^* = (\vec{\mathbbm{x}}_j^*)_{j\in\mN} \in \mathcal{A}$ such that for every $i,j\in\mN$, $\vec{\mathbbm{x}}_{ij}^* = \vec{\mathbbm{x}}_j^* = \vec{\mathbbm{x}}_{jj}^*$.
    \end{enumerate}
    
\end{definition}
The consistency condition requires all consumers' demands for consumer $j$'s consumption to coincide with consumer $j$'s actual consumption.

\begin{proposition}[Lindahl equilibrium]\label{prop.lindahl}
    If $|\phi^s|$ are small enough for all $s \in \mS$, a Lindahl equilibrium with interior allocation exists and can be characterized as follows.
    \begin{enumerate}
        \renewcommand{\labelenumi}{(\roman{enumi})}
        \item The Lindahl prices (up to scaling) are given by
        \begin{align*}
            \lp^{s*} = \frac{\alpha^{s}}{\bar{\omega}^{s}}, \qquad
            \lp^{s*}_{ii} = \widetilde{b}^s_i \lp^{s*}, \qquad
            \lp^{s*}_{ij} = \phi^s g^s_{ij}\lp^{s*}_{ii}, ~~ \mbox{ for } j\neq i.
        \end{align*}
        
        \item The consumption profile for good $s$ is given by
        \begin{equation}\label{eq.lindahlx}
            \vec{\mathbbm{x}}^{s*} = \bar{\omega}^{s} \bM^{s} 
            \left\{ 
            (\widetilde{\mathbf{b}}^s)^{-1}
            \odot 
            \left(\sum_{s \in \mS} \alpha^s \etab^s \right) 
            \right\}.
        \end{equation}
        where $(\widetilde{\mathbf{b}}^s)^{-1} = (1/\tilde{b}^s_1,\ldots,1/\tilde{b}^s_n)^\intercal$.
    \end{enumerate}
\end{proposition}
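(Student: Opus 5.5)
The plan is guess-and-verify: take the prices in (i) and the allocation in \eqref{eq.lindahlx} as given, check that they satisfy the three requirements of a Lindahl equilibrium (compatibility, utility maximization, market clearing with consistency), and then use smallness of $|\phi^s|$ to get positivity and interiority.

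\textbf{Compatibility.} Write $\vec{\lp}^{s}_{\cdot\cdot}=(\lp^s_{11},\ldots,\lp^s_{nn})^\intercal$. Under the proposed form $\lp^s_{ji}=\phi^s g^s_{ji}\lp^s_{jj}$, compatibility reads $\lp^s_{ii}=\lp^s-\phi^s\sum_{j}g^s_{ji}\lp^s_{jj}$ for every $i$. In matrix form this is $[\mathbf{I}_n+\phi^s(\bG^s)^\intercal]\vec{\lp}^s_{\cdot\cdot}=\lp^s\mathbf{1}$, so $\vec{\lp}^s_{\cdot\cdot}=\lp^s(\bM^s)^\intercal\mathbf{1}=\lp^s\widetilde{\mathbf{b}}^s$. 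This is exactly the formula in (i), which also shows that the own-prices are pinned down once the physical prices $\lp^s$ are fixed. For $|\phi^s|$ small we have $\bM^s\approx\mathbf{I}_n$, hence $\widetilde{\mathbf{b}}^s\gg\mathbf{0}$ and every $\lp^s_{ii}>0$.

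\textbf{Utility maximization.} This is the step I expect to be the main obstacle. For $\phi^s<0$ the personalized prices $\lp^s_{ij}$ are negative, so the budget set in the extended space $\mathbb{R}_+^{|\mS|\times|\mN|}$ is unbounded, and it is not obvious that a maximizer exists. The key observation is that the cost of an extended bundle depends only on effective consumption:
\[
\lp^s_{ii}x^s_{ii}+\sum_{j\neq i}\phi^s g^s_{ij}\lp^s_{ii}x^s_{ij}=\lp^s_{ii}\,q^s_i .
\]
Hence consumer $i$'s problem reduces to $\max\sum_s\alpha^s\ln q_i^s$ subject to $\sum_s\lp^s_{ii}q^s_i\le m_i:=\vec{\lp}\cdot\w_i$ and $q^s_i>0$. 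This is a bounded Cobb--Douglas problem, and any extended bundle that attains the optimal $q_i$ is optimal. Its solution is $q^s_i=\alpha^s m_i/(\lp^s\widetilde b^s_i)$. In particular, the consistent bundle $x_{ij}=x_j$ is optimal for $i$ provided it generates this $q_i$.

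\textbf{Market clearing and consistency.} Take $\lp^s=\alpha^s/\bar\omega^s$. Then $m_i=\sum_t\alpha^t\eta^t_i$, and therefore $\mathbf{q}^s=\bar\omega^s\,(\widetilde{\mathbf{b}}^s)^{-1}\odot\bigl(\sum_t\alpha^t\etab^t\bigr)$. Setting $\vec{\mathbbm{x}}^{s}=\bM^s\mathbf{q}^s$ makes the effective consumptions consistent with actual consumptions and yields \eqref{eq.lindahlx}. Market clearing follows from
\[
\mathbf{1}^\intercal\bM^s\mathbf{q}^s=(\widetilde{\mathbf{b}}^s)^\intercal\mathbf{q}^s=\bar\omega^s\sum_i\sum_t\alpha^t\eta^t_i=\bar\omega^s ,
\]
using $\mathbf{1}^\intercal\etab^t=1$ and $\sum_t\alpha^t=1$.

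\textbf{Interiority and scaling.} For small $|\phi^s|$, $\bM^s$ is close to the identity and $\mathbf{q}^s\gg\mathbf{0}$, so $\vec{\mathbbm{x}}^s\gg\mathbf{0}$ and the allocation is interior and attainable. Finally, because the budget constraints are homogeneous of degree one in all prices, the prices are determined only up to a common positive scalar.
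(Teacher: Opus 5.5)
Your verification is correct as an existence argument, but it runs in the opposite direction from the paper's proof, and it leaves the characterization half of the statement unproved.

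\textbf{What the paper proves.} The paper's proof is a derivation. It starts from an arbitrary interior Lindahl equilibrium and proceeds in four steps:
\begin{enumerate}
\item The first-order conditions in $\mathbbm{x}^s_{ii}$ and $\mathbbm{x}^s_{ij}$ force $\lp^s_{ij}=\phi^s g^s_{ij}\lp^s_{ii}$.
\item Compatibility then gives $\lp^s_{ii}=\widetilde b^s_i\lp^s$.
\item Summing $\alpha^s\gamma_i=\lp^s\widetilde b^s_i\mathbbm{q}^s_i$ over $i$ and using market clearing gives $\alpha^s\mathbf{1}^\intercal\boldsymbol{\gamma}=\lp^s\bar\omega^s$.
\item The binding budgets give $\boldsymbol{\gamma}/\mathbf{1}^\intercal\boldsymbol{\gamma}=\sum_s\alpha^s\etab^s$.
\end{enumerate}
This shows that every interior Lindahl equilibrium has the stated prices, up to a common scalar, and the stated allocation. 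Sufficiency is left implicit; it follows from concavity of $u_i$ in the extended bundle and linearity of the budget.

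\textbf{What your proposal adds.} You supply the part the paper glosses over. With proportional personalized prices, the cost of an extended bundle is $\sum_s\lp^s_{ii}q^s_i$. This reduces each consumer's problem to a bounded Cobb--Douglas problem in $q_i$. It also removes the worry about an unbounded budget set when $\phi^s<0$.

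\textbf{The gap.} You posit the form $\lp^s_{ij}=\phi^s g^s_{ij}\lp^s_{ii}$ (``under the proposed form'') and you choose $\lp^s=\alpha^s/\bar\omega^s$; neither is derived. Your closing homogeneity remark shows only that positive rescalings of your prices also work, not that no other prices work. The statement asserts the prices are as given ``up to scaling'' and calls this a characterization, so you need the converse. It is short:
\begin{itemize}
\item In an interior Lindahl equilibrium, consistency gives $\mathbbm{x}^s_{ij}=\mathbbm{x}^s_j>0$ for all $i,j,s$, so consumer $i$'s optimum is interior in the extended space.
\item The budget constraint is linear, so the Lagrange condition holds: $\partial u_i/\partial\mathbbm{x}^s_{ij}=\lambda_i\lp^s_{ij}$ for all $j$.
\item $\lambda_i>0$ because $\partial u_i/\partial\mathbbm{x}^s_{ii}=\alpha^s/\mathbbm{q}^s_i>0$. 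Hence the budget binds, and comparing the $j\neq i$ and $j=i$ conditions yields the proportionality.
\item Compatibility, market clearing, and the binding budgets then pin down $\lp^s_{ii}$, the ratios $\lp^s/\lp^t$, and $\boldsymbol{\gamma}$ up to scale, exactly as in steps 2--4 above.
\end{itemize}
With this added, your verification and this derivation together give the full statement: existence, together with uniqueness of prices (up to a common scalar) and of the allocation.
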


Proposition~\ref{prop.lindahl} characterizes the interior Lindahl equilibrium in the presence of network spillovers. First, note that the relative prices of physical goods under the Lindahl scheme coincide with those in the competitive equilibrium without spillovers. In particular, the ratio $\lp^{s*}/\lp^{0*}$ is identical to the no‐externality benchmark. Hence, these relative prices depend only on aggregate endowments and not on their distribution across consumers.

The Lindahl prices internalize externalities in two ways. First, the own price, $\lp^{s*}_{ii} = \widetilde{b}^s_i \lp^{s*}$, is adjusted by the consumer’s centrality, so that more central consumers face higher prices for their own consumption. Second, the price attached to spillovers generated by others, $\lp^{s*}_{ij} = \phi^s g^s_{ij}\lp^{s*}_{ii}$, reflects the intensity of the network link. When $\phi^s > 0$, individual~$i$ is charged for consuming positive spillovers generated by $j$; when $\phi^s<0$, she is subsidized to offset negative externalities. It is therefore natural to expect the Lindahl equilibrium to be efficient. The next theorem confirms this intuition.

\begin{theorem}[Lindahl Equilibrium and Efficiency]
\label{thm.lindahl} ~ 
\begin{enumerate}
    \renewcommand{\labelenumi}{(\roman{enumi})}
    
    \item The Lindahl equilibrium characterized in Proposition \ref{prop.lindahl} is Pareto efficient. Conversely, any interior Pareto-efficient allocation can be decentralized as a Lindahl equilibrium under an appropriate redistribution of initial endowments.
    
    \item Although Pareto efficient, the Lindahl equilibrium characterized in Proposition \ref{prop.lindahl} need not constitute a Pareto improvement relative to the competitive equilibrium.
    
\end{enumerate}
\end{theorem}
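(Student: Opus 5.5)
For part (i), the plan is to compare the Lindahl allocation in \eqref{eq.lindahlx} with the closed form of interior Pareto-efficient allocations in Lemma~\ref{lemma.efficienttoplanner}. Set $\btheta^L := \sum_{s \in \mS}\alpha^s\etab^s$. This vector lies in $\Delta^o$ because each $\etab^s>\mathbf{0}$ sums to one and $\sum_s\alpha^s=1$, so $\mathbf{1}^\intercal\btheta^L=1$. Substituting $\btheta=\btheta^L$ into \eqref{eq.efficientx} gives exactly \eqref{eq.lindahlx}. Since the planner's problem \eqref{eq.pareto} is concave and the allocation is interior, it solves \eqref{eq.pareto} for the positive weight $\btheta^L$. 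Lemma~\ref{lemma.efficienttoplanner} then gives Pareto efficiency.

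For the converse, take any interior efficient allocation $\hat\bX$. By Lemma~\ref{lemma.efficienttoplanner} it equals $\hat\bx^s(\btheta)$ for some $\btheta\in\Delta^o$ after normalization. I would redistribute endowments so that the new shares satisfy $\sum_s\alpha^s\etab^s=\btheta$. The simplest choice is $\w^s:=\bar\omega^s\btheta$ for every $s$; this keeps aggregate endowments fixed and makes every $\etab^s=\btheta$. I would then rerun the construction of Proposition~\ref{prop.lindahl} with these endowments. The prices $\lp^{s*}$, $\lp^{s*}_{ii}$, $\lp^{s*}_{ij}$ depend only on aggregates and the networks, so they are unchanged. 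The resulting allocation is \eqref{eq.lindahlx} with $\btheta$ in place of $\btheta^L$, which is $\hat\bX$.

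To make this self-contained, I would verify the Lindahl conditions directly. First, compatibility: $\lp^s_{ii}+\sum_{j\neq i}\lp^s_{ji}=\lp^s\bigl(\tilde b^s_i+\phi^s\sum_j g^s_{ji}\tilde b^s_j\bigr)=\lp^s$, using $[\mathbf{I}+\phi^s(\bG^s)^\intercal]\widetilde{\mathbf b}^s=\mathbf{1}$. Second, consumer $i$'s FOC: she chooses $\mathbbm{x}^s_{ij}$ for all $j$, and $u_i$ depends on these only through $q^s_i$. This gives $\alpha^s/q^s_i=\lambda_i\lp^s_{ii}$ and $\alpha^s\phi^s g^s_{ij}/q^s_i=\lambda_i\lp^s_{ij}$, which are consistent with the stated price ratios. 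Third, the budget: with $\lambda_i=1/\theta_i$, one has $q^s_i=\bar\omega^s\theta_i/\tilde b^s_i$, matching $\bM^s$ applied to $\bar\omega^s(\widetilde{\mathbf b}^s)^{-1}\odot\btheta$. Expenditure equals $\sum_s\lp^s_{ii}q^s_i=\sum_s\alpha^s\theta_i=\theta_i$. Income is $\sum_s\lp^s\omega^s_i=\sum_s\alpha^s\eta^s_i$. Income equals expenditure exactly when $\btheta=\sum_s\alpha^s\etab^s$, which is both the consistency check for Proposition~\ref{prop.lindahl} and the redistribution rule above. Concavity of $\ln$ in $q^s_i$, which is linear in $i$'s extended bundle, makes the FOCs sufficient.

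For part (ii), a counterexample suffices, and I would use Example~I: two consumers, $\phi=0.7$ on good $y$, consumer 1 receiving spillovers from consumer 2. I would compute the competitive equilibrium from Theorem~\ref{thm.interior} and the Lindahl equilibrium from Proposition~\ref{prop.lindahl} in closed form, choosing endowments such as symmetric $\omega_i^s=1$. The aim is to show that consumer~1, who benefits for free from consumer 2's consumption of $y$ in the competitive equilibrium, is strictly worse off under Lindahl, where she pays $\lp^{y}_{12}>0$. The main obstacle is carrying this out cleanly: finding parameter values for which the utility comparison has an unambiguous sign. I would first try $\alpha=1/2$ and symmetric endowments, and if the algebra is messy, fall back on a limiting argument ($\phi\to$ large admissible values) or a direct numerical evaluation.
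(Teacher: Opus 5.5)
Your part (i) is the paper's argument. The Lindahl allocation \eqref{eq.lindahlx} is \eqref{eq.efficientx} evaluated at $\btheta=\sum_{s}\alpha^s\etab^s$, and for the converse one picks endowments with $\etab^s\parallelsum\btheta$ (your $\w^s=\bar\omega^s\btheta$). Your explicit check of compatibility, the first-order conditions and the budget identity is a useful addition. It makes the converse self-contained rather than re-invoking Proposition~\ref{prop.lindahl} for the new endowments.

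Part (ii) also follows the paper's route, a counterexample built on Example~I as in Example~VI. However, the parameters you name do not work. With symmetric endowments, $\btheta^L=(1/2,1/2)^\intercal$ and $\widetilde{\mathbf{b}}^y=(1,1-\phi)^\intercal$. Then \eqref{eq.lindahlx} gives $\mathbbm{q}^y_1=1$ and $\mathbbm{x}^y_2=1/(1-\phi)$, hence $\mathbbm{x}^y_1=(1-2\phi)/(1-\phi)$. This is negative at $\phi=0.7$, so the interior Lindahl equilibrium of Proposition~\ref{prop.lindahl} does not exist there and there is nothing to compare. You need $\phi<1/2$; the paper uses $\phi=0.4$ for exactly this reason.

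With that restriction, the obstacle you anticipate disappears, and you need neither numerics nor a limiting argument.
\begin{itemize}
\item Lindahl side: for every $\alpha\in(0,1)$, consumer~1's Lindahl effective consumptions are $\mathbbm{q}^x_1=\mathbbm{q}^y_1=1$, so $u_1^L=0$.
\item Competitive side: \eqref{eq.mudetermine} gives $\se\parallelsum(1+k,1)^\intercal$ with $k:=(1-\alpha)\phi$. Hence $b^x(\se)=2+k$ and $b^y(\se)=2-\alpha\phi$, and the equilibrium is interior with $y_1^*=2(1-\alpha\phi)/(2-\alpha\phi)>0$. By \eqref{eq.equiu},
\[
u_1^*=\alpha\ln\frac{2(1+k)}{2+k}+(1-\alpha)\ln\frac{2(1+k)}{2-\alpha\phi}>0 .
\]
\end{itemize}
So consumer~1 is strictly worse off under Lindahl pricing for all $\alpha\in(0,1)$ and $\phi\in(0,1/2)$. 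This gives a fully analytic counterexample, somewhat sharper than the numerical table the paper relies on.
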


Since interior efficient allocations and Lindahl equilibrium allocations are characterized by \eqref{eq.efficientx} and \eqref{eq.lindahlx}, respectively, it is straightforward to verify the equivalence between these two allocations. In particular, one can choose suitable endowments so that $\sum_{s \in \mS}\alpha^s\etab^s$ coincides with a given Pareto weight vector $\btheta$, and conversely choose Pareto weights corresponding to any given vector $\sum_{s \in \mS}\alpha^s\etab^s$.


Although the Lindahl equilibrium restores efficiency, it need not be fair, in the sense that some consumers may be worse off than under the competitive equilibrium. The following example illustrates this point.

\paragraph*{Example VI} 
We revisit the three examples in Section~\ref{sec_Examples} under Lindahl pricing.\footnote{Throughout, we set $\phi = 0.4$, which ensures that the Lindahl equilibrium is interior.}
In Figure~\ref{fig.lbox} (Example~I), the centrality parallel condition does not hold. As a result, the equilibrium allocation (black dot) is inefficient: it lies on the purple line (the equilibrium locus) rather than on the green line (the contract curve). The Lindahl equilibrium (yellow dot) restores efficiency; moreover, it  makes consumer~2 better off while rendering consumer~1 worse off (Theorem \ref{thm.lindahl}(ii)). When the centrality parallel condition holds, the competitive equilibrium is Pareto efficient, as illustrated in Figures~\ref{fig.lbox2} (Example~II) and \ref{fig.lbox3} (Example~III). Although the Lindahl equilibrium is also Pareto efficient, it may yield an allocation that makes some consumers worse off compared to the competitive equilibrium outcome. For example, in Figures~\ref{fig.lbox2}  and \ref{fig.lbox3}, consumer 1 is worse off.\footnote{In Online Appendix \ref{App_Lindahl}, we show that the Lindahl equilibrium leads to agents being worse off in the \cite{ghiglinoKeepingNeighborsSocial2010}'s model with one private good and one conspicuous  network good ($\phi^s<0$). }
    
    \begin{figure}[ht]
        \begin{subfigure}{0.32\textwidth}
            \centering
            \includegraphics[width=1\textwidth]{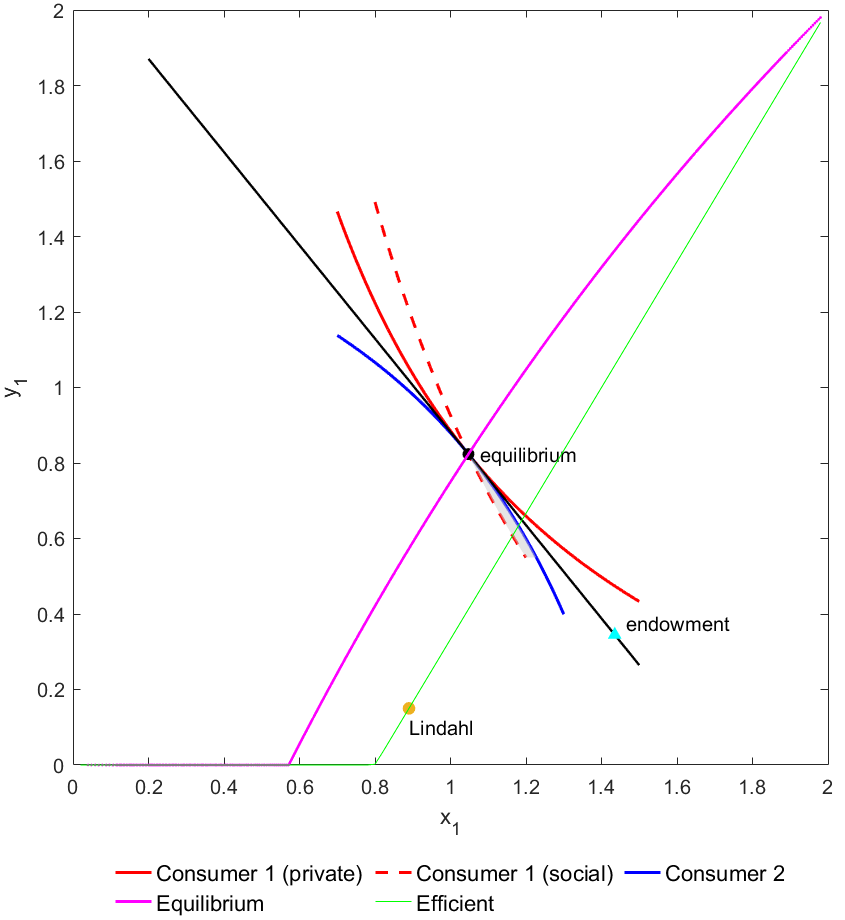}
            \caption{Asymmetric case}
            \label{fig.lbox}
        \end{subfigure}
        \hfill
        \begin{subfigure}{0.32\textwidth}
            \centering
            \includegraphics[width=1\textwidth]{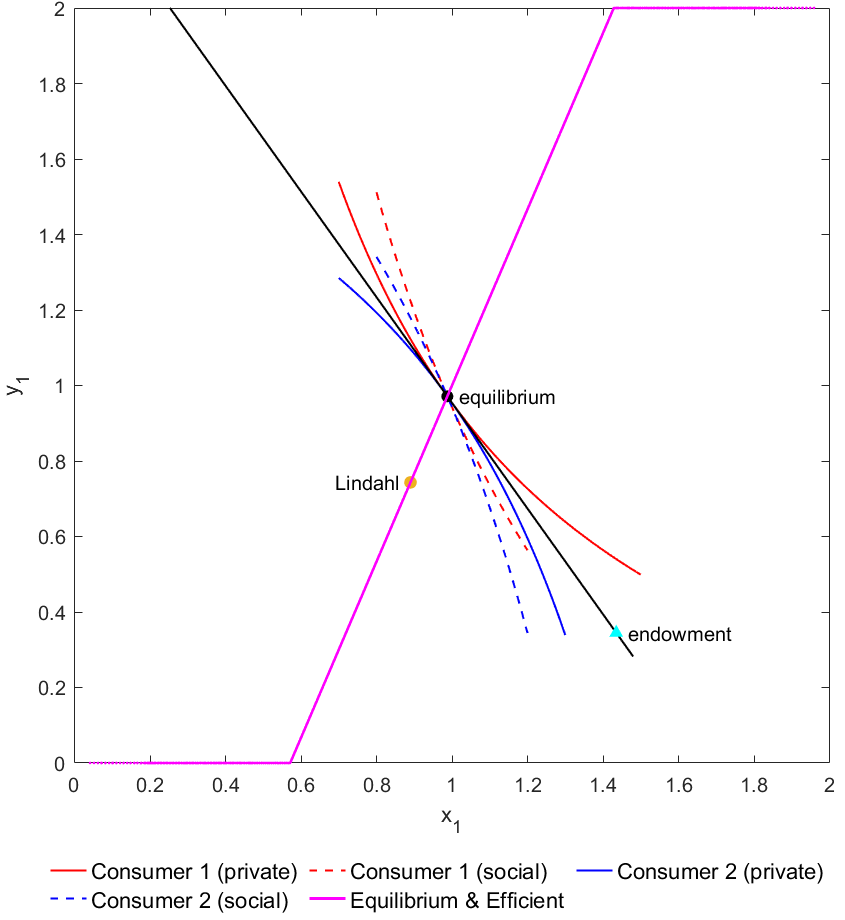}
            \caption{Symmetric consumers}
            \label{fig.lbox2}
        \end{subfigure}
        \hfill
        \begin{subfigure}{0.32\textwidth}
            \centering
            \includegraphics[width=1\textwidth]{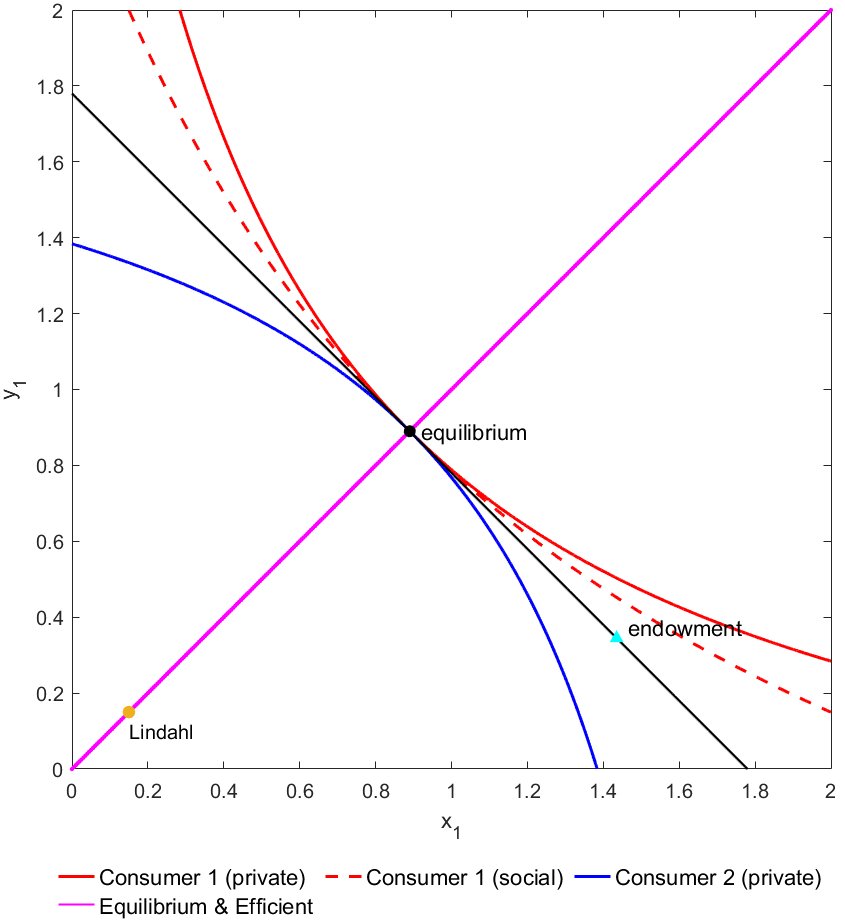}
            \caption{Symmetric goods}
            \label{fig.lbox3}
        \end{subfigure}
        \caption{Edgeworth boxes for Examples I-III with Lindahl equilibrium.}
        \label{fig.lindahl}
    \end{figure}

\medskip


\begin{remark}\label{remark.lindahl}
    \cite{elliottNetworkApproachPublic2019} study a one-dimensional public-good environment in which agents choose costly contribution levels that generate heterogeneous benefits for others. Our competitive-equilibrium framework differs from their cost-based setting in that agents trade physical goods in ordinary markets, while the externalities generated by consumption remain unpriced.

    The basic insight of Lindahl equilibrium is common to both settings: personalized prices for externalities should reflect the marginal benefits received by each consumer. The outcome, however, differs substantially. In their setting, Lindahl prices determine payments for agents' effort contributions. In our exchange economy, there are additional market prices for physical good. This distinction leads to a different characterization of Lindahl allocations and prices.

    Our utility specification also allows us to study the distributional effect of the Lindahl equilibrium. In particular, although the Lindahl equilibrium restores efficiency, it need not constitute a Pareto improvement relative to the original competitive equilibrium: some consumers may gain, while others may be made worse off.

\end{remark}

\section{Conclusion}\label{sec_conclusion}

In this paper, we incorporate multiplex network externalities into the Arrow--Debreu general equilibrium framework. Beyond establishing the existence and uniqueness of equilibrium, we characterize market outcomes, as well as network-based conditions for the (in)efficiency of the competitive equilibrium and for Lindahl outcomes.

Our analysis yields three main insights. First, the interaction between multiplex network structures and the allocation of endowments gives rise to a novel statistic, the ``effective endowment,'' which shapes equilibrium outcomes. 
Second, we provide a sharp topological characterization of the efficiency of the competitive equilibrium, given by the centrality parallel condition. 
Third, by completing missing markets, the Lindahl equilibrium restores efficiency but may leave some consumers worse off relative to the competitive equilibrium.




\bigskip


\begin{center}
    {\bf\huge Appendix: Proofs}
\end{center}


\bigskip

\begin{proof}[Proof of Lemma \ref{lemma.exist}]
It is clear that each $\mathcal{X}_i$ and $\mathcal{P}$ are compact. Also, $u_i$ is continuous and quasi-concave in $\bx_i$. Moreover, the correspondence $A_i$ is convex and compact valued with a closed graph. Next, we show $A_i$ is non-empty and continuous.\footnote{A correspondence is $A(\mathbf{z})$ is said to be continuous at $\mathbf{z}^{(0)}$ if for any $\hat{\mathbf{z}}^{(0)} \in A(\mathbf{z}^{(0)})$ and any sequence $(\mathbf{z}^{(k)})$ converging to $\mathbf{z}^{(0)}$, there exists a sequence $(\hat{\mathbf{z}}^{(k)})$ converging to $\hat{\mathbf{z}}^{(0)}$ such that for all $k$, $\hat{\mathbf{z}}^{(k)} \in A(\mathbf{z}^{(k)})$.} 

We first specify the value of $C$. If $\phi^s \geq 0$ for all $s \in \mS$, let $C=2$. Otherwise, let $C = \min\{\frac{\min_{i \in \mN} \eta_i^s}{(n-1)|\phi^s|\bar{g}^s}: \phi^s <0\}$, and by Assumption~\ref{assump.nonempty}, this is indeed larger than 1. For good $s$, if $\phi^s > 0$, then clearly $\w_i \in A_i(\bx_{-i},\bp)$ for any $(\bx_{-i},\bp)$. If $\phi^s <0$, we have 
\begin{equation}\label{appeq.nonempty}
    \begin{aligned}
        &\omega^{s}_i + \phi^s \sum_{j \in \mN} g^{s}_{ij} x^{s}_{j} 
        \geq
        \omega^{s}_i - |\phi^s|\bar{g}^s C(n-1)\sum_{k \in \mN}\omega_k^s\\
        =&  
        \left( \eta^s_i- |\phi^s|\bar{g}^s C(n-1) \right) \sum_{k \in \mN}\omega_k^s > 0,
    \end{aligned}
\end{equation}
where the first inequality follows from $x^s_j \leq C \sum_{k \in \mN} \omega^s_k$ and $g^s_{ij} \leq \bar{g}^s$. The last inequality is by the choice of $C$. Hence, we also have $\w_i \in A_i(\bx_{-i},\bp)$ for any $(\bx_{-i},\bp)$. Therefore, $A_i(\bx_{-i},\bp)$ is non-empty for any $(\bx_{-i},\bp)$.

Denote $f^s(\bx_{-i}) := \phi^s \sum_{j \neq i} g^s_{ij} x^s_j$. Take $\bx^{(0)}_i \in A_i(\bx^{(0)}_{-i},\bp^{(0)})$ and any sequence $(\bx^{(k)}_{-i}, \bp^{(k)})_{k \geq 1} \rightarrow (\bx^{(0)}_{-i},\bp^{(0)})$. According to equation (\ref{appeq.nonempty}), $\exists \delta>0$ such that $w_i^s-2\delta+f^s(\bx_{-i}) > 0$ for any $s$ and $\bx_{-i}$. Let $\bx^{(k)}_i := \lambda^{(k)} \bx^{(0)}_{i} + (1-\lambda^{(k)}) (\w_i-\delta\mathbf{1})$ with 
\begin{equation*}
    \lambda^{(k)} := \min
    \left\{1,
    \frac{\delta}{|(\bp^{(k)})^\intercal \bx^{(0)}_{i} - (\bp^{(k)})^\intercal \w_i + \delta|},
    \left( \frac{\delta}{\delta - x^{(0)s}_{i} - f^s(\bx^{(k)}_{-i}) } \right)_{s \in \hat{\mS}}
    \right\},
\end{equation*}
where $\hat{\mS} := \{s \in \mS: x^{(0)s}_i+f^s(\bx^{(0)}_{-i}) = 0 \}$. For $s \in \hat{\mS}$, $\lim_{k \rightarrow \infty} x^{(0)s}_i+f^s(\bx^{(k)}_{-i}) = x^{(0)s}_i+f^s(\bx^{(0)}_{-i}) = 0$. Then for sufficiently large $k$, we have $\lambda^{(k)} \in [0,1]$, and by convexity, $\bx^{(k)}_i \in \mathcal{X}_i$. Next, we show $\bx^{(k)}_i \in A_i(\bx^{(k)}_{-i},\bp^{(k)})$. 

The first inequality in the definition of $A_i$, $(\bp^{(k)})^\intercal \bx^{(k)}_i \leq (\bp^{(k)})^\intercal \w_i$, is equivalent to
\begin{equation}\label{appeq.budget}
    \lambda^{(k)} [(\bp^{(k)})^\intercal \bx^{(0)}_{i} - (\bp^{(k)})^\intercal \w_i + \delta] \leq \delta.
\end{equation}
which holds by the definition of $\lambda^{(k)}$. The second inequality is given by:
\begin{gather*}
    x^{(k)s}_i+ f^s(\bx^{(k)}_{-i}) 
    =
    \lambda^{(k)} (x^{(0)s}_{i} + f^s(\bx^{(k)}_{-i})) + (1-\lambda^{(k)}) (\omega^s_i-\delta + f^s(\bx^{(k)}_{-i}))
    \geq 0.
\end{gather*}
Since $\delta$ satisfies $\omega^s_i-\delta + f^s(\bx^{(k)}_{-i}) > \delta$, it suffices to have
\begin{equation}\label{appeq.nonnegative}
    \lambda^{(k)} (x^{(0)s}_{i} + f^s(\bx^{(k)}_{-i})) + (1-\lambda^{(k)}) \delta \geq 0
    \iff
    \delta \geq \lambda^{(k)}[\delta - x^{(0)s}_{i} - f^s(\bx^{(k)}_{-i})]
\end{equation}
For $s \in \hat{\mS}$, this inequality holds by the definition of $\lambda^{(k)}$. For $s \not\in \hat{\mS}$, $\lim_{k \rightarrow \infty} x^{(0)s}_i+f^s(\bx^{(k)}_{-i}) = x^{(0)s}_i+f^s(\bx^{(0)}_{-i}) > 0$. Then $x^{(0)s}_i+f^s(\bx^{(k)}_{-i}) > 0$ for sufficiently large $k$ and thus \eqref{appeq.nonnegative} holds. Therefore, we have $\bx^{(k)}_i \in A_i(\bx^{(k)}_{-i},\bp^{(k)})$. 

Moreover, since
\begin{align*}
    &\lim_{k \rightarrow \infty}(\bp^{(k)})^\intercal \bx^{(0)}_{i} - (\bp^{(k)})^\intercal \w_i + \delta = (\bp^{(0)})^\intercal \bx^{(0)}_{i} - (\bp^{(0)})^\intercal \w_i + \delta  \leq \delta,\\
    &\lim_{k \rightarrow \infty} \delta - x^{(0)s}_{i} - f^s(\bx^{(k)}_{-i}) 
    = \delta - x^{(0)s}_{i} - f^s(\bx^{(0)}_{-i})=\delta, \ s \in \hat{\mS}.
\end{align*}
Hence, we have $\lim_{k \rightarrow \infty} \lambda^{(k)} = 1$,\footnote{We can assume the first limit to be positive without loss of generality. Otherwise, \eqref{appeq.budget} holds automatically for large enough $k$ and we can drop the second term in the construction of $\lambda^{(k)}$.} that is, $\bx^{(k)}_i \rightarrow \bx^{(0)}_i$. Therefore, $A_i$ is continuous.

With all these conditions on $\mathcal{X}_i$, $\mathcal{P}$, $A_i$, $A_p$\footnote{$A_p$ is non-empty, convex, compact valued with a closed graph and continuous as it is constant.} and $u_i(\cdot)$, \cite{debreuSocialEquilibriumExistence1952} establishes the existence of an equilibrium for the abstract economy.
\end{proof}


\bigskip

\begin{proof}[Proof of Proposition \ref{prop.exist}]
\noindent\textbf{Existence:}  As stated in Lemma~\ref{lemma.exist}, $\bx^*_i \in A_i(\bx^*_{-i},\bp^*)$ and is the maximizer of $u_i$, which is strongly increasing in $\bx_i$. Then we have $(\bp^*)^\intercal (\bx^*_i - \w_i) = 0$, which sums up to $(\bp^*)^\intercal (\sum_{i \in \mN} \bx^*_i - \sum_{i \in N} \w_i) = 0$. Moreover, by Lemma~\ref{lemma.exist}, the optimality of $\bp^*$ implies
\begin{equation}
    0 \leq p^{s*} \perp \left(\sum_{i \in \mN} x^{s*}_i - \sum_{i \in N} \omega^s_i \right) \leq 0.
    \label{appeq.priceplayer}
\end{equation}
Also, if $p^{s*} = 0$ for some $s$, then by utility maximization, $x^{s*}_i = C \sum_{i \in \mN} \omega^s_i$, which implies $\sum_{i \in \mN} x^{s*}_i = nC \sum_{i \in \mN} \omega^s_i >\sum_{i \in N} \omega^s_i $, a contradiction with \eqref{appeq.priceplayer}. Then we have $\sum_{i \in \mN} \bx^*_i - \sum_{i \in N} \w_i = 0$, i.e. $\bX^* \in \mathcal{A}$ is attainable. 

Next, suppose there exists $\tilde{\bx}_i \in R^{|\mS|}_+$ such that $\bp^\intercal \tilde{\bx}_i  \leq \bp^\intercal \w_i$ and $u_i(\tilde{\bx}_i, \bx^*_{-i}) > u_i(\bx^*_i,\bx^*_{-i})$, then by quasi-concavity $u_i(t\tilde{\bx}_i+(1-t)\bx^*_i,\bx^*_{-i}) > u_i(\bx^*_i, \bx^*_{-i})$ for $t \in (0,1)$. However, as $\bx^*_i \in int(\mathcal{X}_i)$, then $t\tilde{\bx}_i+(1-t)\bx^*_i \in \mathcal{X}_i$ when $t$ is small enough. Moreover, $t\tilde{\bx}_i+(1-t)\bx^*_i \in A_i(\bx^*_{-i},\bp^*)$ since the set is convex. This contradicts with $\bx^*_i \in \arg\max_{\bx_i \in A_i(\bx^*_{-i},\bp^*)} u_i(\bx_i, \bx^*_{-i})$. Therefore, $\bx^*_i$ solves the consumer's optimization problem in $R^{|\mS|}_+$. In summary, $(\bX^*, \bp^*)$ is indeed a competitive equilibrium.  

\textbf{Interiority:}   First, for any equilibrium, the following inequality holds
\begin{equation}\label{appeq.opcon}
	\frac{\alpha^s}{x^{s}_i + \phi^s \sum_{j \in \mN} g^{s}_{ij} x^{s}_{j}} - \lambda_i p^{s} \leq 0 \ (\text{with equality if} \ x^s_i >0).
\end{equation} 
Suppose there is a non-interior equilibrium $(\bX, \bp)$ with $x^s_i = 0$, then $\phi^s > 0$. For good $s$, there exists a consumer $j$ such that $x^{s}_j \geq \bar{w}^s/n$ and thus $q^s_j \geq \bar{w}^s/n$. Furthermore, the equality in equation (\ref{appeq.opcon}) holds and we have $p^s = \frac{\alpha^s}{\lambda_j q^s_j}$.

For consumer $i$, $\exists s' \in \mS$ such that $x^{s'}_i \geq w^{s'}_i$. Then $\alpha^{s'}/q_i^{s'} = \lambda_i p^{s'}$ and under Assumption~\ref{assump.interior}
\begin{equation*}
	\frac{\lambda_i}{\lambda_j} \leq \frac{q^{s'}_j}{q^{s'}_i} <
	\frac{\bar{w}^{s'}}{w^{s'}_i - |\phi^{s'}| \bar{g}^{s'} \bar{w}^{s'} }.
\end{equation*}
Then we have
\begin{equation*}
    \frac{\alpha^s}{\phi^s\bar{g}^s\bar{w}^s} \leq
    \frac{\alpha^s}{q^s_i} \leq \lambda_i p^s =
    \frac{\lambda_i}{\lambda_j} \frac{\alpha^s}{q^s_j} <
    \frac{\bar{w}^{s'}}{w^{s'}_i - |\phi^{s'}| \bar{g}^{s'} \bar{w}^{s'} } \frac{n\alpha^s}{\bar{w}^s}
    =
    \frac{1}{\eta^{s'}_i - |\phi^{s'}| \bar{g}^{s'}} \frac{n\alpha^s}{\bar{w}^s}.
\end{equation*}
However, under Assumption~\ref{assump.interior}, we have
\begin{align*}
    \frac{\alpha^s}{\phi^s\bar{g}^s\bar{w}^s} > 
    \frac{\alpha^s(n+1) \bar{g}}{\ubar{\eta}\bar{g}^s\bar{w}^s} \geq \frac{\alpha^s(n+1)}{\ubar{\eta}\bar{w}^s} 
    = \frac{1}{\ubar{\eta} - \ubar{\eta}/(n+1) } \frac{n\alpha^s}{\bar{w}^s}
    >
    \frac{1}{\eta^{s'}_i - |\phi^{s'}| \bar{g}^{s'}} \frac{n\alpha^s}{\bar{w}^s},
\end{align*}
a contradiction. Therefore, the equilibrium must be interior.
\end{proof}


\bigskip

\begin{proof}[Proof of Theorem \ref{thm.interior}]
The derivation of $p^{s*}, \bx^{s*}$ and $\se$ is the same as in the main text. Moreover, the effective consumption equals $\mathbf{q}^{s*} = (\bM^s)^{-1} \bx^{s*} = \bar{\omega}^s \frac{\se}{b^s( \se)}$. Plug this into the utility function we obtain the equilibrium utility.
\end{proof}



\bigskip

\begin{proof}[Proof of Lemma \ref{lemma.unique}]
Since $\mathbf{1}^\intercal \overline{\bM} = \mathbf{0}$, we have $rank(\overline{\bM}) \leq n-1$. Since $\bH$ is a rank-one perturbation of $\overline{\bM}$, then $rank(\bH) \leq rank(\overline{\bM}) + 1 = n$. Therefore, $\bH$ is invertible implies $rank(\overline{\bM}) = n-1$.

On the other hand, if $\bH$ is not invertible, then there exists $\mathbf{z} \neq \mathbf{0}$ such that $\bH \mathbf{z} = \mathbf{0}$. Note that $\mathbf{1}^\intercal \bH = \mathbf{1}^\intercal \overline{\bM}+ \alpha^1 \mathbf{1}^\intercal \etab^1 \mathbf{1}^\intercal \bM^1  = \alpha^1 \mathbf{1}^\intercal \bM^1$. Then we have
\begin{equation}\label{appeq.contradict}
    0 = \mathbf{1}^\intercal \bH \mathbf{z} = \alpha^1 \mathbf{1}^\intercal \bM^1 \mathbf{z}
\end{equation}
This implies that $\mathbf{0} =\bH \mathbf{z} = \overline{\bM} \mathbf{z}  + \alpha^1 \etab^1 \mathbf{1}^\intercal \bM^1 \mathbf{z} = \overline{\bM} \mathbf{z}$. By Assumptions \ref{assump.interior} and \ref{assump.rank}, $\mathbf{z}$ determines the equilibrium price and $p^{1*} = \frac{\alpha^1 \mathbf{1}^\intercal \bM^1 \mathbf{z}}{\bar{\omega}^1} > 0$, a contradiction to \eqref{appeq.contradict}. Therefore, Assumptions \ref{assump.interior} and \ref{assump.rank} implies that $\bH$ is invertible.

Given invertibility, $\se = \bH^{-1} \w^1$ uniquely determines the equilibrium.
\end{proof}


\bigskip

\begin{proof}[Proof of Lemma \ref{lemma.efficienttoplanner}]
Suppose that $\hat{\bX}$ solves \eqref{eq.pareto} for some
$\btheta\in\mathbb R^n_{++}$. If $\hat{\bX}$ is not Pareto efficient, then there exists $\bX \in \mathcal A$ such that $u_i(\bX)\geq u_i(\hat{\bX}), \forall i\in\mathcal N$, with strict inequality for at least one consumer. Since each entry of $\btheta$ is strictly positive, we have $\btheta^\intercal \mathbf{u}(\bX) > \btheta^\intercal \mathbf{u}(\hat{\bX})$, contradicting the optimality of $\hat{\bX}$ in \eqref{eq.pareto}. Hence
$\hat{\bX}$ must be Pareto efficient.

For the ``only if'' part, define set $\mathcal{U}:=\left\{\mathbf{v} \in\mathbb{R}^n: \exists \bX\in\mathcal A, v_i \leq u_i(\bX),\ \forall i\in\mathcal N \right\}$. Take any $\mathbf{v}^{(1)}, \mathbf{v}^{(2)} \in \mathcal{U}$, there exist $\bX^{(1)}, \bX^{(2)} \in\mathcal{A}$ such that $v^{(1)}_i \leq u_i(\bX^{(1)})$ and $v^{(2)}_i \leq u_i(\bX^{(2)})$ for all $i \in \mN$. For any $\lambda\in[0,1]$, convexity of $\mathcal A$ implies
$\lambda \bX^{(1)}+(1-\lambda)\bX^{(2)}\in\mathcal A$, and concavity of $u_i$ gives
\begin{equation*}
    u_i\big( \lambda \bX^{(1)}+(1-\lambda)\bX^{(2)} \big)
    \geq
    \lambda u_i(\bX^{(1)})+(1-\lambda)u_i(\bX^{(2)})
    \geq
    \lambda v^{(1)}_i+(1-\lambda)v^{(2)}_i .
\end{equation*}
Hence $\lambda \mathbf{v}^{(1)}+(1-\lambda)\mathbf{v}^{(2)} \in\mathcal U$, so $\mathcal U$ is convex.

If $\hat{\bX}$ is Pareto efficient, then $\mathbf{u}(\hat{\bX})$ is on the boundary of
$\mathcal U$. Therefore, by the supporting hyperplane theorem, there exists a nonzero vector $\btheta\in\mathbb R^n$ such that $\btheta^\intercal \mathbf{u} \leq \btheta^\intercal \mathbf{u}(\hat{\bX})$ for all $\mathbf{u} \in \mathcal U$. Moreover, the supporting vector must satisfy $\btheta\in\mathbb R^n_+$. Otherwise, if $\theta_j<0$ for some $j$, decreasing the $j$-th component of any feasible utility vector would increase $\btheta^\intercal \mathbf{u}$, contradicting the supporting inequality.

Now, for every $\bX\in\mathcal A$, the utility vector
$\big(u_i(\bX)\big)_{i\in\mathcal N}$ belongs to $\mathcal U$. Hence
\[
    \sum_{i\in\mathcal N}\theta_i u_i(\bX)
    \leq
    \sum_{i\in\mathcal N}\theta_i u_i(\hat{\bX}),
    \qquad \forall \bX\in\mathcal A.
\]
Therefore, $\hat{\bX}$ solves the weighted planner's problem \eqref{eq.pareto}.

Finally, we show that $\btheta\in\mathbb R^n_{++}$. Since $\hat{\bX}$ is an interior solution, then it must satisfy the first order condition \eqref{eq.plannerfoc}, which implies $\theta_i > 0$ as $\beta^s,\widetilde{b}^s_i>0$.

Since the planner’s problem \eqref{eq.pareto} is a concave maximization problem with affine equality constraints, and the feasible set is nonempty, strong duality holds. Hence, KKT conditions are necessary and sufficient for optimality. In particular, for an interior solution, the KKT conditions reduce to the first-order conditions. Write \eqref{eq.plannerfoc} in matrix form, we obtain
\begin{equation*}
    \mathbf{q}^s = \frac{\alpha^s}{\beta^s} (\widetilde{\mathbf{b}^s})^{-1} \odot \btheta
    \implies
    \bx^s = \frac{\alpha^s}{\beta^s} \bM^s \left\{(\widetilde{\mathbf{b}^s})^{-1} \odot \btheta \right\}.
\end{equation*}
Using the market clearing condition $\mathbf{1}^\intercal \bx^s = \bar{\omega}^s$, we obtain $\beta^s = \alpha^s \mathbf{1}^\intercal \btheta$ and \eqref{eq.efficientx}.
\end{proof}


\bigskip

\begin{proof}[Proof of Theorem \ref{thm.efficiency}]
(i1) Recall that $\btheta^* = \mathbf{b} \odot \se$, where $\mathbf{b} \parallelsum \widetilde{\mathbf{b}^s}$. Therefore, $\frac{\btheta^*}{\mathbf{1}^\intercal\btheta^*} = \frac{\widetilde{\mathbf{b}}^s \odot \se}{(\widetilde{\mathbf{b}}^s)^\intercal \se}$. By Lemma \ref{lemma.efficienttoplanner}, the efficient allocation \eqref{eq.efficientx} becomes
\begin{equation*}
    \hat{\bx}^s = \bar{\omega}^s \frac{\bM^s \se}{(\widetilde{\mathbf{b}}^s)^\intercal \se}.
\end{equation*}
This is the equilibrium consumption \eqref{eq.equix}. Therefore, the equilibrium is efficient.

(i2) Consider an interior efficient allocation $\bX^P = \hat{\bX}$, it satisfies equation~\eqref{eq.plannerfoc}, which can be written as
\begin{equation*}
    \theta_i \frac{\alpha^s}{\hat{q}^s_i} = \beta^s \widetilde{b}^s_i = \beta^sb_i \widetilde{b}^s,
\end{equation*}
where the last equality follows from the parallel condition. Therefore, we can simply let $\hat{\w} = \hat{\bX}$, $\mu_i = \theta_i/b_i$, $p^s = \beta^s\widetilde{b}^s$, then $\hat{\bX}$ satisfies the equilibrium first-order condition \eqref{eq.foceqm} and the budget constraint. Then $\hat{\bX}$ solves the utility maximization problem for each individual $i \in \mN$. Moreover, it is feasible since it solves the planner's constrained problem. Therefore, $\hat{\bX}$ is indeed a competitive equilibrium under the previous endowment and price selection.

(ii) If the centrality parallel condition fails, suppose the First Welfare Theorem still holds. Then there is an allocation $\bX$ being both equilibrium and efficient allocation. Therefore, it satisfies planner FOC \eqref{eq.plannerfoc} and equilibrium FOC \eqref{eq.foceqm}. This implies
\begin{equation*}
    \frac{\widetilde{b}^s_i}{\widetilde{b}^t_i} 
    = \frac{p^s \beta^t}{p^t\beta^s}
    = \frac{\widetilde{b}^s_j}{\widetilde{b}^t_j},
    \quad \forall i,j \in \mN, s,t \in \mS.
\end{equation*}
However, this is exactly the parallel condition, a contradiction. Therefore, the First Welfare Theorem fails. Then the Second Welfare Theorem naturally fails since no interior competitive equilibrium can be efficient.
\end{proof}


\bigskip

\begin{proof}[Proof of Corollary \ref{cor.notparallel}]
We show the existence of the Pareto improving allocation by construction. Suppose $\widetilde{\mathbf{b}}^s$ is not proportional to $\widetilde{\mathbf{b}}^t$, then we construct a redistribution of the consumption for good $s$ and good $t$. The consumption for all other goods remains the same. Then the new allocation $\bX = (\ldots,\bx^{s*}+l\btau^s,\ldots,\bx^{t*}+l\btau^t,\ldots)$ is attainable. Moreover, after the reallocation, 
\begin{align*}
    \hat{q}^s_i = q^{s*}_i + l 
    \underbrace{
    \left(\tau^s_i+\phi^s \sum_{j \in \mN} g^s_{ij} \tau^s_j
    \right)
    }_{=: \hat{\tau}^s_i}, \qquad
    \hat{q}^t_i = q^{t*}_i + l
    \underbrace{
    \left(\tau^t_i+\phi^t \sum_{j \in \mN} g^t_{ij} \tau^t_j
    \right)
    }_{=: \hat{\tau}^t_i}.
\end{align*}
Next, we show that it Pareto dominates $\bX^*$ for sufficiently small $l$ by showing
\begin{equation*}
    \left. \frac{\partial u_i(\bX)}{\partial l} \right|_{l=0} 
    =
    \frac{\alpha^t q^{s*}_i \hat{\tau}^t_i+\alpha^sq^{t*}_i \hat{\tau}^s_i}{q^{t*}_iq^{s*}_i}
    \geq 1,
\end{equation*}
which is equivalent to the following systems
\begin{alignat*}{2}
    &\text{Original system}
    \qquad\qquad
    &&\text{Dual system}
    \\[0.6em]
    &\alpha^t \mathbf{Q}^{s*} \hat{\btau}^t
    + \alpha^s \mathbf{Q}^{t*} \hat{\btau}^s
    \geq \hat{\mathbf{q}},
    \qquad\qquad
    &&
    \alpha^t \mathbf{Q}^{s*} \mathbf{y}
    = m_1(\bM^t)^\intercal \mathbf{1}
    = m_1 \widetilde{\mathbf{b}}^t,
    \\
    &\mathbf{1}^\intercal \bM^t \hat{\btau}^t = 0,
    &&
    \alpha^s \mathbf{Q}^{t*} \mathbf{y}
    = m_2(\bM^s)^\intercal \mathbf{1}
    = m_2 \widetilde{\mathbf{b}}^s,
    \\
    &\mathbf{1}^\intercal \bM^s \hat{\btau}^s = 0,
    &&
    \mathbf{y}^\intercal \hat{\mathbf{q}} > 0.
\end{alignat*}
where $\mathbf{Q}^{s*} = diag(q^{s*}_1,\ldots, q^{s*}_n)$, $\mathbf{Q}^{t*} = diag(q^{t*}_1,\ldots, q^{t*}_n)$, $\mathbf{y} \in \mathbb{R}^n_+$ and $m_1, m_2 \in \mathbb{R}$. If the dual system holds, then
\begin{equation*}
    \frac{m_1 \widetilde{b}^t_i}{m_2 \widetilde{b}^s_i} = \frac{\alpha^t q^{s*}_i}{\alpha^s q^{t*}_i} = 
    \frac{p^{t*}}{p^{s*}}
\end{equation*}
where the last equality follows from equations (\ref{eq.equip}) and (\ref{eq.equix}). Note that this equation holds for all $i,j \in \mN$, which implies that $\mathbf{b}^s \parallelsum \mathbf{b}^t$, a contradiction. Hence, the dual system cannot hold. By Farkas’ lemma, the original system must therefore hold, and Pareto improving reallocations exist.
\end{proof}


\bigskip

\begin{proof}[Proof of Proposition \ref{prop.efficiencyloss}]
(i) By Theorem~\ref{thm.interior} and Lemma~\ref{lemma.efficienttoplanner}, we can calculate
$
u_i(\hat{\bX}(\btheta)) - u^*_i 
= \sum_{s \in \mS} \alpha^s  
\left(
\ln \theta_i - \ln \rho^s_i
\right)
$.
Taking the $\boldsymbol{\theta}$-weighted sum across consumers yields
\begin{equation*}
    L(\btheta) = \sum_{i \in \mN} \theta_i \sum_{s \in \mS}  \alpha^s  
    \left(
    \ln \theta_i - \ln \rho^s_i
    \right)
    =
    \sum_{s \in \mS}  \alpha^s  
    D_{KL} \left( \btheta \,\|\, \boldsymbol{\rho}^s \right),
\end{equation*}

(ii) We first show that every feasible $(\bX,\gamma)$ of \eqref{eq.efficiencycoef} satisfies $\gamma \geq  \sum_{i \in \mN} \prod_{s \in \mS}  (\rho^s_i)^{\alpha^s}$. Using the equilibrium utility formula \eqref{eq.equiu} and the definition of $\brho^s$, the constraint $u_i(\bX) \geq u_i(\bX^*)$ implies
\begin{equation}\label{appeq.feasibleu}
    \sum_{s \in \mS} \alpha^s \ln\left( \frac{\tilde{b}^s_i q^s_i}{\bar{\omega}^s} \right)
    \geq 
    \sum_{s \in \mS} \alpha^s \ln \rho^s_i
    \implies
    \prod_{s \in \mS} \left( \frac{\tilde{b}^s_i q^s_i}{\bar{\omega}^s} \right)^{\alpha^s }
    \geq 
    \prod_{s \in \mS}  (\rho^s_i)^{\alpha^s}.
\end{equation}
Summing over $i$ and applying the weighted AM-GM inequality gives
\begin{equation*}
    \sum_{i \in \mN} \prod_{s \in \mS}  (\rho^s_i)^{\alpha^s}
    \leq
    \sum_{i \in \mN} \prod_{s \in \mS} \left( \frac{\tilde{b}^s_i q^s_i}{\bar{\omega}^s} \right)^{\alpha^s }
    \leq
    \sum_{i \in \mN} \sum_{s \in \mS} \alpha^s \frac{\tilde{b}^s_i q^s_i}{\bar{\omega}^s} 
    =
    \sum_{s \in \mS} \alpha^s \gamma = \gamma,
\end{equation*}
where the equality follows from the feasibility $\sum_{i \in \mN} x^s_i = \sum_{i \in \mN} \tilde{b}^s_i q^s_i = \gamma \bar{\omega}^s$.

Next, we show that this lower bound is attainable. Let
\begin{equation*}
    q^s_i = \frac{\prod_{t \in \mS}  (\rho^t_i)^{\alpha^t}}{\tilde{b}^s_i} \bar{\omega}^s.
\end{equation*}
We assume externalities are sufficiently small such that $\bx^s = \bM^s \mathbf{q}^s \geq \mathbf{0}$. Moreover, $q^s_i$ satisfies \eqref{appeq.feasibleu} and $\sum_{i \in \mN} x^s_i = \sum_{i \in \mN} \tilde{b}^s_i q^s_i = \bar{\omega}^s \sum_{i \in \mN} \prod_{t \in \mS}  (\rho^t_i)^{\alpha^t}$. Therefore, such construction is indeed feasible and the lower bound is attained.

Finally, $\sum_{s \in \mS} \alpha^s D_{KL} \left( \btheta \,\|\, \brho^s \right)$ is convex in $\btheta$, the minimizer is given by KKT conditions:
\begin{equation*}
    \ln \theta_i +1 = \sum_{s \in \mS} \alpha^s \ln \rho^s_i, 
    \quad \mathbf{1}^\intercal \btheta = 1
    \implies
    \theta_i = \frac{\prod_{s \in \mS}  (\rho^s_i)^{\alpha^s}}{\sum_{k \in \mN} \prod_{s \in \mS}  (\rho^s_k)^{\alpha^s}}.
\end{equation*}
Plugging this back we get $\min_{\btheta \in \Delta}\sum_{s \in \mS} \alpha^s D_{KL} \left( \btheta \,\|\, \brho^s \right) = -\ln \gamma^*$.

(iii) The standard result \citep{tsybakov2008introduction} gives $D_{KL} \left( \btheta \,\|\, \boldsymbol{\rho}^s \right) \geq H(\btheta,\brho^s)^2$. Denote $\sqrt{\btheta}$ and $\sqrt{\brho^s}$ the entry-wise square root vector. Then we have 
\begin{align*}
    \sum_{s \in \mS}  \alpha^s  
    D_{KL} \left( \btheta \,\|\, \boldsymbol{\rho}^s \right)
    \geq&\sum_{s\in \mS} \alpha^s H(\btheta,\brho^s)^2 
    =
    \sum_{s\in \mS} \alpha^s \| \sqrt{\btheta} - \sqrt{\brho^s} \|_2^2\\
    =&
    \| \sqrt{\btheta} - \sum_{s\in \mS} \alpha^s \sqrt{\brho^s} \|_2^2 + 
    \frac{1}{2} \sum_{s,t \in \mS} \alpha^s \alpha^t \left\| \sqrt{\brho^s} - \sqrt{\brho^t} \right\|_2^2\\
    \geq&
    \frac{1}{2} \sum_{s,t \in \mS} \alpha^s \alpha^t H(\brho^s,\brho^t)^2 \geq 0.
\end{align*}
On the other hand, note that $\brho^1 \in \Delta$,
\begin{equation*}
    \min_{\btheta \in \Delta} \sum_{s \in \mS}  \alpha^s  
    D_{KL} \left( \btheta \,\|\, \boldsymbol{\rho}^s \right)
    \leq 
    \sum_{s \in \mS}  \alpha^s  
    D_{KL} \left( \brho^1 \,\|\, \boldsymbol{\rho}^s \right)
    \leq
    \max_{s,t \in \mS} D_{KL} \left( \brho^s \,\|\, \brho^t \right).
\end{equation*}
Use $\min_{\btheta \in \Delta}\sum_{s \in \mS} \alpha^s D_{KL} \left( \btheta \,\|\, \brho^s \right) = -\ln \gamma^*$ gives the bounds for $\gamma^*$.
\end{proof}


\bigskip

\begin{proof}[Proof of Proposition \ref{prop.comparative}]
By the definition of endowment change, we have
\begin{equation*}
    \etab^s(l)
    =
    \frac{\w^s+l\btau^s}{\bar{\omega}^s+l\mathbf{1}^\intercal\btau^s}
    \implies
    \dot{\etab^s} = 
    \left. \frac{\partial \etab^s(l)}{\partial l} \right|_{l=0}
    =
    \frac{\btau^s-\etab^s\mathbf{1}^\intercal\btau^s}{\bar{\omega}^s}.
\end{equation*}

By the definition of $\overline{\bM}$, we have
\[
\dot{\overline{\bM}}
=
\sum_{s \in \mS}
-\alpha^s
\dot{\etab}^s
\mathbf{1}^\intercal\bM^s
=
\sum_{s \in \mS}
-\frac{\alpha^s}{\bar{\omega}^s}
(\btau^s-\etab^s\mathbf{1}^\intercal\btau^s)\mathbf{1}^\intercal\bM^s.
\]
As we discussed in the main text, $\se(l)$ is the differentiable solution path of equation $\overline{\bM}(l)\se(l)=\mathbf{0}$, fixing $\se(0) \equiv \bmu^{*0}$. Differentiating at $l=0$ yields
\begin{equation}\label{eq.app-linearizedmu}
    \overline{\bM}\dot{\bmu}^*
    =
    -\dot{\overline{\bM}}\bmu^{*0}
    =
    \sum_{s \in \mS}
    \frac{\alpha^s}{\bar{\omega}^s}
    (\btau^s-\etab^s\mathbf{1}^\intercal\btau^s)\mathbf{1}^\intercal\bM^s \bmu^{*0}
    =
    \sum_{s \in \mS}
    p^{s*}(\btau^s-\etab^s\mathbf{1}^\intercal\btau^s).
\end{equation}
Since $\mathbf{1}^\intercal(\btau^s-\etab^s\mathbf{1}^\intercal\btau^s)=0$ for all $s \in \mS$ and $\mathbf{1}^\intercal\overline{\bM}=\mathbf{0}^\intercal$, the solution to linear system~\eqref{eq.app-linearizedmu} exists. 

We can now derive the comparative statics of relative prices. Using equation~\eqref{eq.equip},
\[
\ln\frac{p^{s*}}{p^{t*}}
=
\ln\alpha^s-\ln\alpha^t
-\ln\bar{\omega}^s+\ln\bar{\omega}^t
+\ln(b^s(\se))-\ln(b^t(\se)).
\]
Differentiating with respect to $l$, note that $b^s(\mathbf{z}) = \mathbf{1}^\intercal \bM^s \mathbf{z}$ is linear in $\mathbf{z}$, we obtain
\begin{equation}\label{eq.app-relativeprice}
    \begin{aligned}
        \dot{\left( \ln\frac{p^{s*}}{p^{t*}}\right)}
        =
        \frac{\dot{p}^{s*}}{p^{s*}}-\frac{\dot{p}^{t*}}{p^{t*}}
        =
        \frac{b^s(\dot{\bmu}^*)}{b^s(\bmu^{*0})}
        -
        \frac{b^t(\dot{\bmu}^*)}{b^t(\bmu^{*0})}
        -
        \frac{\mathbf{1}^\intercal\btau^s}{\bar{\omega}^s}
        +
        \frac{\mathbf{1}^\intercal\btau^t}{\bar{\omega}^t}.
    \end{aligned}
\end{equation}
Turning to welfare, differentiating equation~\eqref{eq.equiu} with respect to $l$ gives
\begin{equation}\label{eq.app-welfare}
    \begin{aligned}
        \dot{u}_i^*
        =&
        \frac{\dot{\mu}^*_i}{\mu^{*0}_i}
        -
        \sum_{s \in \mS}\alpha^s
        \frac{b^s(\dot{\bmu}^*)}{b^s(\bmu^{*0})}
        +
        \sum_{s \in \mS}
        \alpha^s\frac{\mathbf{1}^\intercal\btau^s}{\bar{\omega}^s}
    \end{aligned}
\end{equation}
The consumption change can also be derived as
\begin{equation}\label{eq.app-consumption}
    \dot{(\ln x^{s*}_i)} = \frac{b^s_i(\dot{\bmu}^*)}{b^s_i(\bmu^{*0})} 
    - \frac{b^s(\dot{\bmu}^*)}{b^s(\bmu^{*0})}
    +\frac{\mathbf{1}^\intercal\btau^s}{\bar{\omega}^s}
\end{equation}

The values of equations~\eqref{eq.app-relativeprice}-\eqref{eq.app-consumption} do not depend on the choice of $(\bmu, \dot\bmu)$. To see this, let $\bmu'$ be any solution of \eqref{eq.mudetermine}, and $\dot{\bmu}'$ be any solution to equation~(\ref{eq.app-linearizedmu}) given $\bmu'$. By Assumption \ref{assump.rank}, $\bmu' = y\bmu^{*0} $ and $\dot{\bmu}'=y\dot{\bmu}^*+z\bmu^{*0}$ for some scalar $y,z \in \mathbb{R}$. Therefore
\[
\frac{\mathbf{1}^\intercal\bM^s\dot{\bmu}'}{\mathbf{1}^\intercal\bM^s\bmu'}
=
\frac{y\mathbf{1}^\intercal\bM^s\dot{\bmu}^*+z\mathbf{1}^\intercal\bM^s\bmu^{*0}}
{y\mathbf{1}^\intercal\bM^s\bmu^{*0}}
=
\frac{\mathbf{1}^\intercal\bM^s\dot{\bmu}^*}{\mathbf{1}^\intercal\bM^s\bmu^{*0}}
+
z.
\]
Hence the additive term $z$ cancels from~\eqref{eq.app-relativeprice}-\eqref{eq.app-consumption}. Therefore, the comparative statics results are independent of the solution selection of~$(\bmu, \dot\bmu)$.
\end{proof}


\bigskip

\begin{proof}[Proof of Proposition \ref{prop.priceeffect}]
Due to independence of solution selection, we can choose $\bmu^{*0}=\bH^{-1}\w^1$ by Lemma~\ref{lemma.unique}. Then a particular solution to equation~\eqref{eq.dotmu} is $\dot{\bmu}^*=\bH^{-1}\tilde{\btau}$. Too see this, recall that in the proof of Lemma~\ref{lemma.unique}, we show that $\mathbf{1}^\intercal\bH = \alpha^1\mathbf{1}^\intercal\bM^1$.
For any vector $\mathbf z$ satisfying $\mathbf{1}^\intercal\mathbf z=0$, we have $\alpha^1\mathbf{1}^\intercal\bM^1\bH^{-1}\mathbf z = \mathbf{1}^\intercal\mathbf z = 0$. This implies
\[
\overline{\bM}\bH^{-1}\mathbf z
=
(\bH-\alpha^1\boldsymbol{\eta}^1\mathbf{1}^\intercal\bM^1)\bH^{-1}\mathbf z
=
\mathbf z-\alpha^1\boldsymbol{\eta}^1\mathbf{1}^\intercal\bM^1\bH^{-1}\mathbf z
= \mathbf{z}
\]
whenever $\mathbf{1}^\intercal\mathbf z=0$. Therefore, it holds for $\tilde{\btau}$, which proves that $\dot{\bmu}^*=\bH^{-1}\tilde{\btau}$ is indeed a particular solution to~\eqref{eq.dotmu}. Substituting this into~\eqref{eq.priceeffect} and by $\mathbf{1}^\intercal\bM^1\bH^{-1}\tilde{\btau}=0$,
\begin{equation*}
    \frac{\dot{p}^{s*}}{p^{s*}} - \frac{\dot{p}^{1*}}{p^{1*}}
    =
    \frac{\mathbf{1}^\intercal\bM^s\bH^{-1}\tilde{\btau}}{\mathbf{1}^\intercal\bM^s\bmu^{*0}}   
    =  \frac{(\mathbf{c}^s)^\intercal\tilde{\btau}}{\mathbf{1}^\intercal\bM^s\bmu^{*0}}   
    =  \frac{p^{t*}\tau_j(c^s_j-c^s_i)}{\mathbf{1}^\intercal\bM^s\bmu^{*0}}.   
\end{equation*}
where the last equality follows from the definition of transfer on good $t$. The sign of relative price change is the same as $c^s_j-c^s_i$.
\end{proof}


\bigskip

\begin{proof}[Proof of Corollary \ref{cor.infeasible}]
Denote $(\bmu^{*0})^{-1} = (1/\mu^{*0}_1,\ldots,1/\mu^{*0}_n)^\intercal$. By definition of $\boldsymbol{\varpi}$,
\begin{equation*}
    \boldsymbol{\varpi}^\intercal \{ (\bmu^{*0})^{-1} \odot\dot{\bmu}^*  \}
    =
    \sum_{s \in \mS}\alpha^s
    \frac{ (\dot{\bmu}^*)^\intercal \widetilde{\mathbf{b}}^s }{(\bmu^{*0})^\intercal \widetilde{\mathbf{b}}^s}.
\end{equation*}
For any pure redistribution, rewrite equation \eqref{eq.welfareeffect} in matrix form for $i \in \mN$, we obtain
\begin{equation*}
    \dot{\mathbf{u}}^* 
    =
    (\bmu^{*0})^{-1} \odot\dot{\bmu}^*
    -
    \mathbf{1} \sum_{s \in \mS}\alpha^s
    \frac{ (\dot{\bmu}^*)^\intercal \widetilde{\mathbf{b}}^s }{(\bmu^{*0})^\intercal \widetilde{\mathbf{b}}^s}
    =
    \left[ 
    \mathbf{I}_n- \mathbf{1}\boldsymbol{\varpi}^\intercal
    \right]
    \left\{
    (\bmu^{*0})^{-1} \odot \dot{\bmu}^*
    \right\}.
\end{equation*}
Since $\mathbf{1}^\intercal \boldsymbol{\varpi}=1$, we have $\boldsymbol{\varpi}^\intercal \dot{\mathbf{u}}^* = 0$.
\end{proof}


\bigskip

\begin{proof}[Proof of Proposition \ref{prop.lindahl}] The first order conditions for consumer $i$ give
\begin{align*}
    \frac{\alpha^s}{\mathbbm{q}^s_i} - \lambda_i \lp^s_{ii} = 0,
    \quad
    \frac{\alpha^s \phi^s g^s_{ij}}{\mathbbm{q}^s_i} - \lambda_i \lp^s_{ij} = 0,
\end{align*}
which implies $\lp^{s}_{ij} = \phi^s g^s_{ij}\lp^{s}_{ii}$. The price compatibility condition then implies
\begin{equation*}
    \lp^s = \sum_{j \in \mN} \lp^s_{ji} = \lp^s_{ii}+ \sum_{j \neq i} \phi^s g^s_{ji} \lp^s_{jj} \implies
    [\mathbf{I}_n + \phi^s (\bG^s)^\intercal]
    (\lp^s_{11},\ldots,\lp^s_{nn})^\intercal
    =
    \lp^s \mathbf{1}.
\end{equation*}
This gives $\lp^{s}_{ii} = \widetilde{b}^s_i \lp^s$. Substituting this into the first FOC and denote $\gamma_i = 1/\lambda_i$, we obtain $\alpha^s \gamma_i = \lp^s \widetilde{b}^s_i  \mathbbm{q}^s_i$. Summing over $i$, we have
\begin{equation}\label{appeq.lindahlp}
    \alpha^s \mathbf{1}^\intercal \boldsymbol{\gamma} 
    = \lp^s \sum_{i \in \mN} \widetilde{b}^s_i  \mathbbm{q}^s_i
    = \lp^s \mathbf{1}^\intercal \bM^s \vec{\mathbbm{q}}^s
    = \lp^s \mathbf{1}^\intercal \vec{\mathbbm{x}}^s 
    = \lp^s \bar{\omega}^s.
\end{equation}
Substituting $\lp^{s}_{ii} = \widetilde{b}^s_i \lp^s$ and (\ref{appeq.lindahlp}) into the first FOC, we obtain
\begin{equation*}
    \mathbbm{q}^s_i = \frac{\alpha^s}{\widetilde{b}^s_i \lp^s} \gamma_i
    \implies
    [\mathbf{I}_n + \phi^s\bG^s] \vec{\mathbbm{x}}^s = \frac{\alpha^s}{\lp^s} (\widetilde{\mathbf{b}}^s)^{-1} \odot \boldsymbol{\gamma} 
    \implies
    \vec{\mathbbm{x}}^s = \bar{\omega}^s  \bM^s 
    \left\{(\widetilde{\mathbf{b}}^s)^{-1} \odot \frac{\boldsymbol{\gamma}}{\mathbf{1}^\intercal \boldsymbol{\gamma}} \right\},
\end{equation*}
where $(\widetilde{\mathbf{b}}^s)^{-1} = (1/\tilde{b}^s_1,\ldots,1/\tilde{b}^s_n)^\intercal$.

Finally, using the relationship between prices and the first FOC, we have
\begin{equation*}
    \sum_{j \in \mN}\lp^s_{ij} \mathbbm{x}^s_{ij} = \lp^s_{ii} (\mathbbm{x}^{s}_{ii} + \phi^s g^s_{ij}\mathbbm{x}^s_{ij}) = \lp^s_{ii} \mathbbm{q}^s_i = \alpha^s \gamma_i.
\end{equation*}
By the binding budget and \eqref{appeq.lindahlp}, we obtain
\begin{equation*}
    \gamma_i = \sum_{s \in \mS}\sum_{j \in \mN}\lp^s_{ij} \mathbbm{x}^s_{ij}
    = \sum_{s \in \mS}\lp^s \omega^s_i = \mathbf{1}^\intercal \boldsymbol{\gamma} \sum_{s \in \mS} \alpha^s \eta^s_i
    \implies
    \frac{\boldsymbol{\gamma}}{\mathbf{1}^\intercal \boldsymbol{\gamma}} = \sum_{s \in \mS} \alpha^s \etab^s.
\end{equation*}
This pins down $\boldsymbol{\gamma}$ and proves Proposition~\ref{prop.lindahl}.
\end{proof}


\bigskip

\begin{proof}[Proof of Theorem~\ref{thm.lindahl}]
~

\noindent(i) By Lemma~\ref{lemma.efficienttoplanner} and Proposition~\ref{prop.lindahl}, the Lindahl equilibrium allocation solves planners problem with $\btheta = \sum_{s \in \mS} \alpha^s \etab^s$. Therefore, it is Pareto efficient.

For any interior allocation $\hat{\bX}$, by Lemma~\ref{lemma.efficienttoplanner}, there exists a Pareto weight vector $\btheta \in \mathbb{R}^n_{++}$ such that $\hat{\bX}$ can be expressed as in equation \eqref{eq.efficientx}. We can choose the initial endowment such that $\etab^s \parallelsum \btheta$. Then $\sum_{s \in \mS} \alpha^s \etab^s \parallelsum \btheta$. According Proposition~\ref{prop.lindahl}, the Lindahl equilibrium consumption $\vec{\mathbbm{x}}^{s*}$ is parallel to $\hat{\bx}^s$. Moreover, by the feasibility constraint, $\mathbf{1}^\intercal \vec{\mathbbm{x}}^{s*} = \mathbf{1}^\intercal \hat{\bx}^s = \bar{\omega}^s$, we must have $\vec{\mathbbm{x}}^{s*} = \hat{\bx}^s$. The efficient allocation is indeed supported by a Lindahl equilibrium.

(ii) This point is illustrated by Example VI.
\end{proof}

\bigskip  



  
\setlength{\bibsep}{1pt}
\bibliographystyle{apalike}
\bibliography{ref.bib}

\newpage 

\appendix

\renewcommand{\thetheorem}{\thesection.\arabic{theorem}}
\renewcommand{\theproposition}{\thesection.\arabic{proposition}}
\renewcommand{\thelemma}{\thesection.\arabic{lemma}}
\renewcommand{\thecorollary}{\thesection.\arabic{corollary}}
\renewcommand{\theclaim}{\thesection.\arabic{claim}}
\renewcommand{\theassumption}{\thesection.\arabic{assumption}}
\renewcommand{\thedefinition}{\thesection.\arabic{definition}}
\renewcommand{\theremark}{\thesection.\arabic{remark}}
\renewcommand{\theequation}{\thesection.\arabic{equation}}
\renewcommand{\thefigure}{\thesection.\arabic{figure}}
\counterwithin{theorem}{section}
\counterwithin{proposition}{section}
\counterwithin{lemma}{section}
\counterwithin{corollary}{section}
\counterwithin{claim}{section}
\counterwithin{assumption}{section}
\counterwithin{definition}{section}
\counterwithin{remark}{section}
\counterwithin{equation}{section}
\counterwithin{figure}{section}

\renewcommand{\thepage}{A\arabic{page}}
\setcounter{page}{1}

\begin{center}
    {\bf\huge Online Appendix}
\end{center}

\section{Examples}
In this section, we provide detailed descriptions of the examples presented in the main text.

\subsection{Details for illustrative examples in Sections \ref{sec_Examples} and \ref{sec_policy}} 

We first provide analytical characterizations of the equilibrium allocations and the contract curves, allowing for corner solutions. The interior Lindahl equilibrium follows directly from Proposition~\ref{prop.lindahl}. In the calculations below, we denote consumer 1's consumption bundle by $(x,y)$. Consumer 2’s consumption bundle is then pinned down by market clearing, namely $(2-x,2-y)$.

\paragraph*{Example I}

For an interior equilibrium, the private marginal rates of substitution are
\begin{equation*}
    MRS_1^{xy} =
    \frac{\alpha(y+\phi(2-y))}{(1-\alpha) x},
    \qquad
    MRS_2^{xy} =
    \frac{\alpha(2-y)}{(1-\alpha)(2-x)}.
\end{equation*}
Thus, an interior equilibrium satisfies $MRS_1^{xy} = MRS_2^{xy}$, which gives
\begin{equation*}
    y =
    \frac{2\big[(1+\phi)x-2\phi\big]}
    {2(1-\phi)+\phi x}.
\end{equation*}
One can verify that a corner solution arises when the right-hand side is non-positive, that is, when $(1+\phi)x-2\phi \leq 0$. Therefore, the equilibrium allocation curve is
\begin{equation*}
    \mathcal E_I:
    \qquad
    y=
    \begin{cases}
    0,
    &
    0\leq x\leq \dfrac{2\phi}{1+\phi},
    \\[1.2em]
    \dfrac{2\big[(1+\phi)x-2\phi\big]}
    {2(1-\phi)+\phi x},
    &
    \dfrac{2\phi}{1+\phi}<x \leq 2.
    \end{cases}
\end{equation*}

For efficient allocation, using feasibility, consumer 1's reduced utility is
\begin{equation*}
    \widetilde u_1(x,y)
    =
    \alpha\ln x+(1-\alpha)\ln\big(y+\phi(2-y)\big)
    =
    \alpha\ln x+(1-\alpha)\ln\big(2\phi+(1-\phi)y\big),
\end{equation*}
while
\begin{equation*}
    \widetilde u_2(x,y)
    =
    \alpha\ln(2-x)+(1-\alpha)\ln(2-y).
\end{equation*}
The planner solves
\begin{equation*}
    \max_{0\leq x,y\leq 2}
    \theta \widetilde u_1(x,y)+(1-\theta)\widetilde u_2(x,y).
\end{equation*}
For an interior solution, the first-order conditions give
\begin{equation*}
    \frac{\theta}{x}
    =
    \frac{1-\theta}{2-x},
    \quad
    \frac{\theta(1-\phi)}{2\phi+(1-\phi)y}
    =
    \frac{1-\theta}{2-y}
    \implies
    x^P(\theta)=2\theta, 
    \quad
    y^P(\theta)
    =
    \frac{2(\theta-\phi)}{1-\phi}.
\end{equation*}
The non-negativity constraint binds when $\theta\leq\phi$. Therefore, the contract curve is
\begin{equation*}
    CC_I:
    \qquad
    y=
    \begin{cases}
    0,
    &
    0\leq x\leq 2\phi,
    \\[0.8em]
    \dfrac{x-2\phi}{1-\phi},
    &
    2\phi<x\leq 2.
    \end{cases}
\end{equation*}

Thus, in Example I,
\[
    \mathcal E_I\neq CC_I.
\]
Interior competitive equilibria are inefficient. However, a corner equilibrium with
\(y_1^*=0\) lies on the bottom segment of the contract curve and is therefore efficient.

\paragraph*{Example II}

By the same approach, one can verify that 
\begin{equation*}
    \mathcal E_{II} = CC_{II}:
    \qquad
    y=
    \begin{cases}
    0,
    &
    0\leq x\leq \dfrac{2\phi}{1+\phi},
    \\[1.2em]
    \dfrac{(1+\phi)x-2\phi}{1-\phi},
    &
    \dfrac{2\phi}{1+\phi}<x<\dfrac{2}{1+\phi},
    \\[1.2em]
    2,
    &
    \dfrac{2}{1+\phi}\leq x\leq 2.
    \end{cases}
\end{equation*}
Hence, every competitive equilibrium is Pareto efficient.

\paragraph*{Example III}

Similarly, when the two goods are symmetric, the equilibrium allocation curve coincides with the contract curve and is given by the diagonal of the Edgeworth box:
\begin{equation*}
    \mathcal E_{III} = CC_{III}:
    \qquad
    y= x.
\end{equation*}
This is the same as in the benchmark case without externalities.

In Figures \ref{fig.inefficientexample}, \ref{fig.efficientexample} and \ref{fig.lindahl}, we set the initial endowment for each consumer at $(w_1,v_1)=(1.44,0.12)$ and $(w_2,v_2) = (0.56,1.88)$. Since $\phi$ takes different values across the two sections, we report the numerical results only for $\phi=0.4$ for brevity. The results are qualitatively identical to those obtained for $\phi=0.7$ in Section \ref{sec_Examples}. 

Table \ref{table.examples} reports each consumer’s consumption and utility in the competitive and Lindahl equilibrium, with the intitial utilities as a benchmark. The table shows that consumer 2 is better off in competitive equilibrium in all four cases, whereas consumer 1 is worse off in the asymmetric and symmetric consumer examples. Moreover, although the Lindahl equilibrium restores efficiency, consumer 1 is worse off over the competitive equilibrium in Examples I-III.

\begin{table}[htbp]
    \centering
    \caption{Numerical results for illustrative examples}
    \label{table.examples}
    \small
    \setlength{\tabcolsep}{5pt}
    \renewcommand{\arraystretch}{1.15}
    \resizebox{\textwidth}{!}{%
        \begin{tabular}{ccccccccccc}
        \toprule
        & \multicolumn{5}{c}{Consumer 1} & \multicolumn{5}{c}{Consumer 2} \\
        \cmidrule(lr){2-6} \cmidrule(lr){7-11}
        Example 
        & $(x^*_1,y^*_1)$ 
        & $(x^{\text{L}}_1,y^{\text{L}}_1)$ 
        & $u_1^{\text{initial}}$ 
        & $u_1^{\text{eqm.}}$
        & $u_1^{\text{L}}$
        & $(x^*_2,y^*_2)$ 
        & $(x^{\text{L}}_2,y^{\text{L}}_2)$ 
        & $u_2^{\text{initial}}$ 
        & $u_2^{\text{eqm.}}$
        & $u_2^{\text{L}}$ \\
        \midrule
        Benchmark 
        & $(0.78,\ 0.78)$  
        & $(0.78,\ 0.78)$ 
        & $-0.87$ 
        & $-0.25$  
        & $-0.25$
        & $(1.22,\ 1.22)$
        & $(1.22,\ 1.22)$
        & $0.03$ 
        & $0.20$
        & $0.20$\\
        
        I 
        & $(1.05,\ 0.82)$ 
        & $(0.89,\ 0.15)$
        & $0.18$ 
        & $0.15$
        & $-0.12$
        & $(0.95,\ 1.18)$ 
        & $(1.11,\ 1.85)$
        & $-0.03$ 
        & $0.06$ 
        & $0.36$\\
        
        II 
        & $(0.99,\ 0.97)$ 
        & $(0.89,\ 0.74)$
        & $0.18$ 
        & $0.16$
        & $0.05$
        & $(1.01,\ 1.03)$
        & $(1.11,\ 1.26)$
        & $0.01$ 
        & $0.18$ 
        & $0.27$\\
        
        III 
        & $(0.89,\ 0.89)$  
        & $(0.15,\ 0.15)$
        & $0.26$ 
        & $0.29$  
        & $-0.12$
        & $(1.11,\ 1.11)$
        & $(1.85,\ 1.85)$
        & $-0.03$ 
        & $0.10$
        & $0.62$\\
        \bottomrule
        \end{tabular}%
    }
\end{table}


\subsection{Regular multiplex networks with endogenous prices}\label{appsec.regular}


Consider the case in which all networks are regular, such that $(\bG^s)^\intercal \mathbf{1} = d^s \mathbf{1}$ for some $d^s>0$. Then we have $\mathbf{1}^\intercal \bM^s = \mathbf{1}^\intercal/(1+\phi^s d^s)$. Assume further that consumers are homogeneous, in the sense that all consumers hold identical endowments of each good. Then $\etab^s = \mathbf{1}/n$ for every $s \in \mS$. Equation~\eqref{eq.mudetermine} implies
\begin{equation*}
    \frac{\se}{\mathbf{1}^\intercal \se} = 
    \left( \sum_{s \in \mathcal{S}} \alpha^s \bM^s \right)^{-1} 
    \left( \sum_{s \in \mathcal{S}} \frac{\alpha^s}{1+\phi^sd^s} 
    \underbrace{\frac{1}{n} \mathbf{1}}_{\etab^s} \right)
    = \frac{1}{n} \mathbf{1}.
\end{equation*} 
By Theorem~\ref{thm.interior}, the equilibrium allocation and prices are given by
\begin{equation*}
    x^{s*}_i = \frac{\bar{\omega}^s}{n}, \qquad
    p^{s*} = \frac{\alpha^s}{\bar{\omega}^s}
    \frac{\mathbf{1}^\intercal \se}{1+\phi^s d^s}.
\end{equation*}
Thus, the equilibrium consumption allocation is independent of the network structure, where each consumer simply consumes her initial endowment of each good. The heterogeneity across layers is instead fully absorbed by equilibrium prices.

For consumer $i$'s relative expenditure on goods $s$ and $t$, we have
\begin{equation*}
    \frac{p^{s*}x^{s*}_i}{p^{t*}x^{t*}_i}
    =
    \frac{\alpha^s/(1+\phi^s d^s)}
    {\alpha^t/(1+\phi^t d^t)}.
\end{equation*}
This expression coincides with the regular-network and homogeneous-consumer example in \cite{zenouGamesMultiplexNetworks2025}.

The welfare implications, however, differ. In our general equilibrium setting, the competitive equilibrium is efficient because the centrality parallel condition holds. By contrast, the equilibrium in \cite{zenouGamesMultiplexNetworks2025} is inefficient. The reason is that, in our model, consumers’ budgets are endogenized through market-clearing prices. These prices adjust in response to the distortions generated by network externalities. In the regular-network case, this endogenous price adjustment is sufficient to restore efficiency. With exogenous prices, however, this market mechanism is absent, and the equilibrium is not efficient.

\section{The case of one private good and one conspicuous network good ($\phi^s<0$)}\label{appsec.examplegg}

In \cite{ghiglinoKeepingNeighborsSocial2010}, there is one private good and one good exhibiting social comparison, and the utility function can be written as
\begin{align*}
    u_i 
    =&
    \sigma \ln(x_i^1) + (1-\sigma) \ln \left(x_i^2 +\alpha S(n_i) 
    \left[x^2_i- \frac{1}{n_i} \sum_{j \in N_i} x_j^2 \right]\right)\\
    =&
    \sigma \ln(x_i^1) + (1-\sigma) \ln \left(x_i^2 - \frac{\alpha S(n_i)}{n_i(1+\alpha S(n_i))} \sum_{j \in N_i} x_j^2 \right) + constant
\end{align*}
where $N_i$ denotes the set of direct neighbors of consumer $i$ and $n_i = |N_i|$. In their setting $S(n_i)$ can be $n_i$ or $1$. In the first case, the number of neighbors has linear effect. In the second case, only the average consumption matters. 

Define $\phi = -\alpha$ and $g_{ij} := \frac{S(n_i)}{n_i(1+\alpha S(n_i))} \mathds{1}_{\{j \in N_i \}}$. Then the utility function is equivalent to that in our in our main text. Moreover, the network $\bG$ is the same as their definitions for the two specifications of $S(n_i)$. Therefore, we derive the equilibrium using our $\bmu$-approach in the following section.

\subsection{Equilibrium}\label{appsec.ggeqm}

Denote $\bM = [\mathbf{I}_n-\alpha \bG]^{-1}$. By Lemma~\ref{lemma.unique}, we have
\begin{align*}
    \se =& \bH^{-1} \w^1 = 
    \left[\sigma\mathbf{I}_n + (1-\sigma) (\mathbf{I}_n - \etab^2 \mathbf{1}^\intercal ) \bM \right]^{-1} \w^1\\
    =& (\bM)^{-1} \left[\mathbf{I}_n - \sigma \alpha \bG - (1-\sigma) \etab^2 \mathbf{1}^\intercal  \right]^{-1} \w^1\\
    =& (\bM)^{-1} \left[ (\mathbf{I}_n - \sigma \alpha \bG)^{-1} + \frac{(\mathbf{I}_n - \sigma \alpha \bG)^{-1}(1-\sigma) \etab^2 \mathbf{1}^\intercal (\mathbf{I}_n - \sigma \alpha \bG)^{-1}}{1-(1-\sigma)\mathbf{1}^\intercal (\mathbf{I}_n - \sigma \alpha \bG)^{-1}\etab^2 }  \right] \w^1.
\end{align*}
where the last equality follows from the Sherman–Morrison formula.

Using Proposition~\ref{thm.interior}, one can verify that
\begin{align*}
    & p^2 
    = \frac{(1-\sigma) \mathbf{1}^\intercal (\mathbf{I}_n - \sigma \alpha \bG)^{-1} \w^1}{\bar{\omega}^{2}-(1-\sigma) \mathbf{1}^\intercal (\mathbf{I}_n - \sigma \alpha \bG)^{-1}\w^2},\\
    &\bx^{2*} 
    =  \frac{\bar{\omega}^{2}-(1-\sigma) \mathbf{1}^\intercal (\mathbf{I}_n - \sigma \alpha \bG)^{-1}\w^2}{\mathbf{1}^\intercal (\mathbf{I}_n - \sigma \alpha \bG)^{-1} \w^1} (\mathbf{I}_n - \sigma \alpha \bG)^{-1}\w^1
    + (1-\sigma)(\mathbf{I}_n - \sigma \alpha \bG)^{-1}\w^2,
\end{align*}
which is exactly the same as the characterization in \cite{ghiglinoKeepingNeighborsSocial2010}. Therefore, our framework nests the setting studied in \cite{ghiglinoKeepingNeighborsSocial2010}.

As stated in Example V, multiplexity can reverse the price effects. To see this, the direction of the price effect in Proposition~\ref{prop.priceeffect} is given by
\begin{align*}
    \mathbf{1}^\intercal \bM \bH \tilde{\btau} = &
    \mathbf{1}^\intercal \left[ (\mathbf{I}_n - \sigma \alpha \bG)^{-1} + \frac{(\mathbf{I}_n - \sigma \alpha \bG)^{-1}(1-\sigma) \etab^2 \mathbf{1}^\intercal (\mathbf{I}_n - \sigma \alpha \bG)^{-1}}{1-(1-\sigma)\mathbf{1}^\intercal (\mathbf{I}_n - \sigma \alpha \bG)^{-1}\etab^2 }  \right] \tilde{\btau}\\
    \propto & \mathbf{1}^\intercal(\mathbf{I}_n - \sigma \alpha \bG)^{-1} \tilde{\btau}.
\end{align*}
This implies that $\mathbf{c} \propto [\mathbf{I}_n - \sigma \alpha (\bG)^\intercal]^{-1} \mathbf{1}$, which coincides with the results in \cite{ghiglinoKeepingNeighborsSocial2010}. Moreover, when $\alpha = |\phi|$ is small enough, this has the same ranking as $\widetilde{\mathbf{b}}$, the single layer K-B centrality.\footnote{This is indeed the case in Example V. } However, as shown in Example V, $\mathbf{c}$ can have different ranking when there are multiple networked goods.

\subsection{Efficiency}

In this setting, the network associated with the private good is empty and hence regular. Therefore, the centrality parallel condition holds if and only if $\bG^\intercal$ is also regular, in which case the equilibrium is efficient. In particular, $\bG^\intercal$ is regular if and only if
\begin{equation*}
    \sum_{k \in \mN} 
    \frac{S(n_k)}{n_k(1+\alpha S(n_k))} 
    \mathds{1}_{\{i \in N_k\}}
    =
    \sum_{k \in \mN} 
    \frac{S(n_k)}{n_k(1+\alpha S(n_k))} 
    \mathds{1}_{\{j \in N_k\}},
    \quad \forall i,j \in \mN .
\end{equation*}
The two sides measure the aggregate influence of consumers $i$ and $j$, respectively. In particular, this condition is satisfied when the original connection network is undirected, unweighted and regular, so that $\sum_{k \in \mN} \mathds{1}_{\{i \in N_k\}} = n_i= n_k$ for any $i,k \in \mN$.

\subsection{Lindahl equilibrium}\label{App_Lindahl}

For the Lindahl equilibrium, we consider the case where the allocation of the initial endowments is identical for these two goods, that is, $\etab^1 = \etab^2 \equiv \etab$. Moreover, all consumers only compare with consumer 1, that is $g_{j1} = 1$ for any $j \neq 1$. As calculated in Section~\ref{appsec.ggeqm},
\begin{align*}
    \se =& 
    (\bM)^{-1} \left[ (\mathbf{I}_n - \sigma \alpha \bG)^{-1} + \frac{(\mathbf{I}_n - \sigma \alpha \bG)^{-1}(1-\sigma) \etab \mathbf{1}^\intercal (\mathbf{I}_n - \sigma \alpha \bG)^{-1}}{1-(1-\sigma)\mathbf{1}^\intercal (\mathbf{I}_n - \sigma \alpha \bG)^{-1}\etab }  \right] \etab \bar{\omega}^1
    \\
    \propto & (\mathbf{I}_n-\alpha \bG) (\mathbf{I}_n - \sigma \alpha \bG)^{-1} \etab
    = (\mathbf{I}_n-\alpha \bG) (\mathbf{I}_n + \sigma \alpha \bG) \etab
    \\
    = &[\mathbf{I}_n - \alpha(1-\sigma)\bG] \etab
\end{align*}
where the last two equalities follow from the fact that $\bG^m = \mathbf{O}$ for $m \geq2$. This property of $\bG$ also implies $\widetilde{\mathbf{b}} = (\mathbf{I}_n-\alpha \bG^\intercal)^{-1} \mathbf{1} = (\mathbf{I}_n+\alpha \bG^\intercal) \mathbf{1}$. Then we have
\begin{equation}
    \frac{\widetilde{b}_1 \mu^*_1}{\widetilde{b}_k\mu^*_k} > \frac{\mu^*_1}{\mu^*_k} > \frac{\eta_1}{\eta_k} = \frac{\mu_1}{\mu_k}, \ \forall k \neq 1
\end{equation}
where $\bmu = \etab$ is the weight in Lindahl equilibrium by Proposition~\ref{prop.lindahl}. This implies
\begin{equation}\label{eq.examplelindahl}
    \mu_1 = \frac{\mu_1}{\sum_{j\in \mathcal{N}} \mu_j} 
    < \frac{\mu^*_1}{\sum_{j\in \mathcal{N}} \mu^*_j}
    < \frac{\widetilde{b}_1\mu^*_1}{\sum_{j\in \mathcal{N}} \widetilde{b}_j\mu^*_j}.
\end{equation}

Denote the utility in the Lindahl equilibrium as $u^L_i$. Using Propositions~\ref{thm.interior} and~\ref{prop.lindahl}, the difference in utility difference of consumer 1 is equal to
\begin{equation*}
    u^L_1-u^*_1 = \sigma \ln \left( \frac{\mu_1}{\mu^*_1/(\sum_{j\in \mathcal{N}} \mu^*_j)} \right) + (1-\sigma) \ln \left( \frac{\mu_1}{\widetilde{b}_1\mu^*_1/(\sum_{j\in \mathcal{N}} \widetilde{b}_j\mu^*_j)} \right).
\end{equation*}
By the inequalities in (\ref{eq.examplelindahl}), this is negative. Therefore, consumer 1 is strictly worse off under the Lindahl equilibrium than under the competitive equilibrium.

This illustrates that the consumer who imposes substantial negative externalities to others can be worse off under the Lindahl equilibrium. Intuitively, this is because the introduction of personalized prices raises the cost of consuming these externalities. Consequently, although the Lindahl equilibrium restores efficiency, it does not necessarily Pareto dominate the competitive equilibrium.


\section{General utility function}\label{appsec.general}
For each consumer $i \in \mN$, let $q_i^s(\bX) = x_i^s + \phi^s \sum_{j \in \mN} g_{ij}^s x_j^s$. Preferences are represented by an extended-real-valued utility function
\begin{equation*}
    v_i : \mathbb{R}^{|\mS|} \to \bar{\mathbb{R}},
    \qquad
    \bar{\mathbb{R}} := \mathbb{R} \cup \{-\infty\}.
\end{equation*}
We assume that $u_i$ admits a restriction $\tilde{v}_i : \mathbb{R}_{+}^{|\mS|} \to \bar{\mathbb{R}}$, and extend it to $\mathbb{R}^{|\mS|}$ by
\begin{equation*}
    v_i(\mathbf{q}_i)=
    \begin{cases}
        \tilde{v}_i(\mathbf{q}_i), & \text{if } \mathbf{q}_i \in \mathbb{R}_{+}^{|\mS|},\\[0.5em]
        -\infty, & \text{otherwise}
    \end{cases}
\end{equation*}
This is used to rule out $q^s_i < 0$.

Accordingly, consumer $i$'s utility from an allocation $\bX$ is given by
\begin{equation*}
    u_i(\bX) := v_i\bigl((q_i^s(\bX))_{s \in \mS}\bigr).
\end{equation*}

\begin{assumption}\label{appassump.utility}
    For each $i \in \mN$, the function $\tilde{v}_i$ is continuous on $ \mathbb{R}_{+}^{|\mS|}$, $C^1$ on $\mathbb{R}_{++}^{|\mS|}$, increasing in each argument and concave. Moreover, for every $\mathbf{z} \gg \mathbf{0}$, define
        \begin{equation*}
            K(\mathbf{z}):=
            \left\{
            \mathbf q \in \mathbb R_{++}^{|\mS|}:
            0< q^s \le z^s \text{ for all } s \in \mS
            \right\}.
        \end{equation*}
        Then, for every $s \in \mS$,
        \begin{equation*}
            \lim_{\varepsilon \downarrow 0}
            \inf
            \left\{
            \frac{\partial \tilde{v}_i(\mathbf q)}{\partial q_i^s}
            :
            \mathbf q \in K(\mathbf{z}),\ q_i^s \le \varepsilon
            \right\}
            = +\infty,
        \end{equation*}
        while for every $\varepsilon>0$,
        \[
        \sup
        \left\{
        \frac{\partial \tilde{v}_i(\mathbf q)}{\partial q_i^s}
        :
        \mathbf q \in K(\mathbf{z}),\ q_i^s \ge \varepsilon
        \right\}
        < +\infty,
        \]
        and
        \[
        \inf
        \left\{
        \frac{\partial \tilde{v}_i(\mathbf q)}{\partial q_i^s}
        :
        \mathbf q \in K(\mathbf{z})
        \right\}
        >0.
        \]
\end{assumption}

This assumption is satisfied if $\tilde{v}_i$ is additively separable in $(q^s_i)_{s \in \mS}$ and each additive term satisfies the standard Inada condition. For example, the following CES function with $\rho \in [0,1)$:
\begin{equation*}
    \tilde{v}_i(\mathbf{q}_i) = \sum_{s \in \mS} \alpha^s \frac{(q^s_i)^{\rho}-1}{\rho}.
\end{equation*}

Then we have an analogous result for Proposition~\ref{prop.exist}.
 
\begin{theorem}\label{appthm.exist}
~
\begin{enumerate}
    \renewcommand{\labelenumi}{(\roman{enumi})}
    \item If Assumptions \ref{assump.nonempty} and \ref{appassump.utility} holds, a competitive equilibrium of economy $\mathcal{E}$ exists. 

    \item There exists $\hat{\phi} \in (0,\frac{\ubar{\eta}}{(n+1) \bar{g}}) $, if $|\phi^s| < \hat{\phi}$, the competitive equilibrium exists and must be interior, i.e. $\bX^* \in \mathbb{R}^{|\mS|\times |\mN|}_{++}$.
\end{enumerate}
\end{theorem}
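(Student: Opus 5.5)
Part (i) follows the route of Lemma~\ref{lemma.exist} and Proposition~\ref{prop.exist}, with the Cobb--Douglas form replaced by the general $v_i$. Define the abstract economy $\hat{\mathcal{E}}$ with the same truncated choice sets $\mathcal{X}_i$, the same constant $C$, the same price simplex $\mathcal{P}$, and the same constraint correspondences $A_i$. None of these objects depend on preferences, so the arguments for non-emptiness ($\w_i\in A_i$ under Assumption~\ref{assump.nonempty}), convexity, compactness, closed graph and continuity of $A_i$ carry over verbatim.

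The only new inputs concern $u_i(\bx_i,\bx_{-i})=v_i(\mathbf{q}_i(\bX))$.
\begin{itemize}
\item On the graph of $A_i$, $\mathbf{q}_i\ge \mathbf{0}$. Since $\tilde v_i$ is continuous on $\mathbb{R}^{|\mS|}_+$ and $\mathbf{q}_i$ is affine in $\bX$, $u_i$ is continuous there, as an extended-real-valued function.
\item Since $\tilde v_i$ is concave and $\mathbf{q}_i$ is affine in $\bx_i$, $u_i$ is quasi-concave in $\bx_i$.
\end{itemize}
If $-\infty$ values cause trouble for Debreu's theorem (possible at $q^s_i=0$), I would use a monotone transformation such as $\arctan$ or $e^{u}$. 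This preserves continuity, quasi-concavity and the set of maximizers. Debreu (1952) then gives an equilibrium of $\hat{\mathcal{E}}$.

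Converting this into a competitive equilibrium of $\mathcal{E}$ follows the proof of Proposition~\ref{prop.exist}:
\begin{itemize}
\item Strict monotonicity of $\tilde v_i$ makes the budget bind.
\item If $p^{s*}=0$ for some good, each consumer would demand the cap $C\bar\omega^s$, which contradicts the price player's complementary slackness. So markets clear.
\item Because the allocation is attainable and $C>1$, each $\bx_i^*$ lies in the interior of the box $\mathcal{X}_i$. The local-to-global argument via quasi-concavity then shows optimality over $\mathbb{R}^{|\mS|}_+$.
\end{itemize}
One point needs care: showing that $\mathbf{q}_i^*$ can be chosen with the relevant strict positivity. Here I use $\w_i\in A_i$ with $\mathbf{q}_i(\w_i,\cdot)\gg 0$ (inequality \eqref{appeq.nonempty}), together with the Inada condition, to ensure $u_i^*>-\infty$.

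For part (ii), I mimic the interiority proof, replacing $\alpha^s/q^s_i$ by the partial derivatives $\partial\tilde v_i/\partial q^s$. Suppose $x^s_i=0$ at some equilibrium. Then $\phi^s>0$ is needed; otherwise $q^s_i\le 0$ and utility is $-\infty$, which contradicts affordability of $\w_i$. Consequently $q^s_i\le \phi^s\bar g^s(n-1)\bar\omega^s$, which is small when $\hat\phi$ is small.

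Every $\mathbf{q}$ lies in a fixed box $K(\mathbf{z})$ with $\mathbf{z}=(1+\hat\phi n\bar g)\bar{\boldsymbol{\omega}}$. By the KKT conditions:
\begin{itemize}
\item $\partial_s\tilde v_i/\lambda_i\le p^s$, since $x_i^s=0$.
\item Some consumer $j$ has $x^s_j\ge\bar\omega^s/n$, hence $q^s_j\ge \bar\omega^s/n-\hat\phi(\cdot)$, which is bounded below. Since $x_j^s>0$, $p^s=\partial_s\tilde v_j/\lambda_j$.
\item Consumer $i$ has a good $s'$ with $x^{s'}_i\ge\omega^{s'}_i$, so $q^{s'}_i\ge \omega^{s'}_i-\hat\phi\bar g\bar\omega^{s'}$, bounded below. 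Since $x^{s'}_i>0$, $\lambda_i p^{s'}=\partial_{s'}\tilde v_i$.
\end{itemize}
Combining these, $\partial_s\tilde v_i(\mathbf q_i)\le (\lambda_i/\lambda_j)\,\partial_s\tilde v_j(\mathbf q_j)$. The right side is uniformly bounded by the sup and inf bounds in Assumption~\ref{appassump.utility}, because $q_j^s$ and $q_i^{s'}$ stay away from zero.

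The hardest step is bounding $\lambda_i/\lambda_j$. I plan to use the first-order condition for good $s'$ for both consumers, i.e. $\lambda_j p^{s'}\ge\partial_{s'}\tilde v_j\ge\inf>0$ and $\lambda_i p^{s'}=\partial_{s'}\tilde v_i\le\sup<\infty$, which gives $\lambda_i/\lambda_j\le \sup/\inf$. Meanwhile the left side $\partial_s\tilde v_i$ is at least $\inf\{\partial_s\tilde v_i: q^s\le\varepsilon\}$ with $\varepsilon=\hat\phi\bar g n\bar\omega^s$, and this tends to $+\infty$ as $\hat\phi\downarrow0$ by the Inada part of the assumption. All the bounds are uniform over $K(\mathbf{z})$ for $\hat\phi$ below a fixed threshold, so choosing $\hat\phi$ small enough yields a contradiction. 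Since $n$, $\mS$ are finite, a single $\hat\phi$, also taken below $\ubar\eta/((n+1)\bar g)$, works for all $i$ and $s$.
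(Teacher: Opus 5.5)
Your proposal is correct and follows essentially the same route as the paper: part (i) reuses the Debreu abstract-economy argument of Lemma~\ref{lemma.exist} and Proposition~\ref{prop.exist}, and part (ii) reaches a contradiction by bounding $\lambda_i/\lambda_j$ from below through good $s$ (where the Inada condition makes $\partial_s\tilde v_i$ blow up) and from above through a good $s'$ with $x_i^{s'}\ge\omega_i^{s'}$, using the uniform bounds of Assumption~\ref{appassump.utility}. One small correction: for general $\tilde v_i$, $x_i^s=0$ does not force $\phi^s>0$ (e.g.\ CES with $\rho\in(0,1)$ is finite at $q_i^s=0$), but you never need this, since $0\le q_i^s\le|\phi^s|\bar g\,\bar\omega^s$ holds for either sign of $\phi^s$ and the Inada step goes through unchanged.
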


\begin{proof}
The existence still holds as all the arguments in the proof of Lemma~\ref{lemma.exist} and Proposition~\ref{prop.exist} follow. For interiority, denote the partial derivative of the $s$-th argument for $v_i$ as $v^s_i$, then the following FOC holds due to concavity:
\begin{equation}
    v^s_i(\mathbf{q}_i) -\lambda_ip^s  \leq 0 \ (\text{with equality if} \ x^s_i>0).
    \label{appeq.concave}
\end{equation}

Suppose there is a non-interior equilibrium $(\bX, \bp)$ with $x^s_i = 0$, then we must have $\phi^s \ge 0$. For good $s$, there exists a consumer $j$ such that $x^{s}_j \geq \bar{w}^s/n$ and thus $q^s_j \geq \bar{w}^s/n$. Furthermore, the equality in equation \eqref{appeq.concave} holds and we have
\begin{equation*}
	p^s = \frac{v^s_j(\mathbf{q}_j)}{\lambda_j}.
\end{equation*}

For consumer $i$, $\exists s' \in \mS$ such that $x^{s'}_i \geq w^{s'}_i$. Then the equality in equation \eqref{appeq.concave} holds for $i, s'$ and 
\begin{equation}\label{appeq.inequal}
    \begin{aligned}
        &\frac{v^{s}_i(\hat{\phi} \bar{g}\bar{\omega}^{s},\mathbf{q}_i^{-s})}{v^{s}_j(\bar{\omega}^{s}/n,\mathbf{q}_j^{-s})}
        \leq
        \frac{v^{s}_i(\mathbf{q}_i)}{v^{s}_j(\mathbf{q}_j)} 
        \leq 
        \frac{\lambda_i}{\lambda_j} 
        \leq 
        \frac{v^{s'}_i(\mathbf{q}_i)}{v^{s'}_j(\mathbf{q}_j)}\\
        \leq &
        \frac{v^{s'}_i(w^{s'}_i-\hat{\phi} \bar{g}\bar{\omega}^{s'},\mathbf{q}_i^{-s'})}{v^{s'}_j(\bar{\omega}^{s'},\mathbf{q}_j^{-s'})}
        \leq 
        \frac{v^{s'}_i(nw^{s'}_i/(n+1),\mathbf{q}_i^{-s'})}{v^{s'}_j(\bar{\omega}^{s'},\mathbf{q}_j^{-s'})},
    \end{aligned}
\end{equation}
where the last inequality comes from
\begin{equation*}
    w^{s'}_i-\hat{\phi} \bar{g}\bar{\omega}^{s'} 
    = w^{s'}_i \left( 1- \frac{\hat{\phi} \bar{g}}{\eta^{s'}_i} \right)
    \geq
    w^{s'}_i \left( 1- \frac{\hat{\phi} \bar{g}}{\ubar{\eta}} \right)
    >
    \frac{n}{n+1} w^{s'}_i.
\end{equation*}

However, for $\hat{\phi}$ small enough, $v^{s}_i(\hat{\phi} \bar{g}\bar{\omega}^{s},\mathbf{q}_i^{-s})$ diverges to \(+\infty\) by Assumption~\ref{appassump.utility} (ii), while all other marginal utilities appearing in~\eqref{appeq.inequal} are uniformly bounded above and away from zero. Therefore, the left-hand side of~\eqref{appeq.inequal} tends to $+\infty$, while the right-hand side remains finite, a contradiction. Therefore, the equilibrium must be interior.
\end{proof}

For efficiency, the analogous First and Second Welfare Theorem still holds.
\begin{theorem}\label{appthm.efficiency}
    If Assumption \ref{appassump.utility} holds, then the First and Second Welfare Theorems hold for interior allocations if and only if $\widetilde{\mathbf{b}}^s \parallelsum \widetilde{\mathbf{b}}^t$ for any $s,t \in \mS$.
\end{theorem}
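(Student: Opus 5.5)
The plan mirrors the proof of Theorem~\ref{thm.efficiency}, replacing the explicit Cobb--Douglas formulas with first-order conditions expressed through the marginal utilities $v^s_i:=\partial\tilde v_i/\partial q^s_i$. Under Assumption~\ref{appassump.utility} each $u_i$ is concave in $\bX$, because it is a concave increasing function of the affine maps $q^s_i(\bX)$. The feasible set $\mathcal{A}$ is convex. Hence the argument of Lemma~\ref{lemma.efficienttoplanner} applies verbatim. An interior allocation is Pareto efficient if and only if it maximizes $\sum_i\theta_iu_i$ over $\mathcal{A}$ for some $\btheta\in\mathbb{R}^n_{++}$. By concavity and the affine constraints, this holds if and only if the KKT conditions hold. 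At an interior point these reduce to
\[
\theta_i v^s_i(\mathbf{q}_i)+\phi^s\sum_{j\neq i}g^s_{ji}\theta_j v^s_j(\mathbf{q}_j)=\beta^s ,
\]
that is, $\theta_i v^s_i(\mathbf{q}_i)=\beta^s\widetilde b^s_i$, exactly as in \eqref{eq.plannerfoc}. Positivity of $\theta_i$ follows from $v^s_i>0$ (the last inequality in Assumption~\ref{appassump.utility}), $\widetilde b^s_i>0$ and $\beta^s>0$. Consumer optimality in an interior competitive equilibrium is, again by concavity of $\tilde v_i$ in $\bx_i$, equivalent to $v^s_i(\mathbf{q}_i)=\lambda_ip^s$ with $\lambda_i>0$ together with the binding budget constraint.

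\emph{Sufficiency.} Suppose $\widetilde{\mathbf{b}}^s=c^s\mathbf{b}$ for a common positive vector $\mathbf{b}$ and scalars $c^s>0$. For the First Welfare Theorem, take an interior equilibrium and set $\theta_i:=b_i/\lambda_i$ and $\beta^s:=p^s/c^s$. Then
\[
\theta_iv^s_i=b_ip^s=\beta^s c^s b_i=\beta^s\widetilde b^s_i ,
\]
so the planner's KKT conditions hold and the allocation is efficient. For the Second Welfare Theorem, take an interior efficient $\hat\bX$ with weights $\btheta$ and multipliers $\beta^s$, and set endowments $\hat{\w}_i=\hat{\bx}_i$. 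Choose $p^s:=\beta^sc^s$ and $\lambda_i:=b_i/\theta_i$. Then $v^s_i(\hat{\mathbf q}_i)=\lambda_ip^s$, the budget constraint holds with equality because consumption equals endowment, and concavity makes the first-order condition sufficient for each consumer. Markets clear because $\hat\bX\in\mathcal A$.

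\emph{Necessity.} Suppose some interior allocation is both an equilibrium and efficient. Combining the two sets of conditions gives
\[
\frac{\widetilde b^s_i}{\widetilde b^t_i}=\frac{v^s_i/\beta^s}{v^t_i/\beta^t}=\frac{p^s\beta^t}{p^t\beta^s}
\]
for every consumer $i$, which is the parallel condition. So if the condition fails, no interior equilibrium is efficient and the First Welfare Theorem fails. The Second Welfare Theorem then fails as well: no interior efficient allocation can be supported, since supporting it would make it an efficient interior equilibrium.

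The main obstacle is logical bookkeeping rather than computation. First, the necessity direction needs at least one interior efficient allocation or equilibrium to exist, otherwise the welfare theorems hold vacuously. Interior equilibria are supplied by Theorem~\ref{appthm.exist}(ii) for small $|\phi^s|$. Interior efficient allocations come from solving the planner's problem with any $\btheta\gg\mathbf0$, where the Inada-type condition in Assumption~\ref{appassump.utility} rules out $q^s_i\to0$. Second, the efficiency characterization must be justified without the closed form \eqref{eq.efficientx}. I would verify carefully that the supporting-hyperplane step of Lemma~\ref{lemma.efficienttoplanner} uses only concavity and interiority, and that the extended value $-\infty$ outside $\mathbb{R}^{|\mS|}_+$ does not break concavity. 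The latter holds because the set of $\bX$ with all $q^s_i\ge0$ is convex.
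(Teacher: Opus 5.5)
Your proof is correct. The sufficiency half is the paper's argument up to normalization. The paper takes $\theta_i=\mu_i\tilde b_i/\sum_k\mu_k\tilde b_k$ and $\beta^s=p^s/(d^s\sum_k\mu_k\tilde b_k)$ with $\mu_i=1/\lambda_i$; this is your choice $\theta_i=b_i/\lambda_i$, $\beta^s=p^s/c^s$ divided by a common constant. It also gets the Second Welfare Theorem by the same inversion, with endowments placed at $\hat{\bX}$.

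The necessity half takes a different route. In its proof of this statement the paper does not compare first-order conditions. It reruns the Farkas construction of Corollary~\ref{cor.notparallel}, with $\mathbf{V}^s=\mathrm{diag}(v^s_1,\ldots,v^s_n)$ replacing the Cobb--Douglas terms. This shows that at any interior equilibrium where \eqref{eq.parallel} fails, some reallocation of two goods $s,t$ raises every consumer's utility at first order, which gives a strict Pareto improvement. You instead use the argument of Theorem~\ref{thm.efficiency}(ii): an efficient interior equilibrium would satisfy both $\theta_iv^s_i=\beta^s\tilde b^s_i$ and $v^s_i=\lambda_ip^s$, forcing $\tilde b^s_i/\tilde b^t_i=p^s\beta^t/(p^t\beta^s)$ for every $i$.

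Your route is shorter and needs no alternative theorem. However, it relies on the ``efficient $\Rightarrow$ KKT with $\btheta\gg\mathbf 0$'' half of the planner characterization: the supporting hyperplane plus KKT necessity at a point where the $v_i$ are differentiable. The paper's route bypasses that half and gives a stronger, constructive conclusion, namely an explicit direction along which all consumers strictly gain.

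One correction to your bookkeeping remark. The Inada condition keeps $q^s_i$ away from zero, not $x^s_i$. So when $\phi^s>0$, the planner's solution for an arbitrary $\btheta\gg\mathbf 0$ need not be interior: in Example~I the contract curve has $y_1=0$ whenever $\theta\le\phi$, for every $\phi>0$. To exhibit an interior efficient allocation, which you need so that the Second Welfare Theorem does not hold vacuously, you must choose the weights suitably, e.g.\ equal weights with $|\phi^s|$ small. The paper's proof does not address this existence issue at all, so raising it is worthwhile.
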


\begin{proof}
For the if part, note that
\begin{equation*}
    \frac{\partial u_j}{\partial x^s_i} = v^s_j \frac{\partial q^s_j}{\partial x^s_i}
    = v^s_j \phi^s g^s_{ji}.
\end{equation*}
By strong duality, an interior allocation is efficient if and only if it satisfies the following KKT condition:
\begin{equation}\label{appeq.kkt}
    \beta^s = \theta_i v^s_i + \phi^s \sum_{j\neq i} g^s_{ji} \theta_j v^s_j
    \iff
    \beta^s \widetilde{\mathbf{b}}^s = (\theta_1 v^s_1,\ldots,\theta_n v^s_n)^\intercal.
\end{equation}
for some Lagrangian multiplier $\beta^s$ associated with the attainable condition of good $s$. Since $\widetilde{\mathbf{b}}^s \propto \widetilde{\mathbf{b}}^t$ for any $s,t \in \mS$, we can denote $\widetilde{\mathbf{b}}^s = d^s\widetilde{\mathbf{b}}$.

The equilibrium FOC is equivalent to $\mu_i v^s_i = p^s$. Let 
\begin{equation}\label{appeq.welfareconstruct}
    \beta^s = \frac{p^s}{d^s\sum_{k \in \mN} \mu_k \tilde{b}_k}, \qquad 
    \theta_i = \frac{\mu_i \tilde{b}_i}{\sum_{k \in \mN} \mu_k \tilde{b}_k}.
\end{equation}
Then we have
\begin{equation*}
    \theta_i v^s_i = \frac{\tilde{b}_i\mu_i v^s_i}{\sum_{k \in \mN} \mu_k \tilde{b}_k} = \frac{\tilde{b}_i p^s}{\sum_{k \in \mN} \mu_k \tilde{b}_k} = \beta^s d^s\tilde{b}_i = \beta^s \tilde{b}^s_i.
\end{equation*}
Hence, equation~\eqref{appeq.kkt} holds, the interior equilibrium is indeed Pareto efficient and thus the First Welfare Theorem holds. 

Conversely, for an interior efficient allocation satisfying \eqref{appeq.kkt}, we set the initial endowment at this allocation and construct supporting prices $p^s$ and the inverse shadow value $\mu_i$ as in \eqref{appeq.welfareconstruct}. The same calculation shows that such construction satisfies the consumer's FOC. By the concavity of the utility function, it indeed solves the consumer's utility maximization problem. Moreover, the allocation is naturally attainalbe. Therefore, the Second Welfare Theorem holds.

For the only if part, suppose the centrality parallel condition fails, we remain the construction in the proof of Corollary~\ref{cor.notparallel}. Here, the first order effect
\begin{equation*}
    \left. \frac{\partial u_i(\bX)}{\partial l} \right|_{l=0} 
    =
    v^s_i \hat{\tau}^s_i + v^t_i \hat{\tau}^t_i
    \geq 1,
\end{equation*}
is equivalent to the following systems
\begin{alignat*}{2}
    &\text{Original system}
    \qquad\qquad
    &&\text{Dual system}
    \\[0.6em]
    &\mathbf{V}^{s} \hat{\btau}^s + \mathbf{V}^{t} \hat{\btau}^t \geq \mathbf{1},
    \qquad\qquad
    &&
    \mathbf{V}^{s} \mathbf{y}
    = m_1(\bM^s)^\intercal \mathbf{1}
    = m_1 \widetilde{\mathbf{b}}^s,
    \\
    &\mathbf{1}^\intercal \bM^t \hat{\btau}^t = 0,
    &&
    \mathbf{V}^{t} \mathbf{y}
    = m_2(\bM^t)^\intercal \mathbf{1}
    = m_2 \widetilde{\mathbf{b}}^t,
    \\
    &\mathbf{1}^\intercal \bM^s \hat{\btau}^s = 0,
    &&
    \mathbf{y}^\intercal \mathbf{1} > 0.
\end{alignat*}
where $\mathbf{V}^{s} = diag(v^{s}_1,\ldots, v^{s}_n)$ and $\mathbf{V}^{t} = diag(v^{t}_1,\ldots, v^{t}_n)$. If the dual system holds,
\begin{equation*}
    \frac{m_1 b^s_i}{m_2 b^t_i} = \frac{v^{s}_i}{v^{t}_i} = 
    \frac{p^{s}}{p^{t}}
\end{equation*}
where the last equality follows from equations (\ref{eq.equip}) and (\ref{eq.equix}). Note that this equation holds for all $i,j \in \mN$, which implies that $\widetilde{\mathbf{b}}^s \propto \widetilde{\mathbf{b}}^t$, a contradiction. Hence, the dual system cannot hold. By Farkas’ lemma, the original system must therefore hold, and Pareto improving reallocations exist.

The existence of Pareto improving reallocations implies that any interior equilibrium allocation cannot be efficient. Therefore, the First and Second Welfare Theorems for interior allocations fail.
\end{proof}


\section{Comparative statics of other parameters}\label{appsec.cs}
In this section, we derive comparative statics with respect to preference weights $\alpha^s$, externality intensities $\phi^s$, and networks $\bG^s$.

Since preference weights are normalized, we consider a perturbation direction $\btau$ satisfying $\mathbf{1}^\intercal \btau = 0$, and a scaler $l$. The perturbed preference vector is then given by $\boldsymbol{\alpha}(l) = \boldsymbol{\alpha} + l\btau$. Similarly, for the network, we consider a local perturbation $\boldsymbol{\Gamma}^s$ for each layer~$s$ and a common scaler $l$, so that the perturbed network is $\bG^s(l) = \bG^s+l\boldsymbol{\Gamma}^s$. In particular, the change in the externality intensity $\phi^s$ can be viewed as a special case of perturbing layer $r$ with direction $\boldsymbol{\Gamma} = \bG^s/\phi^s$.

We continue to use the notation
\begin{equation*}
    \left. \dot{\mathbf{z}}  := \frac{\partial \mathbf{z}(l)}{\partial l} \right|_{l=0}.
\end{equation*}
Whether $\dot{\mathbf{z}}$ refers to the effect of a change in preference weights or the networks will be clear from the context.

\begin{proposition}\label{appprop.cs}
~
\begin{enumerate}
    \item[(i)] For a perturbation in preference weights:
    \begin{align*}
        &\frac{\dot{p}^{s*}}{p^{s*}}-\frac{\dot{p}^{t*}}{p^{t*}}
        =
        \frac{b^s(\dot{\bmu}^*)}{b^s(\bmu^{*0})}
        -
        \frac{b^t(\dot{\bmu}^*)}{b^t(\bmu^{*0})}
        +
        \frac{\tau^s}{\alpha^s}
        -
        \frac{\tau^t}{\alpha^t},\\
        &\dot{u}_i^*
        =
        \frac{\dot{\mu}^*_i}{\mu^{*0}_i}
        -
        \sum_{s \in \mS}\alpha^s
        \frac{b^s(\dot{\bmu}^*)}{b^s(\bmu^{*0})}
        +
        \sum_{s \in \mS}
        \tau^s\ln\left( \frac{\bar{\omega}^s \mu^{*0}_i}{b^s(\bmu^{*0})} \right),\\
        &\dot{(\ln x^{s*}_i)} = \frac{b^s_i(\dot{\bmu}^*)}{b^s_i(\bmu^{*0})} 
        - \frac{b^s(\dot{\bmu}^*)}{b^s(\bmu^{*0})},
    \end{align*}
    where $\dot{\bmu}^*$ solves
    \begin{equation*}
        \overline{\bM} \dot{\bmu}^* 
        = -\sum_{s \in \mS} \tau^s (\mathbf{I}_n - \etab^s \mathbf{1}^\intercal)\bM^s \bmu^{*0}.
    \end{equation*}

    \item[(ii)] For perturbations in networks:
    \begin{align*}
        &\frac{\dot{p}^{s*}}{p^{s*}}-\frac{\dot{p}^{t*}}{p^{t*}}
        =
        \frac{b^s(\dot{\bmu}^*)+\mathbf{1}^\intercal \dot{\bM}^s \bmu^{*0}}{b^s(\bmu^{*0})}
        -
        \frac{b^t(\dot{\bmu}^*)+\mathbf{1}^\intercal \dot{\bM}^t \bmu^{*0}}{b^t(\bmu^{*0})},\\
        &\dot{u}_i^*
        =
        \frac{\dot{\mu}^*_i}{\mu^{*0}_i}
        -
        \sum_{s \in \mS}\alpha^s
        \frac{b^s(\dot{\bmu}^*)+\mathbf{1}^\intercal \dot{\bM}^s \bmu^{*0}}{b^s(\bmu^{*0})},\\
        &\dot{(\ln x^{s*}_i)} 
        = 
        \frac{b^s_i(\dot{\bmu}^*)+(\dot{\bM}^s \bmu^{*0})_i}{b^s_i(\bmu^{*0})} 
        - \frac{b^s(\dot{\bmu}^*)+\mathbf{1}^\intercal \dot{\bM}^s \bmu^{*0}}{b^s(\bmu^{*0})},
    \end{align*}
    where $\dot{\bM}^s = -\phi^s\bM^s \boldsymbol{\Gamma}^s \bM^s$ and $\dot{\bmu}^*$ solves
    \begin{equation*}
        \overline{\bM} \dot{\bmu}^* 
        = \sum_{s \in \mS} \alpha^s\phi^s (\mathbf{I}_n - \etab^s \mathbf{1}^\intercal) \bM^s \boldsymbol{\Gamma}^s \bM^s\bmu^{*0}.
    \end{equation*}
\end{enumerate}
\end{proposition}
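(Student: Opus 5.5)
The plan mirrors the proof of Proposition~\ref{prop.comparative}: express every equilibrium object through Theorem~\ref{thm.interior} as a function of $\se$ and the primitives, differentiate $\overline{\bM}(l)\se(l)=\mathbf{0}$ along a differentiable selection through $\bmu^{*0}$, and then differentiate the closed forms \eqref{eq.equip}, \eqref{eq.equix}, \eqref{eq.equiu}. Assumption~\ref{assump.rank} guarantees the perturbed kernel stays one-dimensional for small $l$. This gives a smooth selection via the implicit function theorem applied to the system $\bH(l)\bmu=\w^1$ of Lemma~\ref{lemma.unique}, using invertibility of $\bH$.

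\textbf{Part (i).} Only $\boldsymbol{\alpha}$ moves, so $\bM^s$ and $\etab^s$ are fixed. Hence $\dot{\overline{\bM}}=\sum_s\tau^s(\mathbf{I}_n-\etab^s\mathbf{1}^\intercal)\bM^s$, and differentiating the kernel equation gives the stated linear system for $\dot{\bmu}^*$. Its solvability follows from $\mathbf{1}^\intercal\overline{\bM}=\mathbf{0}^\intercal$ and $\mathbf{1}^\intercal(\mathbf{I}_n-\etab^s\mathbf{1}^\intercal)=\mathbf{0}^\intercal$: the right-hand side lies in $\{\mathbf{z}:\mathbf{1}^\intercal\mathbf{z}=0\}$, which equals the range of $\overline{\bM}$ under rank $n-1$. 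Then I compute the pieces. First, $\ln p^{s*}=\ln\alpha^s-\ln\bar\omega^s+\ln b^s(\se)$ gives the price formula, with the new term $\tau^s/\alpha^s$. Second, $\ln x_i^{s*}=\ln\bar\omega^s+\ln b^s_i(\se)-\ln b^s(\se)$ gives the consumption formula, with no $\alpha$ term. Third, applying the product rule to $u_i^*=\sum_s\alpha^s\ln(\bar\omega^s\mu_i^*/b^s(\se))$ gives $\sum_s\alpha^s(\dot\mu_i/\mu_i-b^s(\dot\bmu)/b^s(\bmu))$ plus $\sum_s\tau^s\ln(\cdot)$. The first term collapses to $\dot\mu_i/\mu_i$ because $\sum_s\alpha^s=1$.

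\textbf{Part (ii).} Perturbing $\bG^s$ changes $\bM^s=[\mathbf{I}_n+\phi^s\bG^s]^{-1}$. The identity $\dot{(\mathbf{A}^{-1})}=-\mathbf{A}^{-1}\dot{\mathbf{A}}\mathbf{A}^{-1}$ gives $\dot\bM^s=-\phi^s\bM^s\boldsymbol\Gamma^s\bM^s$. Since $\etab^s$ and $\alpha^s$ are fixed, $\dot{\overline{\bM}}=\sum_s\alpha^s(\mathbf{I}_n-\etab^s\mathbf{1}^\intercal)\dot\bM^s$. The equation $\overline{\bM}\dot\bmu=-\dot{\overline{\bM}}\bmu^{*0}$ is then exactly the stated system, and it is solvable for the same reason as in (i). Differentiating $b^s(\se)=\mathbf{1}^\intercal\bM^s\se$ now produces two terms, $b^s(\dot\bmu)+\mathbf{1}^\intercal\dot\bM^s\bmu^{*0}$, and similarly for $b^s_i$. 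Substituting these into the log-derivatives of price, consumption and utility yields the three displayed formulas. The aggregate and $\alpha$-terms are absent here because $\bar\omega^s$ and $\alpha^s$ do not move. The $\phi^s$ case is handled by taking $\boldsymbol\Gamma^s=\bG^s/\phi^s$, so that $l\boldsymbol\Gamma^s$ reproduces a change in $\phi^s$.

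\textbf{Main obstacle.} The delicate step is independence of the selection: $\dot\bmu^*$ is determined only up to adding $z\bmu^{*0}$, and $\bmu^{*0}$ itself only up to scale. I would reuse the argument from Proposition~\ref{prop.comparative}. Replacing $(\bmu^{*0},\dot\bmu^*)$ by $(y\bmu^{*0},\,y\dot\bmu^*+z\bmu^{*0})$ leaves the left-hand side of the $\dot\bmu$ system consistent, because $\overline{\bM}\bmu^{*0}=\mathbf{0}$ and the right-hand side scales by $y$. It shifts every ratio $b^s(\dot\bmu)/b^s(\bmu)$, $b^s_i(\dot\bmu)/b^s_i(\bmu)$ and $\dot\mu_i/\mu_i$ by the same constant $z/y$. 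The extra terms involving $\dot\bM^s$ are homogeneous of degree zero and do not shift. The constant $z/y$ therefore cancels in each formula:
\begin{itemize}
\item in the price differences, it appears once with each sign;
\item in the consumption formula, it appears as the difference of the individual and aggregate ratios;
\item in the utility formula, it appears in $\dot\mu_i/\mu_i$ and in the $\alpha$-weighted sum, whose weights sum to one.
\end{itemize}
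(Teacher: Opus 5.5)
Your proposal is correct and takes the same route as the paper. You differentiate the kernel equation $\overline{\bM}(l)\bmu(l)=\mathbf{0}$ to obtain the linear system for $\dot{\bmu}^*$, then differentiate the closed forms \eqref{eq.equix}--\eqref{eq.equiu}, using $\dot{\bM}^s=-\phi^s\bM^s\boldsymbol{\Gamma}^s\bM^s$ in part (ii). Two additions supply rigor that the paper's short proof leaves implicit: the smooth selection through invertibility of $\bH(l)$, and the selection-independence check carried over from the proof of Proposition~\ref{prop.comparative}.
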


\begin{proof}
(i) For a perturbation in preference weights, directly differentiating equations~\eqref{eq.equix}-\eqref{eq.equiu} with respect to $l$ gives the expressions in the proposition. To pin down $\dot{\bmu}^*$, differentiating $\overline{\bM}(l) \bmu(l) = 0$ gives
\begin{equation*}
    \overline{\bM} \dot{\bmu}^* = - \dot{\overline{\bM}} \bmu^{*0}
    = -\sum_{s \in \mS} \tau^s (\mathbf{I}_n - \etab^s \mathbf{1}^\intercal)\bM^s \bmu^{*0}.
\end{equation*}

(ii) For perturbations in networks, we have
\begin{align*}
    &\dot{\bM}^s = -\phi^s\bM^s \boldsymbol{\Gamma}^s \bM^s,\\
    &\dot{\{b^s(\se)\}} = b^s(\dot{\bmu}^*) + \mathbf{1}^\intercal \dot{\bM}^s \bmu^{*0},\\
    &\dot{\{\mathbf{b}^s(\se)\}} = \mathbf{b}^s(\dot{\bmu}^*) + \dot{\bM}^s \bmu^{*0}.
\end{align*}
Directly differentiating equations~\eqref{eq.equix}-\eqref{eq.equiu} with respect to $l$ and using these equations gives the expressions in the proposition. $\dot{\bmu}^*$ is determined similarly by
\begin{equation*}
    \overline{\bM} \dot{\bmu}^* 
    = - \dot{\overline{\bM}} \bmu^{*0}
    =  - \sum_{s \in \mS} \alpha^s (\mathbf{I}_n - \etab^s \mathbf{1}^\intercal)\dot{\bM}^s \bmu^{*0}
    = \sum_{s \in \mS} \alpha^s\phi^s (\mathbf{I}_n - \etab^s \mathbf{1}^\intercal) \bM^s \boldsymbol{\Gamma}^s \bM^s\bmu^{*0}.
\end{equation*} 
This proves Proposition \ref{appprop.cs}.
\end{proof}

\end{document}